\definecolor{goodred}{HTML}{CC6677}
\definecolor{goodblue}{HTML}{332288}
\definecolor{goodyellow}{HTML}{DDCC77}
\definecolor{goodgreen}{HTML}{117733}
\definecolor{goodcyan}{HTML}{88CCEE}
\definecolor{goodwine}{HTML}{882255}
\definecolor{goodteal}{HTML}{44AA99}
\definecolor{goodolive}{HTML}{999933}
\definecolor{goodpurple}{HTML}{AA4499}
\tikzset{faded/.style={gray,very thin}}
\tikzset{vertex/.style={draw,circle,minimum size=5pt,inner sep=0pt}}
\tikzset{novertex/.style={circle,minimum size=5pt,inner sep=0pt}}
\tikzset{blackvertex/.style={draw,circle,minimum size=5pt,inner sep=0pt, fill=black}}
\tikzset{redvertex/.style={draw,circle,minimum size=5pt,inner sep=0pt, fill=goodred}}
\tikzset{redvertexfaded/.style={draw,circle,faded,minimum size=5pt,inner sep=0pt, fill=goodred!50}}
\tikzset{greenvertex/.style={draw,circle,minimum size=5pt,inner sep=0pt, fill=green}}
\tikzset{greenvertexfaded/.style={draw,circle,faded,minimum size=5pt,inner sep=0pt, fill=green!50}}
\tikzset{bluevertex/.style={draw,circle,minimum size=5pt,inner sep=0pt, fill=goodcyan}}
\tikzset{bluevertexfaded/.style={draw,circle,faded,minimum size=5pt,inner sep=0pt, fill=goodblue!50}}
\tikzset{yellowvertex/.style={draw,circle,minimum size=5pt,inner sep=0pt, fill=yellow}}
\tikzset{yellowvertexfaded/.style={draw,circle,faded,minimum size=5pt,inner sep=0pt,
fill=yellow!50}}
\tikzset{arrow/.style={-{Latex[scale=1]},shorten >= 0pt}}
\tikzset{edge/.style={-{Latex[scale=1]},shorten >= 0pt}}
\tikzset{snake it/.style={decorate, decoration=snake}}
\tikzset{big snake/.style={decorate, decoration={snake,segment length=7mm, amplitude=3mm}}}
\tikzset{ultra snake/.style={decorate, decoration={snake,segment length=7mm, amplitude=6mm}}}
\newcommand{\lipItem}[1]{\textcolor{lipicsGray}{\sffamily\bfseries\upshape\mathversion{bold}#1}}
\title{Finding subdigraphs in digraphs of bounded directed treewidth} 
\titlerunning{Finding subdigraphs in digraphs of bounded directed treewidth}
\author{Raul Lopes}{LIRMM, Université de Montpellier, CNRS, Montpellier,
France\\
Hamburg University of Technology, Institute for Algorithms and Complexity, Hamburg, Germany}{rtlopes@protonmail.com}{https://orcid.org/0000-0002-7487-3475}{French project ELIT (ANR-20-CE48-0008-01) and HIDSS-0002 DASHH (Data Science in Hamburg - Helmholtz Graduate School for the Structure of Matter).}
\author{Ignasi Sau}{LIRMM, Université de Montpellier, CNRS, Montpellier,
France}{ignasi.sau@lirmm.fr}{https://orcid.org/0000-0002-8981-9287}{French project ELIT (ANR-20-CE48-0008-01).}
\authorrunning{ }
\keywords{directed graphs,  directed treewidth,
 parameterized complexity, subdigraph isomorphism, dynamic programming, hardness result.} 
\DeclareMathOperator{\order}{\text{\sf ord}\xspace}
\DeclareMathOperator{\dtw}{\text{\sf dtw}\xspace}
\DeclareMathOperator{\breakability}{\text{\sf b}\xspace}
\DeclareMathOperator{\Ocal}{\mathcal{O}\xspace}
\newcommand{\width}{{\sf width}\xspace}
\newcommand{\pname}{Rooted Stars-Paths Subdigraph Isomorphism\xspace}
\newcommand{\pshort}{RSPSI\xspace}
\newcommand{\pddp}{Subset Avoiding Directed Disjoint Paths\xspace}
\newcommand{\pddpshort}{SADDP\xspace}
\newcommand{\slack}{{\sf slack}\xspace}
\newcommand{\xarray}{{\sf\bf x}\xspace}
\newcommand{\sarray}{{\sf\bf s}\xspace}
\newcommand{\parray}{{\sf\bf p}\xspace}
\newtheorem{condition}[theorem]{Condition}
\Crefname{condition}{Condition}{Conditions}
\renewcommand{\mid}{\bigm|}
\newcolumntype{\expand}{}
\long\@namedef{NC@rewrite@\string\expand}{\expandafter\NC@find}
  \def\boxproblem@arg{#1}%
  \def\boxproblem@framed{framed}%
  \def\boxproblem@lined{lined}%
  \def\boxproblem@doublelined{doublelined}%
  \def\boxproblem@hline{}%
  \def\boxproblem@hline{\hline\hline}%
  \def\boxproblem@hline{\hline}%
  \def\boxproblem@tablelayout{|>{\bfseries}lX|c}%
  \def\boxproblem@title{\multicolumn{2}{|l|}{%
      \raisebox{-\fboxsep}{\textsc{\normalsize #2}}%
  }}%
  \def\boxproblem@tablelayout{>{\bfseries}lXc}%
  \def\boxproblem@title{\multicolumn{2}{l}{%
      \raisebox{-\fboxsep}{\textsc{\normalsize #2}}%
  }}%
\definecolor{mid-green}{rgb}{0.15,0.65,0.15}
\definecolor{dark-green}{rgb}{0.15,0.25,0.15}
\definecolor{dark-red}{rgb}{0.7,0.15,0.15}
\definecolor{dark-blue}{rgb}{0.15,0.15,0.9}
\definecolor{medium-blue}{rgb}{0,0,0.5}
\definecolor{gray}{rgb}{0.5,0.5,0.5}
\definecolor{color-Ig}{rgb}{0.15,0.7,0.15}
\definecolor{darkmagenta}{rgb}{0.30, 0.0, 0.30}
\renewcommand{\NP}{{\sf NP}\xspace}
\renewcommand{\P}{{\sf P}\xspace}
\renewcommand{\FPT}{{\sf FPT}\xspace}
\renewcommand{\XP}{{\sf XP}\xspace}
\begin{document}

\maketitle

\begin{abstract}
It is well known that directed treewidth does not enjoy the nice algorithmic properties of its undirected counterpart. There exist, however, some positive results that, essentially, present \XP algorithms for the problem of finding, in a given digraph $D$, a subdigraph isomorphic to a digraph $H$ that can be formed by the union of $k$ directed paths (with some extra properties), parameterized by $k$ and the directed treewidth of $D$. Our motivation is to tackle the following question: Are there subdigraphs, other than the directed paths, that can be found efficiently in digraphs of bounded directed treewidth?  In a nutshell, the main message of this article is that, other than the directed paths, the only digraphs that seem to behave well with respect to directed treewidth are the stars. For this, we present a number of positive and negative results, generalizing several results in the literature, as well as some directions for further research. 
\end{abstract}



\section{Introduction}
\label{sec:intro}

Treewidth of undirected graphs~\cite{RobertsonS90a} is probably the most successful graph parameter from an algorithmic viewpoint, as capitalized by the celebrated Courcelle's theorem~\cite{COURCELLE199012}, stating that all problems that are expressible in (the very powerful) monadic second-order logic can be solved in linear time when restricted to graphs of bounded treewidth; using terminology from parameterized complexity, they are \textit{fixed-parameter tractable} (\FPT). Unfortunately, the directed counterpart of treewidth, called \textit{directed treewidth}~\cite{Reed99, Johnson.Robertson.Seymour.Thomas.01}, does not enjoy the nice algorithmic properties of undirected treewidth. Indeed, on the one hand, directed treewidth seems harder to compute (or to approximate) than its undirected counterpart~\cite{Bodlaender96,KorhonenL23,doi:10.1137/21M1452664,Johnson.Robertson.Seymour.Thomas.01}. On the other hand, and closer to the topic of this article, some ``basic'' problems that fit the setting of Courcelle's theorem~\cite{COURCELLE199012} cannot be solved efficiently in digraphs when parameterized by directed treewidth, as observed by Lampis et al.~\cite{LampisKM11}. This is the case, for instance, of {\sc Directed Hamiltonian Path} that is ${\sf W}[2]$-hard parameterized by directed treewidth (hence, unlikely to be \FPT), or {\sc Max Directed Cut} that is even \NP-hard restricted to directed acyclic graphs (DAGs), which have directed treewidth zero~\cite{LampisKM11}. Ganian et al.~\cite{GanianHK0ORS16} provide concrete reasons for which digraph width measures (including directed treewidth) cannot have the same nice algorithmic properties as undirected treewidth. Since our focus is algorithmic, in this discussion we deliberately omit the recent impressive progress on {\sl structural} properties of directed treewidth~\cite{HatzelKMM24,GiannopoulouKKK20,KawarabayashiK15}. 

However, there are a few positive algorithmic results concerning directed treewidth, two of which are particularly relevant to us.
Namely, in the article where directed treewidth was introduced, Johnson et al.~\cite{Johnson.Robertson.Seymour.Thomas.01} proved that {\sc Directed Hamiltonian Path} and {\sc Directed Disjoint Paths} can be solved in \XP time (that is, in polynomial time for each fixed value of the parameter; see~\cite{DF13,CyganFKLMPPS15} for basic background on parameterized complexity) parameterized by directed treewidth (plus the number of paths in the latter problem). Later, de Oliveira Oliveira~\cite{Oliveira16} provided an algorithmic meta-theorem for directed treewidth generalizing the previous results, by showing, informally and in a simplified statement, that the problem of deciding whether a digraph $D$ contains a subdigraph $H$ consisting of the union of $k$ directed paths and satisfying a given monadic second-order formula $\varphi$ can be solved in \XP time parameterized by $k$, the size of $\varphi$, and the directed treewidth of $D$. 

It is worth noting that both algorithms in~\cite{Johnson.Robertson.Seymour.Thomas.01,Oliveira16} have an \XP dependence on the directed treewidth, and, more crucially,  that these \XP algorithms are for problems consisting in finding {\sl directed paths} satisfying certain properties, namely being pairwise disjoint and rooted at prescribed terminals in~\cite{Johnson.Robertson.Seymour.Thomas.01}, or satisfying a monadic second-order formula in~\cite{Oliveira16}. 

The question that arises naturally from the previous discussion, and which is our main motivation, is the following: are there subdigraphs, other than the directed paths, that can be found efficiently in digraphs of bounded directed treewidth? 

\medskip
\noindent\textbf{Our contribution.} In a nutshell, the main message of this article is that, other than the (directed) paths, the only digraphs that seem to behave well with respect to directed treewidth are the {\sl stars}. In what follows, we make the former vague statement more precise. 

Driven by the above question, we are interested in the following problem, called {\sc Subdigraph Isomorphism}: given a host digraph $D$ of directed treewidth at most $w$ and a digraph $H$, decide whether $D$ contains a subdigraph isomorphic to $H$.\footnote{A clarification is in place here. Note that the way we have defined {\sc Subdigraph Isomorphism}, it does {\sl not} encompass, for instance, the {\sc Directed Disjoint Paths} problem, where the input contains pairs of vertices, called the \textit{terminals}, to be joined by the corresponding paths. Nevertheless, as it will become clear, the \XP algorithm that we present solves the {\sl rooted} version (that is, with terminals) as well.} To the best of our knowledge {\sc Subdigraph Isomorphism} in its full generality, and considering as a parameter the directed treewidth of the input digraph, has been unexplored in the literature, in contrast with its undirected counterpart (see for instance~\cite{MarxP14}).

Unsurprisingly, {\sc Subdigraph Isomorphism}\ is \NP-complete in general (for instance, if $H$ is a transitive tournament, there is a trivial reduction from {\sc Clique} in undirected graphs), so our focus is on parameterized algorithms when taking as the parameters (combinations of) $w$ and some parameter $\kappa$ depending on $H$. Observe that if we take $\kappa(H)=|V(H)|$, then the problem can be trivially solved in time $|V(D)|^{\Ocal(\kappa(H))}$ by just brute-forcing over all vertex sets of $D$ of size $|V(H)|$ and checking whether the subdigraph of $D$ induced by them contains a digraph isomorphic to $H$. 
Thus, in order to face non-trivial questions, we consider parameters $\kappa(H)$ smaller than the size of $H$. In the spirit of the meta-algorithm of de Oliveira Oliveira~\cite{Oliveira16} (albeit, without the logical ingredient), when $H$ can be defined as the union of $k$ digraphs belonging to some allowed collection~${\cal A}$ of digraphs, the natural choice that we consider is $\kappa(H)=k$. Note that in~\cite{Oliveira16}, the collection ${\cal A}$ consists of all directed paths. 

Our main positive result is an \XP algorithm when ${\cal A}$ contains directed paths and stars (of arbitrary size). More precisely, we provide an \XP algorithm to solve {\sc Subdigraph Isomorphism} parameterized by $w$ (the directed treewidth of the input graph) and the number of directed paths and oriented stars whose union yields the desired subdigraph $H$ (cf. \autoref{thm:our-XP-algo} for the precise statement).

On the negative side, we provide a number of hardness results showing that, as far as $H$ deviates slightly from being definable by the union of few paths or stars, then {\sc Subdigraph Isomorphism} is \NP-complete even restricted to digraphs of bounded directed treewidth (at most two, and zero in almost all cases); see \autoref{theorem:hardness results} for the precise statement and \autoref{fig:hardness-digraphs-examples} for an illustration of some of the graphs $H$ for which we prove \NP-completeness. Note that the digraphs depicted from \autoref{fig:hardness-digraphs-examples}\textbf{(c)} to \autoref{fig:hardness-digraphs-examples}\textbf{(f)} are ``close'' to being definable by the union of few paths or stars, in the sense that they can be defined by the union of few digraphs that are ``close'' to directed paths or stars, namely an antidirected path in \autoref{fig:hardness-digraphs-examples}\textbf{(c)}, a digraph obtained from a (large) star by attaching small stars to each leaf in \autoref{fig:hardness-digraphs-examples}\textbf{(d)} (the hardness result also holds if one removes the root $s$), small subdivisions of exactly two stars in \autoref{fig:hardness-digraphs-examples}\textbf{(e)}, or a ``caterpillar-like'' digraph in \autoref{fig:hardness-digraphs-examples}\textbf{(f)}. The cases depicted in \autoref{fig:hardness-digraphs-examples}\textbf{(a)} and \autoref{fig:hardness-digraphs-examples}\textbf{(b)} will be discussed in the next section.

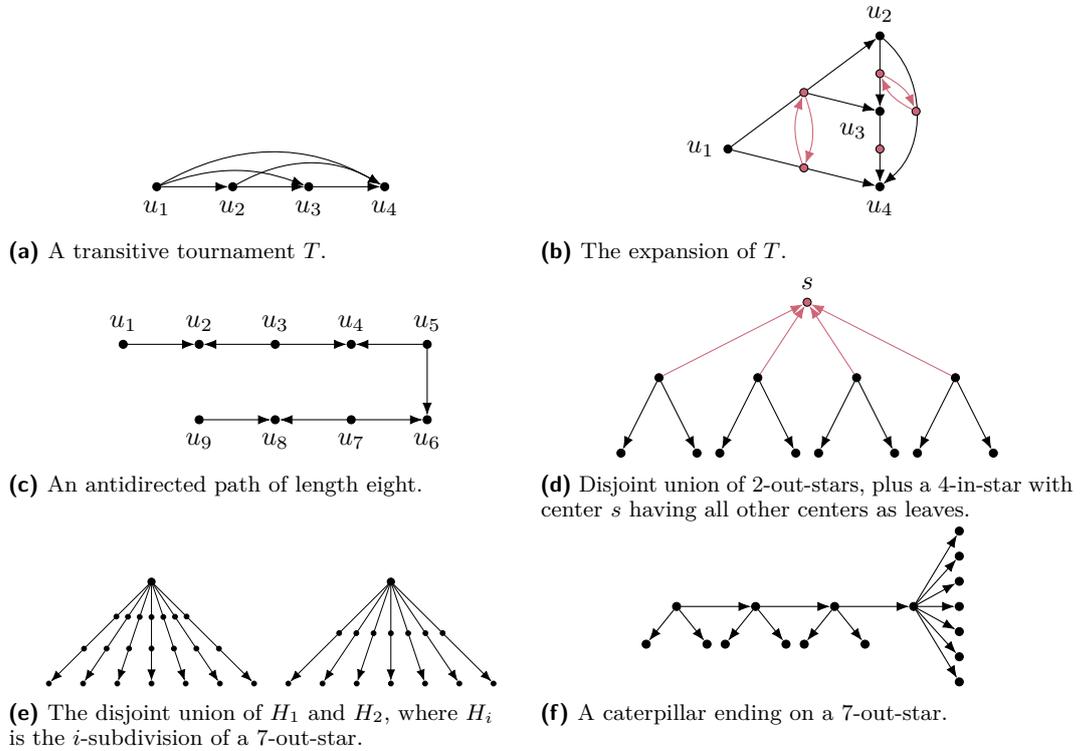
\begin{figure}[h]
\centering
\begin{subfigure}[t]{.5\textwidth}
\centering
\scalebox{1}{\begin{tikzpicture}
\foreach \i in {1,...,4}
	\node[blackvertex, scale=.6, label=-90:$u_\i$] (u\i) at (\i, 0) {};
\draw[arrow] (u1) to (u2);
\draw[arrow] (u2) to (u3);
\draw[arrow] (u3) to (u4);
\draw[arrow] (u1) to [bend left = 20] (u3);
\draw[arrow] (u1) to [bend left = 30] (u4);
\draw[arrow] (u2) to [bend left = 30] (u4);
\end{tikzpicture}}%
\caption{A transitive tournament $T$.}
\label{subfigure:transitive-tournament}
\end{subfigure}
\begin{subfigure}[t]{.5\textwidth}
\centering
\scalebox{1}{\begin{tikzpicture}
\node[blackvertex, scale=.6, label=180:$u_1$] (u1) at (0,0) {};
\node[blackvertex, scale=.6, label=90:$u_2$] (u2) at (2,1.5) {};
\node[blackvertex, scale=.6, label=210:$u_3$] (u3) at ($(u2) + (0,-1)$) {};
\node[blackvertex, scale=.6, label=-90:$u_4$] (u4) at ($(u3) + (0,-1)$) {};

\draw[arrow] (u1) to node [redvertex, scale=.6, midway] (in1) {} (u2);
\draw[arrow] (in1) to (u3);
\draw[arrow] (u1) to node [redvertex, scale=.6, midway] (in2) {} (u4);

\draw[arrow] (u2) to node [redvertex, scale=.6, midway] (in3) {} (u3);
\draw[arrow] (u2) to [bend left = 50] node (in4) [redvertex, scale=.6, midway] (in4) {} (u4);
\draw[arrow] (u3) to node [redvertex, scale=.6, midway] {} (u4);
\draw[arrow, goodred] (in1) to [bend left = 20] (in2);
\draw[arrow, goodred] (in2) to [bend left = 20] (in1);

\draw[arrow, goodred] (in3) to [bend left = 20] (in4);
\draw[arrow, goodred] (in4) to [bend left = 20] (in3);
\end{tikzpicture}}%
\caption{The expansion of $T$.}
\label{subfigure:expansion-of-transitive-tournament}
\end{subfigure}%

\begin{subfigure}[t]{.5\textwidth}
\centering
\scalebox{1}{\begin{tikzpicture}
\foreach \i in {1,...,5}
	\node[blackvertex, scale=.6, label=90:$u_\i$] (u\i) at (\i, 0) {};

\node[blackvertex, scale=.6, label=-90:$u_6$] (u6) at ($(u5) + (0,-1)$) {};
\node[blackvertex, scale=.6, label=-90:$u_7$] (u7) at ($(u4) + (0,-1)$) {};
\node[blackvertex, scale=.6, label=-90:$u_8$] (u8) at ($(u3) + (0,-1)$) {};
\node[blackvertex, scale=.6, label=-90:$u_9$] (u9) at ($(u2) + (0,-1)$) {};

\draw[arrow] (u1) to (u2);
\draw[arrow] (u3) to (u2);
\draw[arrow] (u3) to (u4);
\draw[arrow] (u5) to (u4);
\draw[arrow] (u5) to (u6);
\draw[arrow] (u7) to (u6);
\draw[arrow] (u7) to (u8);
\draw[arrow] (u9) to (u8);
\end{tikzpicture}
\caption{An antidirected path of length eight.}
\label{subfigure:antidirected-path}
\end{subfigure}%
\begin{subfigure}[t]{.5\textwidth}
\centering
\scalebox{1}{\begin{tikzpicture}
\node[redvertex, scale=.6, label=90:$s$] (s) at (3.25, 1) {};
\foreach \i in {1,...,4}{
	\node[blackvertex, scale=.6] (c\i) at (1.3*\i, 0) {};
	\node[blackvertex, scale=.6] (s1\i) at ($(c\i) + (-.5, -1)$) {};
	\node[blackvertex, scale=.6] (s2\i) at ($(c\i) + (.5, -1)$) {};
	\draw[arrow] (c\i) to (s1\i);
	\draw[arrow] (c\i) to (s2\i);
	\draw[arrow, goodred] (c\i) to (s);
}
\end{tikzpicture}
\caption{Disjoint union of $2$-out-stars, plus a $4$-in-star with center $s$ having all other centers as leaves.}
\label{subfigure:many-stars-plus-big-star}
\end{subfigure}%

\begin{subfigure}[t]{.5\textwidth}
\centering
\scalebox{.9}{\begin{tikzpicture}
\node[blackvertex, scale=.6] (s1) at (0,0) {};
\node[blackvertex, scale=.6] (s2) at (3.5,0) {};
\foreach \i in {-3,...,3} {
	\node[blackvertex, scale=.4] (u\i) at ($(s1) + (\i/2, -1.5)$) {};
	\draw[arrow] (s1) to node[blackvertex, pos=.33, scale=.4] {} node[blackvertex, pos=.66, scale=.4] {} (u\i);
	\node[blackvertex, scale=.4] (v\i) at ($(s2) + (\i/2, -1.5)$) {};
	\draw[arrow] (s2) to node[blackvertex, pos=.5, scale=.4] {} (v\i);
}
\end{tikzpicture}}%
\caption{The disjoint union of $H_1$ and $H_2$, where $H_i$\\is the $i$-subdivision of a $7$-out-star.}
\label{subfigure:stars-subdivisions}
\end{subfigure}%
\begin{subfigure}[t]{.5\textwidth}
\centering
\scalebox{1}{\begin{tikzpicture}[xscale=.8]
\foreach \i in {1,...,3} {
	\node[blackvertex, scale=.6] (u\i) at (1.3*\i,0) {};
	\node[blackvertex, scale=.6] (s1) at ($(u\i) + (-.5, -.5)$) {};
	\node[blackvertex, scale=.6] (s2) at ($(u\i) + (.5, -.5)$) {};
	\draw[arrow] (u\i) to (s1);
	\draw[arrow] (u\i) to (s2);
}
\node[blackvertex, scale=.6] (u4) at (1.3*4,0) {};
\foreach \i in {-3, ..., 3} {
	\node[blackvertex, scale=.6] (v\i) at ($(u4) + (.75, \i/3)$) {};
	\draw[arrow] (u4) to (v\i);
}
\draw[arrow] (u1) to (u2);
\draw[arrow] (u2) to (u3);
\draw[arrow] (u3) to (u4);
\end{tikzpicture}}%
\caption{A caterpillar ending on a $7$-out-star.}
\label{subfigure:caterpillar}
\end{subfigure}%
\caption{Examples of digraphs from the classes mentioned in \autoref{theorem:hardness results}.
Items \lipItem{3.} and \lipItem{4.} of \autoref{theorem:hardness results} are represented by \textbf{(d)}: the former when $s$ is deleted from the digraph, and the latter with the digraph as it is.}
\label{fig:hardness-digraphs-examples}
\end{figure}

\medskip
\noindent\textbf{Organization}.
In \autoref{sec:overview} we provide an overview of the techniques that we use to prove our positive and negative results. 
In \autoref{sec:prelim}, we provide the preliminaries about digraphs, directed treewidth, and formally define the classes of digraphs relevant to this paper. In \autoref{sec:algorithm} (resp. \autoref{sec:hardness_results}) we state and prove our algorithmic (resp. hardness)  results.
We conclude the article in \autoref{sec:conclusions} with some directions for further research.

\section{Overview of our techniques} 
\label{sec:overview}

In this section we describe in more detail our results and present the main ideas involved in the proofs. Full details can be found in the next sections. 

\medskip
\noindent\textbf{The \XP algorithm}. On the positive side, as mentioned in~\autoref{sec:intro}, we give an \XP algorithm for the case when $H$ is formed by the disjoint union of stars and internally disjoint paths between the centers of those stars (note that unions of paths are also captured in this way, by considering stars with just one vertex as the endpoints), which we name {\sc \pname}.
Our \XP algorithm is parameterized by the number of stars and the directed treewidth of the given digraph.
The algorithm consists of two parts.
First, we notice that if the goal is to find only the stars, one can reduce the problem to an integer system of inequations, where the inequations are given by the size of the neighborhoods of the vertices candidates for the centers of the stars, and the size of the stars that we want to build.
This part runs in \FPT time parameterized by the number of stars after the candidates for the centers are given, using the fact that an integer system of linear inequations can be solved in \FPT time parameterized by the number of variables~\cite{Lenstra83}.
Additionally, we can add the following query to the algorithm: if, for every candidate vertex $v \in D$, we are given a number $\slack(v)$ representing how many vertices in the usable neighborhood of $v$ we want to \emph{avoid} using for the stars, can we find the stars while respecting the slacks?
This is needed for the next part of the algorithm, where the goal is to route the paths while avoiding sufficiently many vertices of the usable neighborhoods such that the stars can be built.

More precisely, in the second part, we introduce a new version of the \textsc{Directed Disjoint Paths} problem where, in addition to the terminal set $\{s_1, t_1, \ldots, s_r, t_r\}$ that we want to connect with pairwise vertex-disjoint paths, we are given subsets $X_1, \ldots, X_k \subseteq V(D)$ and integers $x_1, \ldots, x_k$, and the goal is to route the paths while using at most $x_i$ vertices of $X_i$.
We call this the \textsc{\pddp} (\textsc{\pddpshort}) problem.
Applying the framework of Johnson et al.~\cite{Johnson.Robertson.Seymour.Thomas.01} (see also Lopes and Sau~\cite{LopesS22} for another application of this framework), we show that \textsc{\pddpshort} is \XP with parameters $r$ and $k$.
As in the case of the \XP algorithm for \textsc{DDP} in~\cite{Johnson.Robertson.Seymour.Thomas.01}, we exploit the bounded breakability of paths to obtain a dynamic programming algorithm for our problem.
We remark that directed treewidth appears only in this part of the algorithm.

Given a set of candidates for the centers of the stars, for each choice of the slacks for those vertices, we build an instance of \textsc{\pddpshort} where the sets $X_i$ are built from the neighborhoods of the candidates, and solve it in \XP time with parameters $r$, $k$, and the directed treewidth of the given digraph.
Since the number of possible choices of for the slacks is $\Ocal(n^k)$, this strategy yields an \XP algorithm.
Since the end goal is an \XP algorithm, we can pay the cost of guessing where to embed the centers of the stars in the digraph and thus use the rooted version to solve the unrooted one. Note that, with this approach, an \XP algorithm is the best we can hope for, since {\sc Directed Disjoint Paths}, which we solve as a particular case of our algorithm, is known to be ${\sf W}[1]$-hard (hence, unlikely to be \FPT) even restricted to DAGs~\cite{Slivkins.03}.

\medskip
\noindent\textbf{Hardness results}. Our hardness results are itemized in \autoref{theorem:hardness results} and follow from several distinct reductions, which we proceed to explain  here. Namely, we consider the following cases for the choice of the target digraph $H$ in {\sc Subdigraph Isomorphism.}
We say that $H$ is an \emph{antidirected path} if $H$ is an orientation of an undirected path where the arcs alternate between forward and backwards arc along the path.
In this case, we improve on a result from Bang-Jensen et al.~\cite[Theorem 2.2]{BANGJENSEN201768} which proves that the rooted version of the problem is \NP-complete.
Their reduction does not generate a DAG, however, and we show how to modify it  in order to do so.
In short, by increasing the size of the antidirected paths added to the constructed digraph $D$ in the reduction, we ensure that no two vertices which are not sinks nor sources in $D$ are within a directed path of $D$.

An \emph{out-star} (\emph{in-star}) is an orientation of an undirected star where all arcs are oriented away from (towards) the center of the star. 
If $H$ is formed by the disjoint union of out-stars (in-stars) with two leaves, we provide a novel reduction from a matching problem in bipartite graphs, introduced and shown to be \NP-complete by Plaisted and Zaks~\cite{PLAISTED198065}.
In this matching problem, we are given a bipartite graph $G$ with parts $V_1$ and $V_2$ together with partitions $\mathcal{P}_1$ and $\mathcal{P}_2$ of $V_1$ and $V_2$, respectively.
The goal is to decide if $G$ has a perfect matching $M$ such that no two distinct edges $\{a_1, b_1\}, \{a_2, b_2\}\in M$ have the property that $a_1, a_2$ are in the same part of $\mathcal{P}_1$ or $b_1, b_2$ are in the same part of $\mathcal{P}_2$.
A perfect matching with such properties is said to be \emph{consistent}.
If the goal is to prove the result for out-stars, we can begin as follows. 
For each $e \in E(G)$ with endpoints $u,v$, one can add the arc $\{u,v\}$ to the digraph $D$ plus a new vertex $v_e$ with out-neighbors $u$ and $v$, then search for $|V_1| = |V_2|$ out-stars in $D$.
This does not guarantee that the leaves of the out-stars are selecting edges of $G$ forming a consistent matching.
Thus, we exploit the extra properties of the results of~\cite{PLAISTED198065}.

The crucial property is that in~\cite{PLAISTED198065} the authors show that the problem remains hard even if every part of $\mathcal{P}_1$ and every part of $\mathcal{P}_2$ has size two.
This allows our reduction to distinguish four possible configurations for each pair $A, B$ with $A \in \mathcal{P}_1$ and $B \in \mathcal{P}_2$, depending on how many edges there are between $A$ and $B$.
For each configuration with two or more edges between the sets, we add some new stars to the digraph and increase the number of stars we are going to search for in the instance of {\sc Subdigraph Isomorphism}.
Then we show that the stars that we added in the configurations are sufficient to select a consistent matching in $G$ if the appropriate number of stars is found.

For $i \in [2]$, let $S_i$ be the digraph formed by subdividing $i$ times every arc of an out-star.
For the case when $H$ is formed by the disjoint union of a copy of $S_1$ and a copy of $S_2$, we provide a reduction from a particular case of $3$-\textsc{SAT} in which every literal appears {\sl exactly} twice in the clauses.
This version of $3$-\textsc{SAT} was shown to be \NP-complete by Berman et al.~\cite{Berman2003}.
The specification on the number of times each literal can appear is key for the reduction to work.
We start with a vertex $s$ and add to the digraph $2n$ (not necessarily disjoint) paths of length three leaving $s$, where $n$ is the number of variables.
Each pair of paths represents a choice of a truth value to a variable.
The two last vertices of each path are associated with the two positive and two negated occurrences of a variable in the clauses, and each path leaving $s$ uses the two vertices associated with the two positive or the two negative occurrences of a variable in the clauses.
Thus the second clause, built from a structure associated with the clauses, can only have leaves in the two vertices \textsl{not} selected by the path leaving $s$, for each variable.
The star with center $s$ acts as a \emph{selector} for the variables, and a star with a center $c$ associated with the clauses acts as a \emph{validator} for the clauses.

\medskip
\noindent\textbf{The role of breakability}. In all the aforementioned cases in our hardness reductions, the constructed digraphs are DAGs.
The key ``separation'' notion of directed treewidth is that of \emph{$w$-guarded} sets (see \autoref{sec:prelim} for the precise definition).
In short, given an integer $w$, a set of vertices $X$ of a digraph $D$ is $w$-guarded if there is a set $Z \subseteq V(D) \setminus S$ of size at most $w$ such that no path of $D$ starts in $X$, intersects $V(D) \setminus (X \cup Z)$, and returns to $X$ without using a vertex of $Z$.

In many cases, parameterized algorithms exploiting the structure of digraphs with bounded directed treewidth begin by showing that, when the goal is to find some subdigraph $H \subseteq D$, that $H$ cannot be split into too many pieces by a $w$-guarded set.
In other words, if $X$ is $w$-guarded, the maximum number of weak components in the subdigraph induced by $V(H) \cap X$ is bounded by some function of $w$.
This approach is used, for example, in~\cite{Johnson.Robertson.Seymour.Thomas.01,LopesS22}.
We formalize this notion by saying that $H$ has \emph{bounded breakability} (see \autoref{def:breakability} for the precise definition).
A priori, one might expect that bounded breakability is a strong enough condition to ensure the existence of \XP algorithms for \NP-hard problems restricted to digraphs of bounded directed treewidth.
Somehow surprisingly, we show that this is not the case, even if the target digraph also has bounded maximum degree together with bounded breakability.

Defining a class $\mathcal{H}$ of digraphs with bounded breakability, bounded maximum degree, and such that {\sc Subdigraph Isomorphism} is \NP-complete with respect to target digraphs in $\mathcal{H}$ turned out to be no easy task.
We do so by augmenting a classical construction that is used to bound the maximum degree of digraphs.
Given a digraph $D$, we define the \emph{expansion} of $D$ as the digraph $D'$ built following a series of steps  (cf. \autoref{fig:hardness-digraphs-examples}\textbf{(a)} and \autoref{fig:hardness-digraphs-examples}\textbf{(b)}).
Namely, and informally, we apply the classical procedure of replacing all the arcs leaving a vertex $u \in V(D)$ by an out-arborescence with root $u$ (that is, the orientation of a tree with root $r$ where all arcs are oriented away from $u$).
Then, we do the same for all vertices of $V(D)$ which have in-degree greater than two, but now for incoming arcs.
This bounds the maximum degree of $D'$, but is not enough to bound its breakability.

For the breakability, we add arcs to $D'$ in order to ensure that the local out-  and in-arborescences added to $D'$ when a vertex $u$ is processed are augmented to ``almost strong'' digraphs.
We need to be careful here: although an increase in the connectivity is needed to ensure that for every $u,v \in V(D')$ there is a path from $u$ to $v$ or from $v$ to $u$ (or both) when $D'$ is the expansion of a transitive tournament, we need to avoid creating paths between the ``original'' vertices $x,y$ of $D'$ (that is, $x,y \in V(D)$) when there are no paths paths between $x$ and $y$ in $D$.
Informally, the connectivity needs to increase, but not by too much that we lose control on the directed treewidth of expansions of DAGs.
This is important to maintain since the goal is to prove hardness for expansions of tournaments when the base digraph has directed treewidth at most three.

Let $T'$ be the expansion of a transitive tournament $T$.
Proving that {\sc Subdigraph Isomorphism} is hard when $H$ is the expansion of a transitive tournament is easy. We simply follow the same idea as in the proof of hardness for transitive tournaments.
Proving that the breakability of $T'$ is bounded, however, is much harder.
We first show that, given a vertex $v \in V(T')$, one can define a partition $V(T') \setminus \{v\}$ into four sets $\{X^v_i \mid i \in [4]\}$ with the following property: for any $x,y \in X^v_i$, there is a path from $x$ to $y$ or from $y$ to $x$ {\sl avoiding}  $v$.
From here, the proof follows by induction.
In short, given a vertex $v$ in a guard $Z$ of size $w$, we split $V(T')$ as mentioned into sets $\{X^v_i \mid i \in [4]\}$.
Since $v$ is part of the guard, each $V(T') \cap X^v_i$ is $w-1$-guarded.
By induction, $V(T') \cap X^v_i \cap X$ has at most $4^{w-1}$ weak components when $X$ is a $w$-guarded set, which implies that the breakability of $T'$ is at most~$4^w$.

\section{Definitions and preliminaries}
\label{sec:prelim}
In this section we provide basic preliminaries about digraphs, parameterized complexity, and directed treewidth.

\subsection{Digraphs, directed treewidth, and parametrized complexity}
For basic background on graph theory we refer the reader to~\cite{Graph.Theory}.
Since in this article we mainly work with digraphs, we focus on basic definitions of
digraphs, often skipping their undirected counterparts.
Given a digraph $D$ we denote by $V(D)$ and $E(D)$ the sets of vertices and arcs of $G$, respectively.
Unless stated otherwise, all involved digraphs are simple, i.e., they have neither loops nor parallel arcs with the same orientation.
Given $X \subseteq V(D)$, we denote by $G \setminus X$ the digraph resulting from
removing every vertex of $X$ from $D$.
We denote by $D[X]$ the subdigraph of $D$ \emph{induced} by $X$.

If $e$ is an arc of a digraph from a vertex $u$ to a vertex $v$, we say that $e$ has
\emph{endpoints} $u$ and $v$, that $e$ is \emph{incident} to $u$ and $v$, and that $e$ is
\emph{oriented} from $u$ to $v$. We may refer to $e$ as the ordered pair $\{u,v\}$.
In this case, $u$ is the \emph{tail} of $e$ and $v$ is the \emph{head} of $e$.
We also say that $e$ is \emph{leaving} $u$ and \emph{reaching} $v$, and that $u$ and $v$
are \emph{adjacent}.
A \emph{clique} in a (di)graph $D$ is a set of pairwise adjacent vertices of $D$, and an
\emph{independent set} is a set of pairwise non-adjacent vertices of $D$.
A pair of arcs is \emph{adjacent} if they share an endpoint.
A \emph{matching} is a set of pairwise non-adjacent edges (arcs) of a (di)graph.
For $A, B \subseteq V(D)$, we say that $M$ is a \emph{matching from $A$ to $B$} if every arc of $M$ has tail in $A$ and head in $B$.
The \emph{in-degree} (resp. \emph{out-degree}) of a vertex $v$ in a digraph $D$ is the
number of arcs with head (resp. tail) $v$.
We denote the in-degree and out-degree of $v$ by $\deg^-_D(v)$ and $\deg^+_D(v)$, respectively.

The \emph{in-neighborhood} $N^-_D(v)$ of $v$ is the set $\{u \in V(D) \mid (u,v) \in
E(D)\}$, and the \emph{out-neighborhood} $N^+_D(v)$ is the set $\{u \in V(D) \mid \{v,u\}
\in E(D)\}$.
We say that $u$ is an \emph{in-neighbor} of $v$ if $u \in N^-_D(v)$ and that $u$ is an
\emph{out-neighbor} of $v$ if $u \in N^+_D(v)$.
We extend these notations to sets of vertices: given $X \subseteq V(D)$, we define
$N^-_D(X) = (\bigcup_{v \in X}N^-_D(v)) \setminus X$ and $N^+_D(X) = (\bigcup_{v \in
X}N^+_D(v)) \setminus X$.
A vertex $v$ is a \emph{source} (\emph{sink}) in $D$ if $\deg^-_D(v) = 0$ ($\deg^+_D(v) = 0$).

A \emph{walk} in a digraph $D$ is an alternating sequence $W$ of vertices and arcs that
starts and ends with a vertex, and such that for every arc $(u,v)$ in the walk, vertex
$u$ (resp. vertex $v$) is the element right before (resp. right after) arc $(u,v)$ in $W$.
If the first vertex in a walk is $u$ and the last one is $v$, then we say this is a
\emph{walk from $u$ to $v$}.
A \emph{path} is a walk without repetition of vertices nor arcs.
The \emph{length} of a path is the number of arcs in the path.

An \emph{orientation} of an undirected graph $G$ is a digraph $D$ obtained from $G$ by choosing an orientation for each edge $e \in E(G)$.
The undirected graph $G$ formed by ignoring the orientation of the arcs of a digraph $D$ is the \emph{underlying graph} of $D$.

A digraph $D$ is \emph{strongly connected} if, for every pair of vertices $u,v \in V(D)$, there is a walk from $u$ to $v$ and a walk from $v$ to $u$ in $D$.
We say that $D$ is \emph{weakly connected} if the underlying graph of $D$ is connected.
A \emph{separator} of $D$ is a set $S \subsetneq V(D)$ such that $D \setminus S$ is not strongly connected.
If $|V(D)| \geq k+1$ and $k$ is the minimum size of a separator of $D$, we say that $D$ is \emph{$k$-strongly connected}.
A \emph{strong component} of $D$ is a maximal induced subdigraph of $D$ that is strongly connected, and a \emph{weak component} of $D$ is a maximal induced subdigraph of $D$ that is weakly connected.

An \emph{isomorphism} between digraphs $D$ and $D'$ is a bijective mapping $f: V(D) \to V(D')$ such that $\{f(u),f(v)\} \in E(D')$ if and only if $\{u,v\}\in E(D)$.
We write $D \cong D'$ to say that digraphs $D$ and $D'$ are \emph{isomorphic}, i.e., that there is an isomorphism between $D$ and $D'$.

A \emph{bipartite} (di)graph $G$ is a (di)graph with vertex partition $\{V_1, V_2\}$ such that every edge (arc) of $G$ has one endpoint $V_1$ and one endpoint in $V_2$.
A \emph{perfect matching} in a (di)graph is a matching such that every vertex appears in exactly on edge (arc) of the matching.

For an integer $\ell \geq 1$, by doing an \emph{$\ell$-subdivision} of an arc $\{u,v\} \in E(D)$ we delete $(u,v)$ from $D$ and add to the resulting digraph a path with $2+\ell$ vertices from $u$ to $v$.

For a positive integer $k$, we denote by $[k]$ the set of integers $\{1, \ldots, k\}$ and by $2^{[k]}$ the collection of all subsets of $[k]$.

Unless stated otherwise, $D$ will always stand for a digraph, $G$ for an undirected graph, and $n = |V(D)|$ and $m = |E(D)|$ when $D$ is the input digraph of some algorithm.

\subsection{Parameterized complexity}

We refer the reader to~\cite{DF13,CyganFKLMPPS15} for basic background on parameterized complexity, and we recall here only the definitions used in this article. A \emph{parameterized problem} is a language $L \subseteq \Sigma^* \times \mathbb{N}$.  For an instance $I=(x,k) \in \Sigma^* \times \mathbb{N}$, $k$ is called the \emph{parameter}.

A parameterized problem $L$ is \emph{fixed-parameter tractable} ({\sf FPT}) if there exists an algorithm $\mathcal{A}$, a computable function $f$, and a constant $c$ such that given an instance $I=(x,k)$, $\mathcal{A}$   (called an {\sf FPT} \emph{algorithm}) correctly decides whether $I \in L$ in time bounded by $f(k) \cdot |I|^c$. For instance, the \textsc{Vertex Cover} problem parameterized by the size of the solution is {\sf FPT}.
  
A parameterized problem $L$ is in {\sf XP} if there exists an algorithm $\mathcal{A}$ and two computable functions $f$ and $g$ such that given an instance $I=(x,k)$, $\mathcal{A}$  (called an {\sf XP} \emph{algorithm}) correctly decides whether $I \in L$ in time bounded by $f(k) \cdot |I|^{g(k)}$. For instance,  the \textsc{Clique} problem parameterized by the size of the solution is in  {\sf XP}.

Within parameterized problems, the class {\sf W}[1] may be seen as the parameterized equivalent to the class {\sf NP} of classical decision problems. Without entering into details (see~\cite{DF13,CyganFKLMPPS15} for the formal definitions), a parameterized problem being {\sf W}[1]-\emph{hard} can be seen as a strong evidence that this problem is {\sl not} {\sf FPT}.
The canonical example of {\sf W}[1]-hard problem is \textsc{Clique}  parameterized by the size of the solution.

\subsection{Arboreal decompositions, arborescences, and directed treewidth}\label{section:arboreal_decomp_obstructions}

By an \emph{out-arborescence} $R$ with \emph{root} $r_0$, we mean an orientation of a tree such that $R$ contains a path from $r_0$ to every other vertex of the tree.
If a vertex $v$ of $R$ has out-degree zero, we say that $v$ is a \emph{leaf} of $R$. We now define guarded sets and arboreal decompositions of digraphs. 
An \emph{ir-arborescence} with root $r_0$ is an out-arborescence with root $r_0$ where the directions of all arcs are reversed.
A leaf of an in-arborescence is a vertex of in-degree zero.

\begin{definition}[$Z$-guarded and $w$-guarded sets]
\label{def:guarded-sets}
Let $D$ be a digraph, $Z \subseteq V(D)$, and $S \subseteq V(D) \setminus Z$.
We say that $S$ is \emph{$Z$-guarded} if there is no directed walk in $D \setminus Z$ with first and last vertices in $S$ that uses a vertex of $V(D) \setminus (Z \cup S)$.
For an integer $w \geq 0$, we say that $S$ is \emph{$w$-guarded} if $S$ is $X$-guarded by some $X \subseteq V(D)$ with $|X| \leq w$.
\end{definition}
\noindent That is, informally speaking, a set $S$ is $Z$-guarded if whenever a walk starting in $S$ leaves $S$, it is impossible to come back to $S$ without visiting a vertex in $Z$. See~\autoref{fig:z_guarded_set} for an illustration of a $Z$-guarded set.
\begin{figure}[h!]
\centering
\scalebox{.75}{\begin{tikzpicture}[scale=1]
\draw[rounded corners] (0,4) rectangle  (5,5) node [above,xshift=-2.5cm] {$V(D)\setminus (Z \cup S)$} ;
\draw[rounded corners] (0,0) rectangle (5,1) node [below,yshift=-1cm,xshift=-2.5cm] {$S$}; 
\draw[rounded corners] (1,2) rectangle (4,3) node [above,xshift=.5cm,yshift=-.75cm] {$Z$};
\node[blackvertex,scale=.5] (P-a) at (0.5,0.5) {};
\node (P-b) at (-0.5,2.5) {};
\node (P-c) at (0.25,4.5) {};
\node (P-d) at (0.85,4.75) {};
\node[blackvertex,label=$u$,scale=.5] (P-e) at (1.25,4.5) {};
\node (P-f) at (1.5,2.5) {};
\node (P-g) at (0.75,1.5) {};
\node[blackvertex,scale=.5] (P-h) at (1.5,0.5) {};

\draw[thick, -{Latex[length=3mm, width=2mm]}] plot[smooth] coordinates {(P-a) (P-b) (P-c) (P-d) (P-e) (P-f) (P-g) (P-h) };

\node[blackvertex,scale=.5] (P-a1) at (4,0.5) {};
\node (P-b1) at (5,2.5) {};
\node (P-c1) at (4.25,4.5) {};
\node (P-d1) at (3.75,4.75) {};
\node[blackvertex,label=$v$,scale=.5] (P-e1) at (2.5,4.5) {};
\node (P-f1) at (3,2.5) {};
\node (P-g1) at (2.25,1.5) {};
\node[blackvertex,scale=.5] (P-h1) at (3,0.5) {};

\draw[thick, -{Latex[length=3mm, width=2mm]}] plot[smooth] coordinates {(P-h1) (P-g1) (P-f1) (P-e1) (P-d1) (P-c1) (P-b1) (P-a1)};

\draw[dashed, -{Latex[length=2mm, width=2mm]}] (P-e) to [bend right =30] (P-e1);


\end{tikzpicture}
\caption{A $Z$-guarded set $S$. The dashed line indicates that there is no path from $u$ to $v$ in $V(D) \setminus (Z \cup S)$.}
\label{fig:z_guarded_set}
\end{figure}
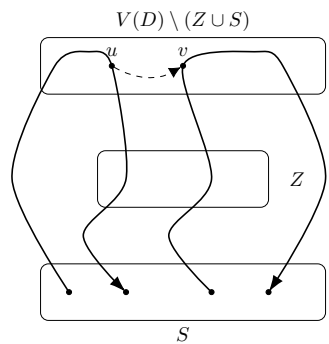
If a set $S$ is $Z$-guarded, we may also say that $Z$ is a \emph{guard} for $S$. We remark that in~\cite{Johnson.Robertson.Seymour.Thomas.01}, the authors use the terminology of $Z$-\emph{normal} sets instead of $Z$-guarded sets.

Let $R$ be an out-arborescence, $r \in V(R)$, $e \in E(R)$, and $r'$ be the head of $e$.
We say that $r > e$ if there is a path from $r'$ to $r$ in $R$. 
We also say that $e \sim r$ if $r$ is the head or the tail of $e$.
To define the treewidth of directed graphs, we first need to introduce arboreal decompositions.
\begin{definition}[Arboreal decomposition]
An \emph{arboreal decomposition} $\beta$ of a digraph $D$ is a triple $(R,\mathcal{X},\mathcal{W})$ where $R$ is an out-arborescence, $\mathcal{X} = \{X_e : e \in E(R)\}$, $\mathcal{W} = \{W_r : r \in V(R)\}$, and $\mathcal{X},\mathcal{W}$ are collections of sets of vertices of $D$ (called \emph{bags}) such that
\begin{enumerate}
\item[\textbf{(i)}] $\mathcal{W}$ is a partition of $V(D)$ into non-empty sets, and
\item[\textbf{(ii)}] if $e \in E(R)$, then $\bigcup\{W_r : r \in V(R) \text{ and } r > e\}$ is $X_e$-guarded.
\end{enumerate}
We also say that $r$ is a \emph{leaf} of $(R,\mathcal{X,W})$ if $r$ has out-degree zero in $R$.
\end{definition}
The left hand side of~\autoref{fig:arboreal_decomposition} contains an example of a digraph $D$, while the right hand side shows an arboreal decomposition for it.
In the illustration of the arboreal decomposition, squares are guards $X_e$ and circles are bags of vertices $W_r$.
For example, consider the edge $e \in E(R)$ with $X_e = \{b,c\}$ from the bag $W_1$ to the bag $W_2$.
Then $\bigcup\{W_r : r \in V(R) \text{ and } r > e\} = V(D) \setminus \{a\}$ and, by item \textbf{(ii)} described above, this set must be $\{b,c\}$-guarded since $X_e = \{b,c\}$.
In other words, there cannot be a walk in $D \setminus \{b,c\}$ starting and ending in $V(D) \setminus \{a\}$ using a vertex of $\{a\}$.
This is true in $D$ since every path reaching $\{a\}$ from the remaining of the graph must do so through vertices $b$ or $c$.
The reader is encouraged to verify the same properties for the other guards in the decomposition.
\begin{figure}[ht]
\centering
\begin{subfigure}{.35\textwidth}
\scalebox{.85}{\begin{tikzpicture}

\foreach \x/\y/\name/\lpos in {
	0/5/a/90,
	-1.5/4/b/90, 1.5/4/c/90}
	\node[blackvertex, label=\lpos:{$\name$},scale=.5] (P-\name) at (\x,1.2*\y) {$\name$};

\foreach \x/\y/\name/\lpos in {
	-2.25/3/d/120, -0.75/3/e/60, 0.75/3/f/120, 2.25/3/g/60}
	\node[blackvertex, label={[label distance= -.1cm]\lpos:{$\name$}},scale=.5] (P-\name) at (\x,1.2*\y) {$\name$};

\foreach \x/\y in {
a/b, a/c, b/c, b/d, b/e, c/b, c/f, c/g, d/b, e/b, f/c, g/c, d/e, f/g}
	\draw[edge, {Latex[length=2mm, width=2mm]}-{Latex[length=2mm, width=2mm]}, line width = 1, shorten >= .1cm, shorten <= .1cm] (P-\x) to (P-\y);

\end{tikzpicture}
\end{subfigure}\hspace{1cm}
\begin{subfigure}{.35\textwidth}
\scalebox{.75}{\begin{tikzpicture}
	\foreach \x/\y/\name/\idn/\lbl/\lblpos in {
	0/5/w1/a/{W_1}/180,
	0/2.5/w2/{b,c}/{W_2}/270,
	-2.5/2/w3/{d,e}/{W_3}/90, 2.5/2/w4/{f,g}/{W_4}/90}
	\node[vertex, text width=.75cm, align=center, label, label=\lblpos:{\large$\lbl$}] (P-\name) at (\x,\y) {\idn};

\draw[edge, line width = 1] (P-w1) to  node[midway, rectangle, fill=white,draw] {$b,c$}  (P-w2); 
\draw[edge, line width = 1] (P-w2) to  node[midway, rectangle, fill=white,draw] {$b$}  (P-w3); 
\draw[edge, line width = 1] (P-w2) to  node[midway, rectangle, fill=white,draw] {$c$}  (P-w4);

\end{tikzpicture}
\end{subfigure}
\caption{A digraph $D$ and an arboreal decomposition of $D$ of width two. A bidirectional arc is used to represent a pair of arcs in both directions.}
\label{fig:arboreal_decomposition}
\end{figure}
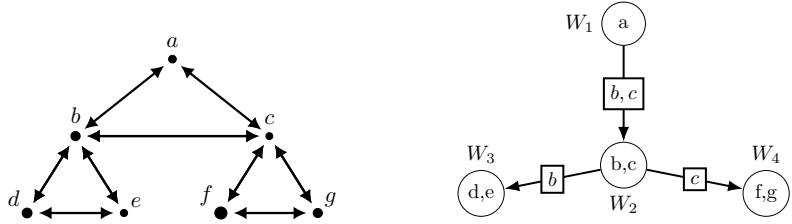

\begin{definition}[Directed treewidth]
Let $(R,\mathcal{X},\mathcal{W})$ be an arboreal decomposition of a digraph $D$.
For a vertex $r \in V(R)$, we denote by $\width(r)$ the size of the set $W_r \cup (\bigcup_{e \thicksim r}X_e)$.
The \emph{width} of $(R,\mathcal{X},\mathcal{W})$ is the least integer $k$ such that, for all $r \in V(R)$, $\width(r) \leq k+1$.
The \emph{directed treewidth}  of $D$, denoted by $\dtw(D)$, is the least integer $k$ such that $D$ has an arboreal decomposition of width $k$.
\end{definition}
\noindent We remark that DAGs have directed treewidth zero but they are \textsl{not} the only digraphs with directed treewidth zero if loops are allowed since the addition of loops to a digraph $D$ does not change the directed treewidth of $D$.

Johnson et al.~\cite{Johnson.Robertson.Seymour.Thomas.01} showed that the concept of directed treewidth is indeed a generalization of treewidth of undirected graphs.
\begin{proposition}[Johnson et al.~{\cite[2.1]{Johnson.Robertson.Seymour.Thomas.01}}]
\label{proposition:directed-tw-generalizes-undirected-tw}
Let $G$ be an undirected graph and $D$ the digraph obtained from $G$ by replacing every edge of $G$ with two arcs in opposite directions.
The the treewidth of $G$ is equal to the directed treewidth of $D$.
\end{proposition}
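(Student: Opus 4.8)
The plan is to prove the two inequalities $\dtw(D) \le \mathrm{tw}(G)$ and $\mathrm{tw}(G) \le \dtw(D)$ separately, using throughout the following translation between guardedness in $D$ and vertex separation in $G$, which is available precisely because every arc of $D$ is paired with its reverse. Concretely, for $X \subseteq V(G)$ and $S \subseteq V(G) \setminus X$, the set $S$ is $X$-guarded in $D$ if and only if no edge of $G$ has one endpoint in $S$ and the other in $V(G) \setminus (S \cup X)$: if such an edge $\{s,u\}$ existed, then the walk $s \to u \to s$ (using the two opposite arcs) would witness that $S$ is not $X$-guarded, and conversely, in the absence of such edges every walk of $D \setminus X$ starting in $S$ stays in $S$ and hence never returns through $V(D) \setminus (X \cup S)$. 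In words, $X$-guardedness of $S$ in $D$ is exactly the statement that $X$ separates $S$ from the rest of $G$, which is the separation property that governs tree decompositions.

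For the inequality $\dtw(D) \le \mathrm{tw}(G)$, I would start from a tree decomposition $(T, \{B_t\}_{t \in V(T)})$ of $G$ of width $\mathrm{tw}(G)$, root $T$ at an arbitrary node, and orient its edges away from the root to obtain an out-arborescence $R$. For each $v \in V(G)$ let $\tau(v)$ be the node closest to the root whose bag contains $v$ (well defined since the occurrences of $v$ form a subtree), set $W_r = \{v : \tau(v) = r\}$, and for each $e = (p,c) \in E(R)$ set $X_e = B_p \cap B_c$. Property \textbf{(i)} is clear, and property \textbf{(ii)} follows from the separation lemma: every vertex of $S_e := \bigcup_{r > e} W_r$ has all its occurrences inside the subtree rooted at $c$, so any $G$-edge leaving $S_e$ must, by the connectivity axiom of tree decompositions, place its other endpoint in $B_p \cap B_c = X_e$. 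Since $W_r \cup \bigcup_{e \sim r} X_e \subseteq B_r$ for every $r$, each $\width(r)$ is at most $|B_r| \le \mathrm{tw}(G) + 1$, giving an arboreal decomposition of width at most $\mathrm{tw}(G)$. The only point requiring care is that $\mathcal{W}$ must partition $V(D)$ into \emph{non-empty} sets; this is handled routinely by contracting every node $r$ with $W_r = \emptyset$ into its parent, which leaves all guards (hence the width) unchanged.

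For the reverse inequality $\mathrm{tw}(G) \le \dtw(D)$, I would take an arboreal decomposition $(R, \mathcal{X}, \mathcal{W})$ of $D$ of width $\dtw(D)$ and read off a candidate tree decomposition on the underlying tree of $R$ by setting $B_r = W_r \cup \bigcup_{e \sim r} X_e$, so that $\max_r |B_r| - 1 = \dtw(D)$. Vertex coverage is immediate from property \textbf{(i)}. Edge coverage holds for an arbitrary arboreal decomposition: given $\{u,v\} \in E(G)$ with homes $r_u, r_v$ and least common ancestor $r^\ast$ in $R$, the separation lemma applied to the guarded downsets along the two segments of the $r_u$--$r_v$ path forces $v$ into every guard on the $r_u$-to-$r^\ast$ segment and $u$ into every guard on the $r_v$-to-$r^\ast$ segment; in particular both $u$ and $v$ land in $B_{r^\ast}$.

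The main obstacle is the remaining tree-decomposition axiom, namely that the occurrences of each vertex form a \emph{connected} subtree of $R$. This can genuinely fail for the naive read-off: a vertex $v$ may appear in a guard on a branch hanging off some node $p$ (because a neighbour of $v$ is homed there) while the home of $v$ lies on another branch below $p$, with an intermediate node carrying neither, thereby disconnecting the occurrence set. I expect the crux of the proof to be a normalization of the arboreal decomposition that removes this phenomenon without increasing the width: first replace each guard $X_e$ by the (smaller) boundary guard $N_G(S_e)$, which is still a valid guard by the separation lemma and never increases any $\width(r)$, and then restructure $R$ so that, for every vertex, its home lies on the path from the root to each node that uses it as a guard. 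After this normalization the occurrences of each $v$ become exactly the nodes on the union of the root-paths joining the home of $v$ to the homes of its $G$-neighbours, which is connected, completing the construction. An alternative, if the duality machinery is available, is to bypass the read-off entirely and derive $\mathrm{tw}(G) \le \dtw(D)$ by transporting a bramble (equivalently, a haven) of $G$ of order $\mathrm{tw}(G) + 1$ into $D$, where it witnesses the same lower bound on directed treewidth.
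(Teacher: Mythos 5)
First, for calibration: the paper does not prove this proposition at all — it is imported verbatim from Johnson et al.\ (statement (2.1) of their paper) — so your argument has to stand entirely on its own. Your first direction, $\dtw(D) \le \mathrm{tw}(G)$, is essentially complete: the equivalence between $X$-guardedness of $S$ in $D$ and the absence of $G$-edges from $S$ to $V(G) \setminus (S \cup X)$ is exactly the right lever (and is available only because every arc is paired with its reverse), and the topmost-bag construction with $X_e = B_p \cap B_c$ verifies property \textbf{(ii)} and the width bound as you say. One small repair: contracting a node $r$ with $W_r = \emptyset$ into its parent $p$ does \emph{not} obviously preserve the width, since $p$ then inherits the child-edges of $r$ whose guards $B_r \cap B_{c'}$ need not lie inside $B_p$. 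It is cleaner to first normalize the tree decomposition so that no bag is contained in its parent's bag; then $B_r \setminus B_p \neq \emptyset$ consists of vertices whose topmost occurrence is $r$, so every $W_r$ is automatically non-empty.

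The genuine gap is in the direction $\mathrm{tw}(G) \le \dtw(D)$. You correctly diagnose that the naive read-off $B_r = W_r \cup \bigcup_{e \sim r} X_e$ can violate the connectivity axiom, and shrinking each guard to $N_G(S_e)$ is a valid, width-non-increasing first step. But the second step — ``restructure $R$ so that, for every vertex, its home lies on the path from the root to each node that uses it as a guard'' — is where all the remaining content of the proof lives, and you give neither a construction nor an argument that such a restructuring exists or preserves width. With guards $N_G(S_e)$, a vertex $v$ occurs as a guard precisely at the nodes on the path from $\mathrm{lca}(r_u,r_v)$ down to $r_u$, for each $G$-neighbour $u$ of $v$; so the property you need is exactly that for \emph{every} edge $\{u,v\}$ of $G$ one of the homes $r_u, r_v$ is an ancestor of the other in $R$. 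That is a strong, elimination-tree-like normalization of an arbitrary arboreal decomposition, and establishing it without increasing the width is not a routine cleanup — it is, in effect, the theorem. As written, this direction is unproved. Your fallback via havens does close it: a haven of order $k+1$ in $G$ is verbatim a haven of order $k+1$ in the bidirected $D$ (weak and strong components coincide there), Johnson et al.\ show a haven of order $w$ forces directed treewidth at least $w-1$, and the Seymour--Thomas duality produces the haven in $G$ from $\mathrm{tw}(G) \ge k$. I would promote that route from an afterthought to the actual proof, since the restructuring route is the one step you have not substantiated.
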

Thus, deciding if a digraph $D$ has directed treewidth at most $k$, for a given integer $k$, is {\NP}-complete, following the hardness of treewidth~\cite{doi:10.1137/0608024}.

Johnson et al.~\cite[3.3]{Johnson.Robertson.Seymour.Thomas.01} also gave an \XP with parameter $k$ that either produces an arboreal decomposition of width $3k+2$ of a given digraph $D$, or correctly decides that $\dtw(D) \geq k-1$, and a sketch of a proof of an \FPT algorithm with an approximation factor of $5k+10$ is given in~\cite[Theorem 9.4.4]{Classes.Directed.Graphs}.
We elect to state and use the algorithm by Campos et al.~\cite{doi:10.1137/21M1452664}, which has a better approximation ratio.
\begin{proposition}[Campos et al.~\cite{doi:10.1137/21M1452664}]
Let $D$ be a digraph and $k$ be a non-negative integer.
There is an algorithm running in time $2^{\Ocal(k \log k)}\cdot n^{\Ocal(1)}$ that either produces an arboreal decomposition of with $3k-2$ or correctly decides that $\dtw(D) \geq k-1$.
\end{proposition}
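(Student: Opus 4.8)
The plan is to construct the arboreal decomposition top-down by recursive splitting, exploiting the min-max duality between directed treewidth and the order of the largest obstruction (a \emph{haven}, i.e.\ a suitably monotone assignment of strong components to small vertex sets) established by Johnson et al.~\cite{Johnson.Robertson.Seymour.Thomas.01}. Recall from that duality that a haven of order $k$ certifies $\dtw(D)\geq k-1$. Hence the algorithm will be designed so that it either succeeds in building a decomposition of width $3k-2$, or, whenever a recursive step fails to make progress, it has implicitly exhibited a haven of order $k$ and can output the negative answer $\dtw(D)\geq k-1$.

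First I would fix the recursion. The algorithm maintains a region to be decomposed, namely a $w$-guarded set $S\subseteq V(D)$ together with its guard $Z$ of size $w=\Ocal(k)$ (cf.\ \autoref{def:guarded-sets}), the guard $Z$ playing the role of the set $X_e$ sitting on the arc of $R$ that enters the subtree currently under construction; the root is handled by taking $S=V(D)$ and $Z=\emptyset$. The invariant is that the union of the bags placed strictly below the current node equals $S$ and is $Z$-guarded, which is exactly what item~\textbf{(ii)} of an arboreal decomposition requires. The heart of the algorithm is a single \emph{splitting subroutine}: given $(S,Z)$, it either produces a set $Z'\subseteq S$ with $|Z\cup Z'|\leq 3k-1$ such that every strong component of $D[S]\setminus Z'$ meets $Z$ in at most a constant fraction of $|Z|$ vertices (a guard-balanced separation), or certifies a haven of order $k$. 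Given such a $Z'$, one creates a new node $r$ of $R$, places $Z'$ in the bag $W_r$, and recurses on each strong component of $D[S]\setminus Z'$ with the appropriate restriction of the guard. A standard potential argument on the total guard weight bounds the recursion depth and guarantees $\width(r)=|W_r\cup\bigcup_{e\sim r}X_e|\leq 3k-1$ at every node, which gives width $3k-2$.

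The main obstacle, and precisely where the improvement over the \XP algorithm of Johnson et al.~\cite{Johnson.Robertson.Seymour.Thomas.01} must come from, is implementing this splitting subroutine in \FPT rather than \XP time. The naive realization guesses $Z'$ among all $\binom{n}{\Ocal(k)}$ candidate sets, costing $n^{\Ocal(k)}$. To avoid this one needs a Menger-like (flow) duality for guarded separations: either a small balanced guard exists and can be extracted from a bounded number of maximum-flow computations between carefully chosen terminal pairs inside $S$, or all these flows are large and their routes can be assembled into a well-linked set of order $k$, which is exactly the haven certifying $\dtw(D)\geq k-1$. Branching over the $\Ocal(k)$-sized interface between the guard $Z$ and the candidate separator $Z'$, together with the flow computations, is what yields the running time $2^{\Ocal(k\log k)}\cdot n^{\Ocal(1)}$, the $\log k$ in the exponent arising from enumerating how the $\Ocal(k)$ boundary vertices are distributed among the children of the current node.

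The delicate point I expect to absorb most of the work is verifying that the object extracted upon failure of the flow routine is genuinely a haven, and not merely a large fractional or half-integral structure: one must argue that large guarded flows force the existence of a strongly-connected, well-linked obstruction that survives the removal of every set of size $<k$. This is exactly the refined Menger-type duality supplied by Campos et al.~\cite{doi:10.1137/21M1452664}, so in the proof I would isolate this as the single nontrivial ingredient and derive the approximation guarantee and the $2^{\Ocal(k\log k)}\cdot n^{\Ocal(1)}$ bound from it together with the recursive scheme above.
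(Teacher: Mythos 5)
This statement is not proved in the paper at all: it is an external result, quoted verbatim (with attribution) from Campos et al.~\cite{doi:10.1137/21M1452664} and used as a black box to obtain an approximate arboreal decomposition before running the dynamic programming. So there is no in-paper proof to compare your argument against; the only thing the authors do here is cite. Against that baseline, your write-up cannot stand as a proof, because its pivotal step is explicitly outsourced to the very reference whose theorem you are trying to establish: you say the ``single nontrivial ingredient'' --- the routine that, given a guarded set, either extracts a small balanced guard from boundedly many flow computations or assembles the failed flows into a haven of order $k$ --- is ``exactly the refined Menger-type duality supplied by Campos et al.'' That is the entire content of the proposition; everything else in your sketch (top-down recursion on guarded sets, haven duality certifying $\dtw(D)\geq k-1$, the potential argument on recursion depth) is the standard Johnson--Robertson--Seymour--Thomas architecture and was already known to give an \XP, not \FPT, algorithm.

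Two further points would need attention even if the splitting subroutine were granted. First, your width accounting only bounds $|Z\cup Z'|$, but $\width(r)=|W_r\cup\bigcup_{e\sim r}X_e|$ also includes the guards on the edges \emph{leaving} $r$ toward its children; you need to argue these outgoing guards are contained in $Z\cup Z'$ (or separately bounded) to conclude $\width(r)\leq 3k-1$, i.e.\ width $3k-2$. Second, the claim that a failed flow computation yields a genuine haven of order $k$ --- rather than a fractional or half-integral obstruction --- is asserted, not argued, and the $2^{\Ocal(k\log k)}$ bound on the branching is likewise only gestured at. Since the paper itself treats this proposition as imported, the appropriate ``proof'' in this context is simply the citation; a self-contained proof would require reproducing the actual separator-or-haven lemma of Campos et al., which your sketch presupposes rather than supplies.
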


\subsection{Classes of digraphs}
In this section we define the classes of digraphs which are used in our hardness and algorithmic proofs.

\medskip
\noindent\textbf{Antidirected paths.} We say that a digraph $H$ is an \emph{antidirected path} if $H$ is an orientation of a path in which the arcs alternate between forward and backwards arcs.

\medskip
\noindent\textbf{Expansions of digraphs and tournaments.}
A \emph{tournament} is an orientation of a complete graph.
A \emph{transitive tournament} is an acyclic tournament.
A usual technique to bound the maximum degree of a digraph $D$ is to replace the arcs leaving a vertex $u$ by a vertex-minimal out-arborescence with root $u$, maximum out-degree two, and having every vertex of $N^+_D(u)$ as leaves.
For our case, this construction is not sufficient since we not only need to bound the maximum degree, but also ensure that the constructed digraphs have bounded $w$-breakability and, in general, the $w$-breakability of digraphs resulting from these constructions is not guaranteed to be bounded.
To see this, consider a source vertex $v$ of transitive tournament $T$.
Thus the out-arborescence added to the expansion of $T$ when $v$ is processed is then very large and has many vertices which are pairwise unreachable.
To find such a set one can take, for example, all vertices within the same distance of the root of the out-arborescence.

Thus, we augment this construction in order to ensure that, in the expansion of transitive tournaments, for any pair of vertices $v,w$ in the generated digraph, there is a path from $v$ to $w$ or a path from $w$ to $v$\footnote{This property resembles the defining property of \emph{unilateral digraphs}, where for each pair of vertices $v,w$ there is either a path from $v$ to $w$ or a path from $w$ to $v$.} (see the proof of \autoref{lem:breakability-of-expanded-TTk}).

Let $D$ be a digraph without loops or parallel arcs.
We define the \emph{expansion} of $D$ as the digraph $D'$ built from $D$ through the following operations.
Start with $D'= D$.
In the first part of the construction, we iteratively process every vertex $u \in V(D)$ together with its out-neighborhood in $D$.
Namely, for every $u \in V(D)$, delete every arc leaving $u$ from $D'$ and add to $D'$ a vertex-minimal out-arborescence with root $u$, maximum out-degree two and having every vertex of $N^+_D(u)$ as leaves.
Notice that $V(D) \subseteq V(D')$.
We refer to $V(D)$ as the \emph{original vertices} of $D'$ and to all other vertices of $D'$ as \emph{internal vertices}.
Now, we subdivide once every arc from $u$ to an original vertex of $D'$ if any exists.
This ensures that there is at least one internal vertex in the path from $u$ to each $v \in N^+_D(u)$ in $D'$.
Then, we replace every arc between internal vertices of $D'$ added when processing $u$ by a pair of parallel arcs in both directions.
If $u$ is an original vertex of $D'$ with $\deg^+_{D'}(u) = 2$, then we add two arcs in both directions between the two out-neighbors of $u$ in $D'$.
Finally, add a loop in $D'$ to every original vertex.
This construction implies that $D'$ has maximum out-degree three: an internal vertex $y$ can have up to four neighbors if one of them is an original vertex appearing as an in-neighbor only, and each pair of internal vertices is linked by a pair of arcs.
We say that every internal vertex added so far is a \emph{type-1} internal vertex.

However, at this point the in-degree of an original vertex can be much larger than three.
Thus, in order to bound the maximum in-degree of $D'$, in the second part of the construction we repeat the procedure for every original vertex $u$ with more than two in-neighbors, but this time using in-arborescences. 
Namely, we replace $u$ and every arc entering $u$ by a vertex-minimal in-arborescence with root $u$, maximum in-degree two, and leaves $N^-_{D'}(u$) (notice that this set contains only type-1 internal vertices). 
We replicate the steps of the first part of this construction accordingly but with one key observation.
We say that every internal vertex added in this part of this construction is a \emph{type-2} internal vertex and we do \textsl{not} add a pair of arcs in both directions between type-1 and type-2 internal vertices.
This implies that every arc between a type-1 and a type-2 is oriented from the former to the latter.
This property is important to bound the directed treewidth of expansions of DAGs (see \autoref{lem:directed-treewidth-expansion-of-dags}).
In the end, this construction generates a digraph with maximum degree seven.
This is witnessed, for example, by a type-1 internal vertex which is an out-neighbor of an original vertex $u$ with $\deg^+_{D'}(u) = 2$.

A crucial property is that, for every pair of original vertices $u,v \in V(D')$, there is a path from $u$ to $v$ in $D'$ if and only if there is a path from $u$ to $v$ in $D$.
This is true since there are no arcs from type-2 internal vertices to type-1 internal vertices.

\medskip
\noindent\textbf{Stars and paths.}
For an integer $\ell > 0$, an \emph{$\ell$-undirected star} is the undirected graph with $\ell+1$ vertices where every vertex other than the \emph{center} is linked to it by an edge.
In other words, an $\ell$-undirected star is a tree with $\ell+1$ vertices and $\ell$ \emph{leaves}.
We denote by $S^+_\ell$ (respectively $S^-_\ell$) the digraph obtained by orienting the all the edges of an $\ell$-star away from (respectively towards) the center.
We may also refer to $S^+_\ell$ and $S^-_\ell$ as an \emph{$\ell$-out-star} and an \emph{$\ell$-in-star}, respectively.
An $\ell$-star is an orientation of an $\ell$-undirected star.
An \emph{$\ell$-homogeneous star} is a digraph which is isomorphic to either $S^+_\ell$ or $S^-_\ell$.
We extend the definition of \emph{leaves} to the directed case.
We may omit the integer $\ell$ from these notations whenever the size of the star is unimportant.

Given an integer $k \geq 1$, we say that a digraph $H$ is a \emph{$k$-stars-paths} digraph if $H$ is formed by
\begin{itemize}
  \item the disjoint union of $k$ stars $S_1, \ldots, S_k$ with disjoint sets of leaves and  centers $C = \{s_i \in V(S_i) \mid i \in [k]\}$, and
  \item a collection of internally vertex disjoint paths $\mathcal{P}$ starting and ending in vertices of $C$ while avoiding the leaves of all stars.
\end{itemize}
We may also refer to $H$ by the triple $(\mathcal{S}, C, \mathcal{P})$, where $\mathcal{S} = \{S_1, \ldots, S_k\}$,  and say that the order $\order(H)$ of $H$ is $k$.
In \autoref{subsection:rooted_star_paths_subdigraphs} we show an \XP algorithm for a rooted version of {\sc Subdigraph Isomorphism} when the target is a stars-paths digraph.
See \autoref{fig:stars-paths-digraph} for an example of a $4$-stars-paths digraph.

\begin{figure}[h]
\centering
\scalebox{1}{\begin{tikzpicture}
\foreach \i in {1,3}{
	\node[vertex, scale=1] (u\i) at (3*\i,0) {$s_\i$};
	\foreach \j in {-2,...,0} {
		\node[blackvertex, scale=.4] (l\j) at ($(u\i) + (\j/2, -.75)$) {};
		\draw[arrow] (u\i) to (l\j);
	}
	\foreach \j in {1,2} {
		\node[blackvertex, scale=.4] (l\j) at ($(u\i) + (\j/2, -.75)$) {};
		\draw[arrow] (l\j) to (u\i);
	}
}

\foreach \i in {2,4}{
	\node[vertex, scale=1] (u\i) at (3*\i,0) {$s_\i$};
	\foreach \j in {-2,...,0} {
		\node[blackvertex, scale=.4] (l\j) at ($(u\i) + (\j/2, -.75)$) {};
		\draw[arrow] (l\j) to (u\i);
	}
	\foreach \j in {1,2} {
		\node[blackvertex, scale=.4] (l\j) at ($(u\i) + (\j/2, -.75)$) {};
		\draw[arrow] (u\i) to (l\j);
	}
}
\draw[arrow] (u1) to [bend left=20]
node [blackvertex, pos=1/8, scale=.4] {} 
node [blackvertex, pos=2/8, scale=.4] {} 
node [blackvertex, pos=3/8, scale=.4] {} 
node [blackvertex, pos=4/8, scale=.4] {} 
node [blackvertex, pos=5/8, scale=.4] {} 
node [blackvertex, pos=6/8, scale=.4] {} 
node [blackvertex, pos=7/8, scale=.4] {} 
(u2);

\draw[arrow] (u2) to [bend left=20]
node [blackvertex, pos=1/5, scale=.4] {} 
node [blackvertex, pos=2/5, scale=.4] {} 
node [blackvertex, pos=3/5, scale=.4] {} 
node [blackvertex, pos=4/5, scale=.4] {} 
(u3);

\draw[arrow] (u3) to [bend left=20]
node [blackvertex, pos=1/6, scale=.4] {} 
node [blackvertex, pos=2/6, scale=.4] {} 
node [blackvertex, pos=3/6, scale=.4] {} 
node [blackvertex, pos=4/6, scale=.4] {} 
node [blackvertex, pos=5/6, scale=.4] {} 
(u2);
\draw[arrow] (u4) to [bend right=20]
node [blackvertex, pos=1/3, scale=.4] {} 
node [blackvertex, pos=2/3, scale=.4] {} 
(u3);

\draw[arrow] (u4) to ($(u4) + (0,.5)$) to
node [blackvertex, pos=1/15, scale=.4] {} 
node [blackvertex, pos=2/15, scale=.4] {} 
node [blackvertex, pos=3/15, scale=.4] {} 
node [blackvertex, pos=4/15, scale=.4] {} 
node [blackvertex, pos=5/15, scale=.4] {} 
node [blackvertex, pos=6/15, scale=.4] {} 
node [blackvertex, pos=7/15, scale=.4] {} 
node [blackvertex, pos=8/15, scale=.4] {} 
node [blackvertex, pos=9/15, scale=.4] {} 
node [blackvertex, pos=10/15, scale=.4] {} 
node [blackvertex, pos=11/15, scale=.4] {} 
node [blackvertex, pos=12/15, scale=.4] {} 
node [blackvertex, pos=13/15, scale=.4] {} 
node [blackvertex, pos=14/15, scale=.4] {} 
 ($(u1) + (0, .5)$) to (u1);

\end{tikzpicture}}%
\caption{Example of a $4$-stars-paths digraph with centers $\{s_1, s_2, s_3, s_4\}$.}
\label{fig:stars-paths-digraph}
\end{figure}
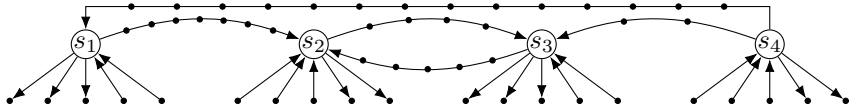

\medskip
\noindent\textbf{Caterpillars.}
An \emph{undirected caterpillar} is an undirected tree containing a path, called the \emph{spine}, using all non-leaf vertices.
We say that every vertex of degree at least three is a \emph{branching vertex} of the undirected caterpillar.
A (directed) caterpillar is an orientation of an undirected caterpillar such that the spine is a directed path and every branching vertex forms a homogeneous star with the leaves connected to it.

\subsection{Breakability}
A crucial notion in many algorithmic results using arboreal decompositions is that of the \emph{breakability} of a digraph $H$, which, to the best of our knowledge, had not been explicitly defined prior to our work. 
\begin{definition}[Breakability]\label{def:breakability}
Let $D$ be a digraph, $H \subseteq D$, and $w$ be a non-negative integer.
The \emph{$w$-breakability} of $H$, denoted by $\breakability_w(H)$, is the maximum number of weak components of $H[X]$ among all $w$-guarded sets of vertices  $X  \subseteq V(D)$.
We say that a class of digraphs $\mathcal{H}$ has \emph{bounded breakability} if there is a function $f(w)$ such that $\breakability_w(H') \leq f(w)$ for all $H' \in \mathcal{H}$.
\end{definition}
In other words, $\breakability_w(H)$ is one way to estimate how complex can be the interaction of $H$ with a $w$-guarded set when the goal is to obtain a dynamic programming algorithm using arboreal decompositions.
The reason is that in many algorithmic applications of arboreal decompositions (including ours and the ones by Johnson et al.~\cite{Johnson.Robertson.Seymour.Thomas.01} and Lopes and Sau~\cite{LopesS22}, for example), in order to exploit the property of the guards, it is necessary to guess how the digraph associated with a solution can enter and leave a $w$-guarded set.
Thus, if a digraph associated with a solution of a given problem $\Pi$ has unbounded breakability, we can understand this as an indication that a dynamic programming approach for $\Pi$ relying on arboreal decompositions is unlikely to be viable.
We ask whether it is reasonable to expect such algorithms when the breakability is bounded.
Sadly, the answer is no.
In \autoref{sec:hardness_results}, among other results, we show that {\sc Subdigraph Isomorphism} remains \NP-complete when the target digraph has both bounded breakability and maximum degree.

In \autoref{sec:algorithm}, we use the bound on the breakability of paths, which was proven by Johnson et al.~\cite{Johnson.Robertson.Seymour.Thomas.01}.
The proof follows almost immediately from \autoref{def:guarded-sets}: if $A$ is a $w$-guarded set, then any subpath of a path $P$ starting and ending in $A$ and intersecting $V(D) \setminus A$ must be intersected by a guard of $A$. 
\begin{proposition}[Johnson et al.~{\cite[4.5]{Johnson.Robertson.Seymour.Thomas.01}}]
\label{proposition:paths-breakability}
Let $D$ be a digraph formed by the disjoint union of $k \geq 1$ paths and $w$ be a non-negative integer.
Then $\breakability_w(D) \leq k+w$. 
\end{proposition}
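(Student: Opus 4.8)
The plan is to fix an arbitrary $w$-guarded set $X \subseteq V(D)$ together with a guard $Z$ with $|Z| \le w$ witnessing that $X$ is $Z$-guarded, and to bound the number of weak components of $D[X]$ directly. Since $D$ is the disjoint union of $k$ directed paths $P_1, \ldots, P_k$, the induced subdigraph $D[X]$ is the disjoint union of the $P_i[X \cap V(P_i)]$, so it suffices to control the number of components contributed by each path separately and then sum.

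First I would observe that, for a single directed path $P_i = v_1 \to \cdots \to v_m$, the induced subdigraph $P_i[X \cap V(P_i)]$ is a disjoint union of subpaths, namely the maximal runs of consecutive vertices of $P_i$ lying in $X$. If $X \cap V(P_i) = \emptyset$ this path contributes no component; otherwise, writing $c_i \ge 1$ for the number of such subpaths, there are exactly $c_i - 1$ \emph{internal gaps}, where a gap is a maximal block of consecutive vertices of $P_i$ lying in $V(D) \setminus X$ that is flanked along the path by vertices of $X$ on both sides.

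The key step is to charge each internal gap to a distinct vertex of $Z$. Consider a gap lying between the last vertex $v_a \in X$ of one subpath and the first vertex $v_b \in X$ of the next, with $a < b$ and $v_{a+1}, \ldots, v_{b-1} \in V(D) \setminus X$; note $b \ge a+2$, since otherwise $v_a$ and $v_b$ would be adjacent and in the same component. The subpath $v_a \to v_{a+1} \to \cdots \to v_b$ is a directed walk whose endpoints lie in $X$ and which visits the gap vertices in $V(D) \setminus X$. By the definition of a $Z$-guarded set, this walk cannot be a walk in $D \setminus Z$ with endpoints in $X$ through $V(D)\setminus(Z\cup X)$, so it must use a vertex of $Z$; since $v_a, v_b \in X \subseteq V(D) \setminus Z$, that vertex of $Z$ must be one of the interior vertices $v_{a+1}, \ldots, v_{b-1}$, i.e., it lies in the gap itself. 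Because the paths are vertex-disjoint and distinct gaps are pairwise disjoint vertex sets, each gap absorbs at least one \emph{distinct} vertex of $Z$, so the total number of internal gaps over all paths is at most $|Z| \le w$.

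Summing, the number of weak components of $D[X]$ equals $\sum_i c_i = k' + (\text{total number of internal gaps})$, where $k' \le k$ is the number of paths meeting $X$; this is at most $k + w$, as desired, and since $X$ was an arbitrary $w$-guarded set the bound on $\breakability_w(D)$ follows. The only point requiring care is the charging argument: one must verify both that a genuine gap always has at least one interior vertex (so that the guarded property forces a vertex of $Z$ into the gap rather than onto an endpoint) and that the interior vertices of distinct gaps are disjoint across all of $D$, which is precisely where the disjointness of the $P_i$ is used. Everything else is bookkeeping.
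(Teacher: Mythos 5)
Your proof is correct and takes essentially the same route as the paper, which only sketches the argument in one sentence (any subpath starting and ending in the $w$-guarded set and intersecting its complement must be hit by the guard) and attributes the details to Johnson et al.; your gap-charging formalizes exactly that observation, with the disjointness of gaps giving the $+w$ term and the number of paths meeting $X$ giving the $+k$ term. No gaps.
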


\section{\texorpdfstring{\XP}{XP} Algorithm for stars-paths subdigraphs}\label{sec:algorithm}
In this section we show our \XP algorithm for \textsc{\pname}, formally defined as follows.

\begin{boxproblem}[framed]{\pname (\pshort)}
\label{problem:main_problem_rooted_stars}
Input: & A digraph $D$, a stars-paths digraph $H = (\mathcal{S}, C, \mathcal{P})$  with $\mathcal{S} = \{S_1, \ldots, S_k\}$ and $C = \{s_i \in V(S_i) \mid i \in [k]\}$, a set $X \subseteq V(D)$, and a bijective mapping $f: C \to X$.\\
Output: & A subdigraph $H' \subseteq D$ such that $H' \cong H$ and\\
  & \lipItem{1.} for all $i \in [k]$, $f(s_i)$ is the center of a star $S'_i \subseteq H'$ with $S'_i \cong S_i$; and\\
  & \lipItem{2.} for all $P \in \mathcal{P}$ from $s_i$ to (not necessarily distinct) $s_j$ in $H$, there is a $P' \subseteq D$ such that $|V(P')| = |V(P)|$ and $P'$ is a path from $f(s_i)$ to $f(s_j)$ in $D$.
\end{boxproblem}

Notice that $H' \cong H$ implies that the paths mentioned in item \lipItem{2.} above are pairwise internally vertex disjoint.

Mainly, the algorithm is divided in two parts. 
In the first part of the algorithm, we show how to decide if the stars forming $H$ can be embedded in $D$ while respecting $f$.
We achieve this through  a system of inequations that tests if the leaves of the stars can be taken from the neighborhoods of the vertices in $\{f(v) \mid v \in C\}$.
Moreover, we add another ingredient to this system which is required in the second part: not only we test if the stars can be built, but also how many unused vertices we can keep from each $N_D(f(v))$ with $v \in C$.
We refer to this number as the \emph{slack} of $v$ and remark that, depending on how we choose the leaves for the stars in $D$, many different combinations of slacks are possible and all of them must be considered.
This part of the algorithm is \FPT in $k$ assuming that the slacks are given (since we are solving the rooted version of the problem and thus there is no guessing on where to find the centers of the stars in $D$) and is shown in \autoref{subsection:finding_the_stars_and_the_slacks}.

In possession of the slacks, we proceed to the second part of the algorithm.
Here we introduce a generalization of the classical \textsc{Directed Disjoint Paths} problem, which we call \textsc{\pddp} (\textsc{\pddpshort} for short) which is used to find a collection of pairwise disjoint paths linking the terminals such that the union of those paths avoids at least a given number $x_{f(v)}$ of vertices in the neighborhood of $f(v)$, for each $v \in C$.
For each $f(v)$, we include in the input of this problem a set $X_{f(v)}$ from the neighborhood of $f(v)$ in $D$ and a number $x_{f(v)}$ of vertices from this set that we want to avoid using when building the paths.
Our algorithm runs in \XP time parameterized by the number of paths, the number of subsets, and the directed treewidth of the input digraph.
We build on the framework by Johnson et al.~\cite{Johnson.Robertson.Seymour.Thomas.01} for \textsc{DDP} to \textsc{\pddpshort} to extend their result to \textsc{Directed Disjoint Paths} to our case (see also~\cite{LopesS22}).

In \autoref{subsection:rooted_star_paths_subdigraphs} we use those two ingredient to solve \textsc{\pname}.
In short, for each possible choice of the slacks we define a system of inequations for the stars in the first part.
For each system with a positive solution, we use the slacks to generate an instance of \textsc{\pddpshort} and solve it in \XP time.
If any of those instances is positive, we have found our desired subdigraph of $D$ and the algorithm terminates.
Otherwise, we correctly conclude that the given instance of \textsc{\pshort} is negative.
The \XP time comes not only from our algorithm for \textsc{\pddpshort}, but also from the fact that \textsl{all} possible choices of the slacks must be considered, and there are $n^{\Ocal(k)}$ possible choices.

\subsection{Finding the stars and the slacks}\label{subsection:finding_the_stars_and_the_slacks}
The goal of this section is to solve the following auxiliary problem.
\begin{boxproblem}[framed]{Rooted Stars Subdigraph Isomorphism}
\label{problem:rooted_stars_subdigraph_isomorphism}
Input: & A host digraph $D$, a digraph $H$ formed by the disjoint union of $k$ stars $\{S_1, \ldots, S_k\}$ with centers $C = \{s_1, \ldots, s_k\}$, a set $X \subseteq V(D)$, and a mapping $f: C \to X$.\\
Output: & A subdigraph $H' \subseteq D$ such that $H' \cong H$ and for all $i \in [k]$ the vertex $f(s_i)$ is the center of a star $S'_i \subseteq D$ with $S'_i \cong S_i$.
\end{boxproblem}
Notice that in this case we no longer require $f$ to be bijective.
Given an instance of this problem, we build a integer system of $f(k)$ linear inequalities whose solution is used to find the desired subdigraph $H' \subseteq D$.
Then, we formally define the \emph{slacks} which are needed for the second part of the algorithm and show how to add further inequalities to this system to obtain every feasible combination of the slacks.

\begin{theorem}\label{theorem:rooted-stars-isomorphism-is-FPT}
{\sc Rooted Stars Subdigraph Isomorphism} is \FPT with parameter $k$.
\end{theorem}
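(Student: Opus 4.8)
The plan is to fix the centers, reduce the remaining choice of leaves to an integer linear program (ILP) whose number of variables is bounded by a function of $k$, and then invoke Lenstra's algorithm~\cite{Lenstra83}, which decides ILP feasibility in time \FPT in the number of variables. First I would dispose of a trivial case: since $H$ is a disjoint union of stars and $H' \cong H$ forces all vertices of $H'$ to be pairwise distinct, a solution can exist only if $f$ is injective; hence if $f(s_i) = f(s_j)$ for some $i \neq j$ we reject immediately. From now on write $c_i := f(s_i)$ for the (distinct) prescribed centers, and let $\ell^+_i := \deg^+_{S_i}(s_i)$ and $\ell^-_i := \deg^-_{S_i}(s_i)$ be the numbers of out-leaves and in-leaves of $S_i$, so that embedding $S_i$ with center $c_i$ amounts to selecting $\ell^+_i$ distinct out-neighbours and $\ell^-_i$ distinct in-neighbours of $c_i$ in $D$, all pairwise distinct, distinct from the centers, and distinct from the leaves chosen for the other stars.

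The key observation is that two candidate leaves with the same adjacency pattern to the center set are completely interchangeable. Formally, for each vertex $v \in V(D)\setminus\{c_1,\dots,c_k\}$ I define its \emph{type} as the pair $(O_v,I_v)$ with $O_v = \{i \in [k] : v \in N^+_D(c_i)\}$ and $I_v = \{i \in [k] : v \in N^-_D(c_i)\}$; the vertices of a given type $(O,I)$ are precisely those eligible to be an out-leaf of $S_i$ exactly for $i \in O$ and an in-leaf of $S_i$ exactly for $i \in I$ (and may otherwise stay unused). There are at most $4^k$ types, and for each realised type $t=(O,I)$ I record the number $n_t$ of vertices of that type. I then introduce, for every type $t=(O,I)$, an integer variable $z^+_{t,i}$ for each $i \in O$ (how many type-$t$ vertices become out-leaves of $S_i$) and $z^-_{t,i}$ for each $i \in I$ (in-leaves of $S_i$), and impose
\begin{align*}
 \textstyle\sum_{i \in O} z^+_{t,i} + \sum_{i \in I} z^-_{t,i} &\le n_t && \text{for every type } t=(O,I),\\
 \textstyle\sum_{t=(O,I):\, i \in O} z^+_{t,i} &= \ell^+_i && \text{for every } i \in [k],\\
 \textstyle\sum_{t=(O,I):\, i \in I} z^-_{t,i} &= \ell^-_i && \text{for every } i \in [k],
\end{align*}
together with $z^{\pm}_{t,i} \ge 0$.

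It then remains to prove that this ILP is feasible if and only if the desired subdigraph $H'$ exists. The forward direction is immediate: any valid embedding yields a feasible integer point by setting each variable to the number of vertices of the corresponding type playing the corresponding role. For the converse — which I expect to be the only point requiring a (short) argument — I would invoke interchangeability: given a feasible point, process the types independently; since vertices of distinct types form disjoint sets and the first family of inequalities guarantees that the demand on type $t$ does not exceed $n_t$, I can assign to each role distinct vertices of the matching type and leave the rest unused. Because every vertex receives at most one role and the centers were excluded from all types, the chosen leaves are pairwise distinct and distinct from the centers, while the two equality families guarantee that each $S_i$ gets exactly $\ell^+_i$ out-leaves and $\ell^-_i$ in-leaves; hence the union of the $c_i$ with their assigned leaves is a copy of $H$ respecting $f$.

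Finally, the ILP has at most $4^k \cdot 2k$ variables and $\Ocal(4^k + k)$ constraints, all computable in time polynomial in $n$ for fixed $k$, so Lenstra's theorem~\cite{Lenstra83} decides feasibility — and, within the same running time, returns a witness from which $H'$ is reconstructed — in time \FPT in $k$, as claimed. I would close by noting that the same formulation is readily augmented with the \slack\ constraints needed in \autoref{subsection:rooted_star_paths_subdigraphs}: adding, for each center, one further (in)equality controlling how many of its candidate neighbours are left unused enumerates every feasible slack profile while keeping the number of variables bounded by a function of $k$.
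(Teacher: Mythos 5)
Your proposal follows essentially the same route as the paper: fix the centers, classify the candidate leaves by their adjacency pattern to the center set (so that leaves of the same ``type'' are interchangeable), encode the assignment as an integer linear program with a number of variables bounded by a function of $k$, and invoke Lenstra's algorithm~\cite{Lenstra83}. Two remarks on where you diverge. First, you keep non-homogeneous stars intact and carry separate out-leaf and in-leaf demands $\ell^+_i,\ell^-_i$ in the ILP, whereas the paper first splits each mixed star into an out-star and an in-star whose centers are both mapped to $f(s_i)$ (at most doubling $k$); your version is a little cleaner, and your exact-type definition $(O_v,I_v)$ also avoids a small imprecision in the paper, whose sets $X_J=\bigcap_{j\in J}N_j$ only partition $\bigcup_i N_i$ once one additionally subtracts $\bigcup_{j\notin J}N_j$. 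Second, and this is the one point you should reconsider: you reject any instance with non-injective $f$. The paper explicitly states that in {\sc Rooted Stars Subdigraph Isomorphism} the map $f$ need \emph{not} be bijective, precisely because the homogenization step sends the centers of the two stars replacing a mixed star to the \emph{same} vertex of $D$; under that intended semantics, instances with $f(s_i)=f(s_j)$ can be positive, so an algorithm that rejects them outright does not solve the problem as the paper uses it in \autoref{subsection:rooted_star_paths_subdigraphs}. Your direct treatment of mixed stars makes the non-injective case unnecessary for the overall \pshort\ algorithm, but as a proof of the theorem about the problem as stated you should either handle coinciding centers in the ILP (which your type-based formulation can do with no extra cost, by letting the capacity constraint pool the roles of all stars centered at the same vertex) or argue the rejection from the problem definition rather than asserting it from $H'\cong H$. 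The remainder --- the one-role-per-vertex capacity constraint, the exclusion of centers from the types, the interchangeability argument for the converse, and the slack extension --- matches the paper's proof and \autoref{corollary:rooted-stars-subdigraph-and-slacks}.
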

\begin{proof}
Let $D$, $H$, $X$, $C$, $f$, and the stars $\{S_1, \ldots, S_k\}$ be given as in the definition of the problem.
First, notice that by increasing the number of stars of $H$ to at most $2k$ we can assume that every star in $\mathcal{S}$ is homogeneous.
Indeed, if, for instance, some $S_i$ is not homogeneous, we make a copy of center $s_i$ of $S_i$ and, instead of $S_i$, we search for one out-star with as many leaves as $\deg^+_{S_i}(s_i)$ and for one in-star with as many leaves as $\deg^-_{S_i}(s_i)$.
Then we extend $f$ to map the centers of those two new stars to the vertex $f(s_i)$, and remove $s_i$ from the domain of $f$.

Assume now that every $S_i$ is a homogeneous star.
For $i \in [k]$, let $s'_i = f(s_i)$, let $\ell_i$ be the number of leaves of $S_i$, and 
\[ N_i = 
\begin{cases}
N^-_D(s'_i), & \text{if } S_i \text{ is an in-star},\\
N^+_D(s'_i), & \text{if } S_i \text{ is an out-star}.\\
\end{cases}
\]
Notice that $s'_i \in V(D)$.
For $J \in 2^{[k]} \setminus \{\emptyset\}$, let $X_J = \bigcap_{j \in J} N_J$.
In other words, $X_J$ contains vertices which are neighbors of \emph{exactly} the vertices $s'_j$, for $j \in J$, and that are candidates for the leaves of stars $S'_j$. Note that $\{X_J \mid J \in 2^{[k]}\}$ partitions $\bigcup_{i \in [k]} N_i$.
Finally, for for all $i \in [k]$ and $J \in 2^{[k]}\setminus \{\emptyset\}$ we create an integer variable $t_i(J)$ corresponding to the number of vertices taken from $X_J$ for the leaves of the star $S'_i \subseteq D$ we want to build.

From the point of view of the stars $S'_i$, the goal is that $\ell_i$ vertices of $N_i$ are taken for its leaves.
Thus, for a fixed choice of $i$, we want that
\[\sum_{J : i \in J} t_i(J) = \ell_i.\]
The restriction is simply that each star can only take from $X_J$ as many vertices as there are in the set. 
For a fixed choice of $J$, we write this as
\[\sum_{i \in J}t_i(J) \leq |X_J|.\]
Adding all together, we get the following system of $2^k - 1$ variables:
\begin{align}\label{eq:first-system-no-slacks}
\sum_{J : i \in J} t_i(J) = \ell_i, & \hfill & \forall i \in [k].\\
\sum_{i \in J}t_i(J) \leq |X_J|, & \hfill & \forall J \in 2^{[k]} \setminus \{\emptyset\}.
\end{align}
A solution for this system gives us viable choices for the values of $t_i(J)$ which, in turn, tells us how many vertices of $X_J$ should be taken for the leaves of $S'_i$, and (1) above ensures that $\ell_i$ vertices can be taken.
Since $\{X_J \mid J \in 2^{[k]}\}$ partitions $\bigcup_{i \in [k]} N_i$, no vertex is taken more than once.
Since an integer system of linear inequations can be solved in \FPT time parameterized by the number of variables~\cite{Lenstra83}, the result follows.
\end{proof}

For $i \in [k]$, we denote by $\slack(i)$ the number of vertices in the in- or out-neighborhood of $f(s_i)$, depending on whether $S_i$ is an in- or out-star (we can assume every star is homogeneous as in the proof of \autoref{theorem:rooted-stars-isomorphism-is-FPT}), that we want to \textsl{avoid} using as leaves of the stars.
Informally, the goal is to provide some leeway in how the paths are constructed in the second part of the algorithm for \textsc{\pname}: if the paths are constructed in such way that they avoid at least $\slack(i)$ vertices in the neighborhood of $f(s_i)$ and a solution for the system in the proof of \autoref{theorem:rooted-stars-isomorphism-is-FPT} respecting the slacks exists, then one can both route the paths and build the stars to find the desired stars-paths subdigraph.
In what follows, we formalize this intuition.

If $\slack(i)$ is given for $i \in [k]$, we can test whether we can build the stars while respecting the slacks in an instance of \textsc{Rooted Stars Subdigraph Isomorphism} by adding one more set of $k$ inequations to the system formed by (1) and (2) above.
Adopting the notation used in the proof, we want that
\begin{align}\label{eq:slacks-inequation}
\sum_{J : i \in J}\left(\sum_{a \in J}t_a(J)\right) \leq |N_i| - \slack(i), & \hfill & \forall i \in [k].
\end{align}
The interpretation of \autoref{eq:slacks-inequation} is as follows.
For all $i \in [k]$ we need to ensure that at most $|N_i| - \slack(i)$ are taken from $N_i$.
To write this using the variables of the form $t_a(J)$, we look at every set $X_J$ with $i \in J$ (the outermost summation) and then count how many vertices of this set were taken for the leaves of stars that can actually use those vertices (the innermost summation).
We say that an instance of {\sc Rooted Stars Subdigraph Isomorphism} \emph{respects} the slacks if there is a solution for the instance satisfying \autoref{eq:slacks-inequation}.
From this analysis, we obtain the following corollary.
\begin{corollary}\label{corollary:rooted-stars-subdigraph-and-slacks}
Given an instance $\mathcal{I}$ of {\sc Rooted Stars Subdigraph Isomorphism}  and $\slack(i)$ for every $i \in [k]$, one can decide if there is a solution for $\mathcal{I}$ respecting the slacks in \FPT time parameterized by $k$.
\end{corollary}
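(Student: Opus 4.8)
The plan is to reduce the feasibility question to a single integer linear program and to invoke the algorithm of Lenstra~\cite{Lenstra83} once more, in direct analogy with the proof of \autoref{theorem:rooted-stars-isomorphism-is-FPT}. First I would recall the system formed by~(1) and~(2) in that proof, whose variables are the integers $t_i(J)$ ranging over $i \in [k]$ and $J \in 2^{[k]} \setminus \{\emptyset\}$ with $i \in J$; their number is $\sum_{J \neq \emptyset}|J| = k \cdot 2^{k-1}$, hence bounded by a function of $k$ alone. As already observed just before the statement, asking for a solution that respects the slacks amounts to imposing, on top of this system, the $k$ further linear inequalities of \autoref{eq:slacks-inequation}, whose right-hand sides $|N_i| - \slack(i)$ are concrete integers computable from the input.

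Next I would assemble the augmented system consisting of~(1), (2), and \autoref{eq:slacks-inequation}. This is again an integer linear program with integer coefficients, and its number of variables is unchanged, hence still bounded by a function of $k$. By the very definition of ``respecting the slacks'', the instance $\mathcal{I}$ admits such a solution if and only if this augmented system is feasible. Applying Lenstra's algorithm~\cite{Lenstra83}, which decides feasibility of an integer linear program in \FPT time parameterized by the number of variables, then yields the claimed running time.

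I do not expect a genuine obstacle here, as the statement is a direct corollary of the machinery already developed; the only point requiring care is the correctness of the encoding. Concretely, one should verify that a feasible integer point of the augmented system corresponds exactly to a choice of leaves that simultaneously realizes each homogeneous star $S'_i$ inside $N_i$ and leaves at least $\slack(i)$ vertices of $N_i$ unused, and conversely. Both directions rest on the fact that $\{X_J \mid J \in 2^{[k]}\}$ partitions $\bigcup_{i \in [k]} N_i$: the quantity $\sum_{a \in J} t_a(J)$ counts precisely the vertices consumed from $X_J$, so that $\sum_{J : i \in J}\sum_{a \in J} t_a(J)$ counts precisely the vertices of $N_i$ consumed by the stars, which is exactly what \autoref{eq:slacks-inequation} constrains. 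Finally, I would note that the reduction to homogeneous stars (replacing a non-homogeneous $S_i$ by an in-star and an out-star, as in the proof of \autoref{theorem:rooted-stars-isomorphism-is-FPT}) at most doubles the number of stars, so the parameter changes from $k$ to at most $2k$ and the \FPT bound in $k$ is preserved.
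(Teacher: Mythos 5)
Your proposal is correct and follows essentially the same route as the paper: the paper also obtains the corollary by appending the $k$ slack inequalities of \autoref{eq:slacks-inequation} to the system (1)--(2) and invoking Lenstra's algorithm, the number of variables remaining bounded by a function of $k$. The extra correctness discussion (via the partition $\{X_J\}$) and the remark on the homogeneous-star reduction match what the paper establishes in and around \autoref{theorem:rooted-stars-isomorphism-is-FPT}.
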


\subsection{Routing the paths}\label{subsection:routing_the_paths}
In this section we formally define and solve the \textsc{\pddp} problem.
For convenience, we adopt the following notation.
\begin{definition}[Requests]
\label{definition:requests}
Let $D$ be a digraph.
A \emph{request} in $D$ is an ordered pair of vertices of $D$.
If $R$ is a request set $\{(s_1, t_1), \ldots, (s_k, t_k)\}$ and $\mathcal{P}$ is a collection of paths $\{P_1, \ldots, P_k\}$ such that, for $i \in [k]$, $P_i$ is a path from $s_i$ to $t_i$ in $D$, we say that $\mathcal{P}$ \emph{satisfies} $R$.
\end{definition}
We say that a request set $R$ in $D$ is \emph{contained} in set $A \subseteq V(D)$ if every vertex occurring in a pair of $R$ is in $A$. 
We also adopt standard notation for arrays of integers: if $\xarray \in [n]^k$, for $i \in [k]$ we denote by $\xarray[i]$ the value of the $i$-th index of $\xarray$.

\begin{boxproblem}[framed]{\pddp (\pddpshort)}
\label{problem:subset_ddp}
Input: & A digraph $D$, a request set $R$ of size $r \geq 1$, subsets $X_1, \ldots, X_k$ of $V(D)$, non-negative integers $x_1, \ldots, x_k$ and $p_1, \ldots, p_r$\\
Output: & A set of pairwise internally disjoint paths $\mathcal{P} =  \{P_1, \ldots, P_k\}$ satisfying $R$ such that the union of the paths in $\mathcal{P}$ contains at most $x_i$ vertices of $X_i$ for $i \in [k]$, and $|V(P_i)| = p_i$ for $i \in [r]$.
\end{boxproblem}
\noindent For the remainder of this subsection, we assume that $\mathcal{I}$ is an instance of \textsc{\pddpshort} with input digraph $D$, request set $R = \{(s_1, t_1), \ldots, (s_k, t_k)\}$, subsets $X_1, \ldots, X_k \subseteq V(D)$, and non-negative integers $x_1, \ldots, x_k$.
We denote by $w$ the directed treewidth of $D$ and, for convenience, we may define instances of \textsc{\pddpshort} using arrays of integers instead of explicitly writing $x_1, \ldots, x_k$ and $p_1, \ldots, p_k$.
The sizes of the paths is important since in the goal is to find a subdigraph of a given digraph.

We apply the framework by Johnson et al.\cite[Section 4]{Johnson.Robertson.Seymour.Thomas.01} to provide an \XP algorithm with parameters $r$, $k$, and $\dtw(D)$ for \textsc{\pddpshort}.
Following their notation, in our dynamic programming scheme we refer to the information that we want to compute at each step of the algorithm as the \emph{itinerary}, which we formally define later.
Thus, the goal is to compute an itinerary for $V(D)$.
The framework introduced in~\cite{Johnson.Robertson.Seymour.Thomas.01} gives two conditions that, together, suffice to provide an \XP algorithm for the problem.

\begin{condition}[Johnson et al.~{\cite[Axiom 1]{Johnson.Robertson.Seymour.Thomas.01}}]\label{condition:big-big}
Let $D$ be a digraph, $w$ and $\alpha$ be non-negative integers, and $A,B$ be two disjoint $w$-guarded subsets of $V(D)$ such that there are no arcs from $B$ to $A$ in $D$.
Then an itinerary for $A \cup B$ can be computed from an itinerary for $A$ and an itinerary for $B$ in time $\Ocal(n^{\alpha})$.
\end{condition}

\begin{condition}[Johnson et al.~{\cite[Axiom 2]{Johnson.Robertson.Seymour.Thomas.01}}]
\label{condition:big-small}
Let $D$ be a digraph, $w$ and $\alpha$ be a non-negative integers, and $A,B$ be two disjoint subsets of $V(D)$ such that $A$ is $w$-guarded and $|B| \leq w$.
Then an itinerary for $A \cup B$ can be computed from an itinerary for $A$ in time $\Ocal(n^{\alpha})$.
\end{condition}

\begin{theorem}[Johnson et al.~{\cite[4.4]{Johnson.Robertson.Seymour.Thomas.01}}]\label{theorem:algo-from-conditions}
Provided that \cref{condition:big-big,condition:big-small} hold, there is an algorithm running in time $\Ocal(n^{\alpha+1})$ that receives as input a digraph $D$ and an arboreal decomposition of $D$ with width at most $w$ and outputs an itinerary for $V(D)$.
\end{theorem}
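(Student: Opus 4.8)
The plan is to compute an itinerary for $V(D)$ by bottom-up dynamic programming over the given arboreal decomposition $(R,\mathcal{X},\mathcal{W})$. I root $R$ at its root $r_0$ and process the nodes of $R$ in post-order, i.e.\ each node after all of its descendants. For a node $r \in V(R)$, let $U_r$ denote the union of the bags $W_{r'}$ over all $r'$ in the subtree of $R$ rooted at $r$. I maintain the invariant that, once $r$ has been processed, an itinerary for $U_r$ is available. Since $\mathcal{W}$ partitions $V(D)$, we have $U_{r_0}=V(D)$, so finishing $r_0$ yields the desired output. Two bookkeeping facts will be used throughout: because $\mathcal{W}$ is a partition into \emph{non-empty} sets, $|V(R)| \le n$; and because the decomposition has width at most $w$, writing $G_r := \bigcup_{e \sim r} X_e$ gives $|W_r \cup G_r| \le w+1$, whence $|G_r| \le w$ and $|W_r| \le w+1$ (using $|W_r|\ge 1$).

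Consider a node $r$ with children $c_1,\dots,c_d$ joined to $r$ by the edges $e_1,\dots,e_d$, and let $e_r$ be the parent edge of $r$. By induction I have itineraries for the pairwise disjoint sets $U_{c_1},\dots,U_{c_d}$, and the only genuinely new vertices to incorporate are those of $W_r$. By item~\textbf{(ii)} of the definition of an arboreal decomposition, $U_{c_j}$ is $X_{e_j}$-guarded and $U_r$ is $X_{e_r}$-guarded; since all these guards lie in $G_r$ (of size $\le w$), each $U_{c_j}$ is $w$-guarded. I would first merge the children one at a time using repeated applications of \cref{condition:big-big}, and then incorporate $W_r$ using one application of \cref{condition:big-small} (splitting into two applications in the extreme case $|W_r|=w+1$). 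A leaf $r$ is the special case $d=0$: there I build an itinerary for $U_r=W_r$ from the trivial itinerary for $\emptyset$ via \cref{condition:big-small}.

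The delicate point, which I expect to be the main obstacle, is verifying at each merge the precondition of \cref{condition:big-big} that there are no arcs from the set being added into the set already built, while keeping every intermediate set $w$-guarded. Here I would exploit normality: working in $D \setminus G_r$, the $G_r$-guardedness of the $U_{c_j}$ forces that between two distinct children arcs can run in at most one direction, and that no directed cycle passes through several children, since a two-way connection or such a cycle would produce a walk in $D \setminus G_r$ leaving and returning to some $U_{c_j}$, contradicting \autoref{def:guarded-sets}. Hence the children admit a topological ordering $U_{c_{\sigma(1)}},\dots,U_{c_{\sigma(d)}}$ in which every arc of $D \setminus G_r$ between children goes from a smaller to a larger index; merging in this order, with the set $B$ being the next child, leaves $B$ with no arc into the already-built prefix, so \cref{condition:big-big} applies. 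The subtle part is the bag $W_r$, whose vertices do not belong to the edge guards: to keep the merged prefixes genuinely $w$-guarded I would fold $W_r$ into the separating set (of size at most $w+1$) and absorb the resulting off-by-one by splitting the \cref{condition:big-small} additions, so that the true ``$w$-guarded'' hypothesis of both conditions is never violated.

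Finally, for the running time, the key observation is that guard vertices are never inserted as new vertices: every vertex of $D$, guards included, lies in exactly one bag of $\mathcal{W}$ and is added exactly once, when that bag is processed, while the sets $X_e$ serve only to certify guardedness. Thus over the whole execution there are at most $|E(R)| \le n-1$ applications of \cref{condition:big-big} (one per tree edge) and $\Ocal(|V(R)|)=\Ocal(n)$ applications of \cref{condition:big-small} (a constant number per node for its bag). Each application costs $\Ocal(n^{\alpha})$ by the two conditions, so the total running time is $\Ocal(n)\cdot\Ocal(n^{\alpha}) = \Ocal(n^{\alpha+1})$, as claimed.
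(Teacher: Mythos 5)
First, a remark on scope: the paper does not actually prove this statement; it is imported from Johnson et al.\ with only a citation, so there is no in-paper argument to compare yours against. Your high-level plan --- bottom-up dynamic programming on $R$, merging the children's vertex sets with \cref{condition:big-big} and absorbing each bag (and stray guard vertices) with \cref{condition:big-small}, with the $\Ocal(n)\cdot\Ocal(n^{\alpha})$ accounting --- is indeed the intended strategy, and the running-time analysis is fine.

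The genuine gap is the structural claim you use to order the children. It is \emph{not} true that guardedness forces arcs between two distinct children to run in only one direction, even in $D \setminus G_r$. Two opposite arcs $a\to b$ and $b'\to a'$ with $a,a'\in U_{c_i}$ and $b,b'\in U_{c_j}$ only produce a walk that leaves and re-enters $U_{c_i}$ if $b$ can reach $b'$, and the $X_{e_j}$-guardedness of $U_{c_j}$ says nothing about reachability \emph{inside} $U_{c_j}$; so no contradiction with \autoref{def:guarded-sets} arises. Concretely, let $V(D)=A_1\cup A_2\cup B_1\cup B_2\cup\{c\}$ with every arc from $A_1$ to $B_1$, every arc from $B_2$ to $A_2$, and no other arcs. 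Then $A=A_1\cup A_2$, $B=B_1\cup B_2$, and indeed every subset of $A$ and of $B$, are $\emptyset$-guarded (any walk leaving $A$ or $B$ immediately dies), so $D$ has a width-$0$ arboreal decomposition with root bag $\{c\}$ and two child subtrees carrying $A$ and $B$ in singleton bags with empty guards; yet arcs run in both directions between $A$ and $B$, so \cref{condition:big-big} applies to the pair $(A,B)$ in neither order, and both sets can be made arbitrarily large, so \cref{condition:big-small} does not help. (An itinerary for $V(D)$ is still computable, but only by interleaving $A_1,B_1,B_2,A_2$, i.e.\ by splitting the children into finer pieces for which your induction supplies no itineraries.) A secondary, unresolved issue of the same flavour: even where your ordering exists, it only excludes backward arcs in $D\setminus G_r$, whereas \cref{condition:big-big} requires no arcs from $B$ to $A$ in $D$ itself, and a guard vertex of one child may lie inside another child's set and carry such an arc; your ``fold $W_r$ into the separating set'' remark treats only $W_r$, not these. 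Repairing the argument needs a genuinely finer decomposition of the children at the merge step (or a stronger inductive invariant), not merely a reordering.
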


The strategy is to exploit the breakability of paths to show that \cref{condition:big-small,condition:big-big} are satisfied.
We remind the reader of \autoref{proposition:paths-breakability}.
In short, it implies that, when given a $w$-guarded set $A$, in our algorithm we can pay the cost of guessing how the paths of a solution $\mathcal{P}$ for $\mathcal{I}$ interact with $A$.
We now formally define an itinerary for \textsc{\pddpshort}. 

\begin{definition}[Itinerary]
For $A \subseteq V(D)$ let $\mathcal{R}_A$ be the set of all request sets of size at most $k+w$ on $D$ which are contained in $A$.
An \emph{$(\mathcal{I},w)$-itinerary} for $A$ is a function $f_A: \mathcal{R}_A \times [n]^{k} \times [n]^k \to \{0,1\}$ such that $f_A(L, \xarray, \parray) = 1$ if and only if
\begin{enumerate}
  \item for $i \in [k]$ it holds that $\xarray[i] \leq x_i$; and
  \item the instance $(D[A], L, \{X_1 \cap A, \ldots, X_k \cap A\}, \{\xarray[i] \mid i \in [k]\}, \{\parray[i] \mid i \in [k])\}$ of \textsc{\pddpshort} is positive.
\end{enumerate}
\end{definition}

In other words, $f_A(L, \xarray, \parray) = 1$ implies that, in $D[A]$, we can find a set of paths satisfying the request set $L$ avoiding at least $\xarray[i]$ vertices of $X_i \cap A$ for all $i \in [k]$ and with the desired sizes of the paths.
Thus $\mathcal{I}$ is positive if and only if $f_{V(D)}(R, \xarray', \parray') = 1$ with $\xarray'[i] = x_i$ and $\parray'[i] = p_i$ for $i \in [k]$.
We remark that we do not change the sets $X_1, \ldots, X_k$ throughout the algorithm.
Indeed this is needed as enumerating all possible subsets of $X_i$ that we can try to avoid when constructing the paths inside of $A$ is too costly.
Thus, we maintain these sets and, for each $A$, respond the following question: given a request set contained in $A$, is it possible to find the paths satisfying the requests while avoiding $\xarray[i]$ vertices of $X_i \cap A$?
Computing this answer is within our means since there are $\Ocal(n^{k+w})$ request sets contained in $A$ and $\Ocal(n^k)$ choices for $\xarray$ and $\parray$.
We are now ready to prove that \cref{condition:big-big,condition:big-small} hold for \textsc{\pddpshort}.

\begin{lemma}\label{lemma:condition_1_holds}
Let $A,B$ be disjoint and $w$-guarded subsets of $V(D)$ such that there no arcs from $B$ to $A$ in $D$.
Then an $(\mathcal{I}, w)$-itinerary for $A \cup B$ can be computed from itineraries for $A$ and $B$ in time $\Ocal(n^{4(r+w)+3k+2})$.
\end{lemma}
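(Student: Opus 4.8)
The plan is to combine the two sub-itineraries by enumerating, for each request set $L \in \mathcal{R}_{A \cup B}$ and each pair of arrays $(\xarray, \parray)$, all the ways in which a candidate solution for $A \cup B$ can be split into a part living in $A$ and a part living in $B$. The central structural fact I would invoke is \autoref{proposition:paths-breakability}: each of the $r$ solution paths, when restricted to the $w$-guarded set $A$, breaks into at most $r+w$ subpaths, and likewise for $B$. Because there are no arcs from $B$ to $A$, once a path leaves $A$ into $B$ it can never return to $A$; this one-directional flow is exactly what makes the merge tractable, since it severely limits how the fragments in $A$ and $B$ must be stitched together along the arcs crossing from $A$ to $B$.

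\textbf{Key steps.} First, I would fix $L \in \mathcal{R}_{A \cup B}$, $\xarray$, and $\parray$, and check the trivial condition $\xarray[i] \le x_i$ for all $i \in [k]$; if it fails we set the output to $0$. Next, I would guess the \emph{interface data}: for each of the (at most $r+w$) fragments produced in $A$ and $B$, I record its two endpoints. Since a fragment endpoint that is not an original terminal must sit adjacent to a crossing arc, these endpoints lie in the guards of $A$ and $B$ together with the terminals of $L$, so there are only $\Ocal(n^{r+w})$ choices for the endpoints of the $A$-fragments and $\Ocal(n^{r+w})$ for the $B$-fragments. I would then form a derived request set $L_A$ contained in $A$ (the fragments that must be realized inside $A$) and $L_B$ contained in $B$, each of size at most $r+w \le k+w$, so that $L_A \in \mathcal{R}_A$ and $L_B \in \mathcal{R}_B$ as required. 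I would split the budget array $\xarray = \xarray_A + \xarray_B$ componentwise over $\Ocal(n^k)$ choices (using $X_i \cap (A \cup B) = (X_i \cap A) \sqcup (X_i \cap B)$ to see the $X_i$-counts add), and similarly split the length array so that the fragment lengths in $A$ and in $B$ sum, along each reconstructed path, to $\parray[i]$; this again costs $\Ocal(n^k)$. Finally I would set $f_{A \cup B}(L, \xarray, \parray) = 1$ precisely when there exists one such consistent splitting with $f_A(L_A, \xarray_A, \parray_A) = f_B(L_B, \xarray_B, \parray_B) = 1$ and the fragments glue into genuine $L$-paths that are pairwise internally disjoint.

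\textbf{The main obstacle} I expect is the bookkeeping needed to certify that the two collections of fragments really reassemble into pairwise internally disjoint paths satisfying $L$, rather than into spurious walks or into fragments that accidentally share an internal vertex across the $A$/$B$ boundary. Here the no-arcs-from-$B$-to-$A$ hypothesis does the heavy lifting: it forces every reconstructed path to consist of an initial (possibly empty) run of $A$-fragments followed by a run of $B$-fragments, glued along crossing arcs, so I only need to match the tail endpoint of each $A$-fragment to the head endpoint of a $B$-fragment along an existing arc of $D$, and verify internal disjointness within $A$ and within $B$ separately (which the sub-itineraries already guarantee) plus disjointness of the interface vertices (which I check explicitly at the merge). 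Disjointness across the boundary is then automatic because $A$ and $B$ are disjoint as vertex sets.

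\textbf{Running-time accounting.} Iterating over all $L \in \mathcal{R}_{A \cup B}$ costs $\Ocal(n^{k+w})$, the two endpoint guesses contribute $\Ocal(n^{2(r+w)})$, and the budget and length splittings contribute $\Ocal(n^{2k})$; multiplying and adding the $\Ocal(n^{\alpha})$ allowed overhead per entry for the consistency checks yields the stated bound $\Ocal(n^{4(r+w)+3k+2})$, confirming that \cref{condition:big-big} holds with an appropriate constant $\alpha$.
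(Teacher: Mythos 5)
Your proposal is correct and follows essentially the same route as the paper: split each crossing request into an $A$-part and a $B$-part joined by a guessed crossing arc, split the budget and length arrays componentwise, and query the two sub-itineraries, with the same $\Ocal(n^{2(r+w)})$ cost for guessing interface endpoints and $\Ocal(n^{k})$-type costs for the array splittings. The only (harmless) difference is that you invoke the breakability bound of \autoref{proposition:paths-breakability}, which is not needed here — the absence of arcs from $B$ to $A$ already forces each solution path to have at most \emph{one} fragment in $A$ followed by at most one in $B$, which is exactly what the paper uses (breakability only becomes necessary in the companion lemma for \cref{condition:big-small}).
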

\begin{proof}
Let $f_A$ and $f_B$ be $(\mathcal{I}, w)$-itineraries for $A$ and $B$, respectively, and let $L = \{(s_1, t_1)$, $\ldots$, $(s_\ell, t_\ell)\}$ be a request set contained in $A \cup B$ with $\ell \leq r+w$.
Let $\xarray \in n^{[k]}$ and $\parray \in n^{[\ell]}$. 
The goal is to compute $f_{A \cup B}(L, \xarray, \parray)$ using $f_A$ and $f_B$.

We set $f_{A \cup B}(L, \xarray, \parray) = f_A(L, \xarray, \parray)$ if $L$ is contained in $A$ and $f_{A \cup B}(L, \xarray, \parray) = f_B(L, \xarray, \parray)$ if $L$ is contained in $B$.
If there is $(s,t) \in L$ such that $t \in A$ and $s \in L$ then we set $f_{A \cup B}(L, \xarray, \parray) = 0$ since there are no arcs from $B$ to $A$ in $D$.
Assume now that $L$ is not contained in $A$ nor in $B$ and that no such pair $(s,t)$ with $s \in B$ and $t \in A$ exists in $L$.

The strategy is as follows.
Whenever we are faced with a request $(s,t) \in L$ with $s \in A$ and $t \in B$, we add ``virtual'' terminals to both $A$ and $B$ in order to extract the value of $f_{A \cup B}(L, \xarray)$ from isolated instances in $A$ and $B$ which can be verified using the itineraries for $A$ and $B$.
Formally, start with $L_A = L_B = \emptyset$.
For $i \in [\ell]$, do the following.

\begin{enumerate}
  \item If $s_i \in A$ and $t_i \in A$, define $s^A_i = s_i$ and $t^A_i = t_i$, and include the request $(s^A_i, t^A_i)$ in $L_A$.
  \item If $s_i \in B$ and $t_i \in B$, define $s^B_i = s_i$ and $t^B_i = t_i$, and include the request $(s^B_i, t^B_i)$ in $L_B$.
  \item If $s_i \in A$ and $t_i \in B$, define $s^A_i = s_i$ and $t^B_i = t_i$. Then, choose $t^A_i \in A$ and $s^B_i \in B$ such that there is an arc from $t^A_i$ to $s^B_i$ in $D$ and include $(s^A_i, t^A_i)$ in $L_A$ and $(s^B_i, T^B_i)$ in $L_B$. 
\end{enumerate}

See \autoref{fig:construction_of_lemma_condition_1_holds} for an illustration of this construction.
\begin{figure}[h]
\centering
\scalebox{1}{\begin{tikzpicture}

\node[blackvertex, label=180:{$s^A_1$}, scale=.6] (sa1) at (0,2) {};
\node[blackvertex, label=180:{$s^A_2$}, scale=.6] (sa2) at (0,1) {};
\node[blackvertex, label=180:{$s^A_3$}, scale=.6] (sa3) at (0,0) {};

\foreach \i in {1,2,3} {
	\node[blackvertex, label = 90:{$t^A_\i$}, scale=.6] (ta\i) at ($(sa\i) + (1.7,0)$) {};
	\draw[arrow,snake it] (sa\i) to (ta\i);
	\node[blackvertex, label = 90:{$s^B_\i$}, scale=.6] (sb\i) at ($(ta\i) + (1.7,0)$) {};
	\node[blackvertex, label = 0:{$t^B_\i$}, scale=.6] (tb\i) at ($(sb\i) + (1.7,0)$) {};
	\draw[arrow,snake it] (sb\i) to (tb\i);
}
\draw[arrow] (ta2) to (sb2);

\node[draw, rectangle, fit=(sa1)(ta3), yscale=1.3, xscale=1.3, yshift = .1cm, xshift = -.1cm, label=$A$] {};
\node[draw, rectangle, fit=(sb1)(tb3), yscale=1.3, xscale=1.3, yshift = .1cm, xshift = .1cm, label=$B$] {};

\end{tikzpicture}}
\caption{Construction of the requests sets of \autoref{lemma:condition_1_holds}. In the example, terminals $t^A_2$ and $s^B_2$ were added as in item \lipItem{3.} for the original request $(s_2,t_2) \in L$ with $s^A_2 = s_2$ and $t^B_2 = t_2$.}
\label{fig:construction_of_lemma_condition_1_holds}
\end{figure}
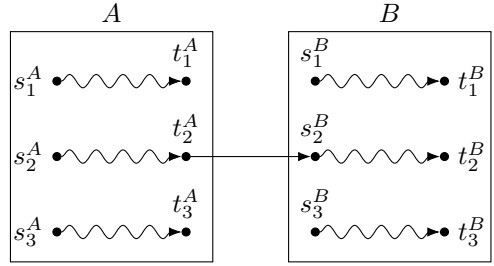

Now we set $f_{A\cup B}(L, \xarray, \parray) = 1$ if there are $\xarray^A$, $\xarray^B$, and $\parray^A, \parray^B$ such that $f_A(L_a, \xarray^A, \parray^A) = f_B(L_B, \xarray^B, \parray^B) = 1$, $\xarray^A[i] + \xarray^B[i] \leq \xarray[i]$ for all $i \in [k]$, and $\parray^A[i] + \parray^B[i] = \parray[i]$ for all $(s_i, t_i)\in L$ where $s_i \in A$ and $t_i \in B$. 
If this is not the case, we repeat the procedure to construct $L_A$ and $L_B$ above with different choices fo $t^A_i$ and/or $s^B_i$ in the third step.
If all possible choices have been exhausted without a positive answer, we set $f_{A\cup B}(L, \xarray, \parray) = 0$.
We claim that this assignment is correct.

Consider the instance $\mathcal{I}' = (D[A \cup B], L, \{X_i \cap (A \cup B) \mid i \in [k]\}, \xarray, \parray)$ of \textsc{\pddpshort}.
If it is positive then some of the paths of a solution $\mathcal{P}$ satisfying $L$ are contained in $A$, some are contained in $B$, and some are crossing from $A$ to $B$.
We refer to these last paths as \emph{crossing} paths.
We denote by $\mathcal{P}_A$ the set containing all paths of $\mathcal{P}$ contained in $A$ plus the subpaths of the crossing paths which are contained in $A$.
Similarly, we denote by $\mathcal{P}_B$ the set containing all paths of $\mathcal{P}$ contained in $B$ plus the subpaths of the crossing paths which are contained in $B$.
This implies that for some choice of $L_A$ and $L_B$ the collections $\mathcal{P}_A$ and $\mathcal{P}_B$ are solutions for the instances $(D[A], L_A, \{X_i \cap A \mid i \in [k]\}, \xarray^A, \parray^A)$ and $(D[B], L_B, \{X_i \cap B \mid i \in [k]\}, \xarray^B, \parray^B)$ and $\xarray^A[i] + \xarray^B[i] \leq \xarray[i]$ for all $i \in [k]$.
Hence the definition of itineraries implies that $f_A(L_A, \xarray^A, \parray^A) = f_B(L_B, \xarray^B, \parray^B) = 1$.

Conversely, if $f_A(L_A, \xarray^A, \parray^A) = f_B(L_B, \xarray^B, \parray^B) = 1$ for $L_A, L_B$ constructed following the steps described above, one can build a collection of paths satisfying $L$ by gluing together the paths satisfying $L_A$ and the paths satisfying $L_B$.
If $\xarray^A[i] + \xarray^B[i] \leq \xarray[i]$ for all $i \in [k]$ then $\mathcal{I}'$ is positive and the claim follows.

The running time follows by observing that there are $\Ocal(n^{(2(r+w)})$ choices for the request set $L$, $\Ocal(n^{2(r+w)})$ choices for the pairs $(t^A_i, s^B_i)$ in item \lipItem{3.} above, $\Ocal(n^k)$ choices for each of the arrays $\xarray, \xarray^A$, and $\xarray^B$, and $\Ocal(n^\ell)$ choices for $\parray^A, \parray^B$.
\end{proof}

\begin{lemma}\label{lemma:condition_2_holds}
Let $A,B$ be disjoint subsets of $V(D)$ such that $A$ is $w$-guarded and $|B| \leq w$.
Then an $(\mathcal{I}, w)$-itinerary for $A \cup B$ can be computed from itineraries for $A$ and $B$ in time $\Ocal(n^{4(r+w)+3k+2})$.
\end{lemma}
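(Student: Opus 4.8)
The plan is to mirror the proof of \autoref{lemma:condition_1_holds} as closely as possible, computing $f_{A\cup B}$ from queries to $f_A$ (and $f_B$, which is cheap to maintain since $|B|\le w$) plus direct bookkeeping inside $B$. The one genuinely new phenomenon is that, whereas in \cref{condition:big-big} the absence of arcs from $B$ to $A$ forces every path to cross from $A$ to $B$ \emph{at most once}, here arcs between $A$ and $B$ may go in both directions, so a single solution path may alternate between $A$ and $B$ several times. As before, I fix a request set $L=\{(s_1,t_1),\ldots,(s_\ell,t_\ell)\}$ contained in $A\cup B$ with $\ell\le r+w$, together with arrays $\xarray$ and $\parray$, and describe how to decide $f_{A\cup B}(L,\xarray,\parray)$.

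First I would analyse a hypothetical solution $\mathcal{P}$. Each path $P_i\in\mathcal{P}$ is the alternating concatenation of maximal subpaths lying entirely in $A$ (its \emph{$A$-segments}) and maximal subpaths lying entirely in $B$ (its \emph{$B$-segments}), any two consecutive segments being joined by a single arc of $D$ between $A$ and $B$. Reading off the endpoints of the segments yields a virtual request set $L_A$ contained in $A$ and a virtual request set $L_B$ contained in $B$, exactly as the pairs $(s^A_i,t^A_i)$ and $(s^B_i,t^B_i)$ were produced in \autoref{lemma:condition_1_holds}, except that now each original request of $L$ may spawn several such pairs. To make this explicit I would \emph{guess} the crossing pattern: for every transition of every path I guess the arc $(a,b)$ or $(b,a)$ with $a\in A$ and $b\in B$ that realises it, together with the order in which the vertices of $B$ occur along each path. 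These guessed arcs fix the endpoints of all $A$- and $B$-segments, and hence $L_A$ and $L_B$; since $|B|\le w$, there are only $n^{\Ocal(r+w)}$ patterns to consider. I would then set $f_{A\cup B}(L,\xarray,\parray)=1$ iff for some pattern there exist arrays $\xarray^A,\xarray^B,\parray^A,\parray^B$ with $f_A(L_A,\xarray^A,\parray^A)=f_B(L_B,\xarray^B,\parray^B)=1$, with $\xarray^A[i]+\xarray^B[i]\le\xarray[i]$ for all $i\in[k]$, and with the sizes of the matched $A$- and $B$-segments summing to $\parray[i]$ for every request of $L$.

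The correctness argument then splits into the two usual directions: a solution for $L$ in $D[A\cup B]$ restricts to solutions for $L_A$ in $D[A]$ and $L_B$ in $D[B]$ realising one of the guessed patterns, so the itineraries certify it; conversely, given certificates for $L_A$ and $L_B$ together with the matching arcs, one glues the segments back together in the guessed order to obtain paths satisfying $L$, internal disjointness being preserved because $A$ and $B$ are disjoint and each itinerary already guarantees disjointness on its own side. The running time is counted exactly as in \autoref{lemma:condition_1_holds}: $\Ocal(n^{2(r+w)})$ choices for $L$, $\Ocal(n^{2(r+w)})$ choices for the transition arcs, and $\Ocal(n^{k})$ choices per array, giving $\Ocal(n^{4(r+w)+3k+2})$. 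I expect the main obstacle to be the bookkeeping that keeps this faithful to the itinerary's domain, namely guaranteeing that $|L_A|\le r+w$ so that $f_A(L_A,\cdot,\cdot)$ is actually defined: this is where \autoref{proposition:paths-breakability} must be invoked, bounding the number of $A$-segments via the breakability of the paths with respect to the $w$-guarded set $A$ while carefully accounting for the at most $w$ extra crossings contributed by $B$.
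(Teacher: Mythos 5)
Your proposal is correct and follows essentially the same route as the paper: decompose each crossing solution path into its maximal $A$- and $B$-segments, invoke \autoref{proposition:paths-breakability} on the $w$-guarded set $A$ to bound the number of $A$-segments (and hence the number of candidate request sets $L_A$) by $r+w$, query the itinerary for $A$, and recombine while summing the $\xarray$- and $\parray$-budgets. The only cosmetic difference is that the paper routes the connecting pieces through $B$ by direct brute force over the at most $w$ vertices of $B$ (its $B$-pieces have endpoints in $A$ and interior in $B$), whereas you query an itinerary $f_B$ on a request set contained in $B$ --- equivalent here, since such an itinerary is itself computable by brute force when $|B|\le w$.
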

\begin{proof}
Let $f_A$ be a $(\mathcal{I}, w)$-itinerary for $A$ and $L$ be a request set contained in $A \cup B$ with $L \{(s_1, t_1), \ldots, (s_\ell, t_{\ell})\}$ and $\ell \leq r+w$.
Let $\xarray \in [n]^k$, let $\parray \in n{[\ell]}$, and $\mathcal{I'}$ be the instance $(D[A \cup B], L, \{X_i \cap (A \cup B) \mid i \in [k]\}, \xarray, \parray)$.
In this case, there is no restriction on the arcs between $A$ and $B$.
However, the bound on the size of $B$ allows us to brute-force all the ways that paths in a solution for $\mathcal{I}'$ can interact with $B$.

For each pair $(s,t) \in L$ a path from $s$ to $t$ in $D[A \cup B]$ can \lipItem{(i)} be entirely contained in $A$, \lipItem{(ii)} be entirely contained in $B$, or \lipItem{(iii)} intersect both $A$ and $B$.
If $L$ is contained in $A$, we can test if there is a solution for $\mathcal{I}'$ whose paths are all contained in $A$ by verifying the value of $f_A(L, \xarray^A, \parray^A)$ for every possible choices of $\xarray^A$ and $\parray^A$.
If $L$ is contained in $B$, we can test if there is a solution entirely contained in $B$ in time $\Ocal(2^{w} \cdot n^k)$.
We now search for solutions containing paths intersecting both $A$ and $B$.
Notice that this does not imply that $L$ is not contained in $A$ or in $B$: even if it is, for instance, contained in $B$, it might be the case that a path needs to traverse both $A$ and $B$ to satisfy a request.

The first step is to understand how a path from a solution can be broken into pieces contained in $A$ and $B$.
Suppose that $P$ is a path from a solution $\mathcal{P}$ of $\mathcal{I}'$ intersecting both $A$ and $B$.
Let $\mathcal{P}_A = \{P^A_1, \ldots, P^A_a\}$ be the set of subpaths of $P$ which are contained in $A$ and, for $i \in [a]$, let $(u_i, v_i)$ be the pair containing the first and last vertices occurring in $P^A_i$.
Now, let $\mathcal{P}_B$ be the collection of subpaths of $P$ contained in $B \cup \{u_i,v_i \mid i \in [a]\}$.
For this analysis, we distinguish two cases.
If $a = 1$, then $\mathcal{P}_B$ is a (possibly unitary) collection of pairwise disjoint paths satisfying the request $(s, u_1)$ if $s \in B$ and the request $(v_1, t)$ if $t \in B$.
If $a \geq 2$, then $\mathcal{B}$ is a (again possibly unitary) collection of pairwise disjoint paths satisfying the requests $\{(v_1, u_2), \ldots, (v_{a-1}, u_a)\}$ together with $(s, u_1)$ if $s \in B$ and $(v_1, t)$ if $t \in B$.
In any case, the crucial message is that if we guess which vertices of a path in a solution are to be linked through $B$, we can generate a request set from this guess and compute if its possible to find such paths through $B$.
Moreover, since $A$ is $w$-guarded, \autoref{proposition:paths-breakability} tell us that only a reasonable amount of guesses need to be processed.
See \autoref{fig:construction_of_lemma_condition_2_holds} for an illustration of this analysis.
\begin{figure}[h]
\centering
\scalebox{1}{\begin{tikzpicture}[yscale=.8]
\foreach \i in {1,2,3} {
	\node[blackvertex, scale=.6, label=180:{$u_\i$}] (u\i) at (4*\i, 0) {};
	\node[blackvertex, scale=.6, label=0:{$v_\i$}] (v\i) at ($(u\i) + (2,0)$) {};
	\draw[arrow, goodblue] (u\i) to [bend left = 20] (v\i);
}

\node[blackvertex, scale=.6, label=180:{$s$}] (s) at (3,-1.5) {};
\node[blackvertex, scale=.6, label=0:{$t$}] (t) at (15,-1.5) {};

\foreach \i/\j in {1/2,2/3} {
	\draw[arrow, goodred] (v\i) to [out=-90, in=180] ($(v\i) + (1,-1.5)$) to [out=0, in=-90] (u\j);
}
\draw[arrow, goodred] (s) to [out=90, in=180] ($(s)+ (.5, 1)$) to [out=0,in=-90] (u1);
\draw[arrow, goodred] (v3) to [out=-90, in=180] ($(v3)+ (.5, -1)$) to [out=0,in=90] (t);

\node[label=above:$A$, label=below:$B$] (A) at ($(s) + (-.5, 1.1)$) {};
\node (B) at ($(t) + (.5, 1.1)$) {};
\draw[dashed] (A) -- (B);
\end{tikzpicture}}%
\caption{Illustration of how a path in a solution can be split into subpaths in $A$ and $B$. In the example, $s,t \in B$ and the request set satisfied by $\mathcal{P}_B$ is $\{(s, u_1), (v_1, u_2), (v_2, u_3), (v_3, t)\}$. The paths in $\mathcal{P}_A$ are depicted in blue, and the paths in $\mathcal{P}_B$ in red.}
\label{fig:construction_of_lemma_condition_2_holds}
\end{figure}
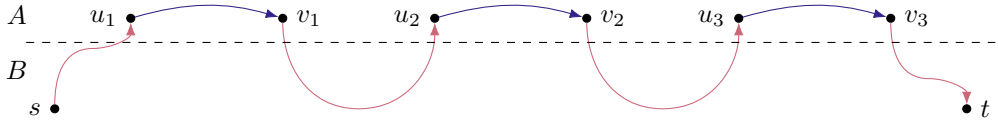
Thus we can test whether there is a solution for $\mathcal{I}'$ using an itinerary for $A$ and, for each $(s_i,t_i) \in L$, searching for a collection $\mathcal{P}_B$ as described above.
For each $(s_i,t_i) \in L$, we guess which parts of a path satisfying $(s_i, t_i)$ are in $A$ (the paths from $u_i$ to $v_i$ in the analysis above) and link the missing pieces (the paths from $v_{i}$ to $u_{i+1}$ in the analysis above) through $B$.
By \autoref{proposition:paths-breakability}, a path has at most $r+w$ subpaths in $A$ and thus at most $\Ocal(n^{r+w})$ requests sets $L_A$ contained in $A$ need to be considered.
For each of those requests, we can test if we can link the missing pieces through $B$ in time $\Ocal(2^r)$.
Finally, we set $f_{A \cup B}(L, \xarray, \parray) = 1$ if and only if $f_A(L_A, \xarray^A, \parray^A) = 1$ and we can link the missing pieces through $B$ in such way that the paths routed through $B$ use at most $\xarray^B[i]$ vertices of $X_i \cap B$, for all $i \in [k]$, for a choice of $\xarray^B$ such that $\xarray^A[i] + \xarray^B[i] \leq \xarray[i]$, and satisfying the requirements for the sizes of the paths, and the result follows.
\end{proof}

The main theorem of this section follows immediately from \cref{lemma:condition_1_holds,lemma:condition_2_holds} and \autoref{theorem:algo-from-conditions}.

\begin{theorem}\label{theorem:algo-for-the-paths}
The {\sc \pddp} problem is solvable in time $\Ocal(n^{4(r+w)+3k+3})$.
\end{theorem}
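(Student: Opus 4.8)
The plan is to obtain \autoref{theorem:algo-for-the-paths} by feeding the two structural lemmas of this subsection into the generic dynamic-programming engine of Johnson et al., stated here as \autoref{theorem:algo-from-conditions}. First I would note that \autoref{lemma:condition_1_holds} and \autoref{lemma:condition_2_holds} are exactly the statements that \cref{condition:big-big,condition:big-small} hold for the family of $(\mathcal{I},w)$-itineraries attached to the instance, and that both lemmas run within the \emph{same} bound $\Ocal(n^{4(r+w)+3k+2})$. Consequently I may fix the single value $\alpha = 4(r+w)+3k+2$ as the common per-step cost required by the two conditions.

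With both conditions in hand, I would apply \autoref{theorem:algo-from-conditions}: given an arboreal decomposition of $D$ of width at most $w$, it outputs an $(\mathcal{I},w)$-itinerary $f_{V(D)}$ for the whole vertex set in time $\Ocal(n^{\alpha+1}) = \Ocal(n^{4(r+w)+3k+3})$. To make the algorithm self-contained I would first compute such a decomposition using the approximation algorithm of Campos et al.~\cite{doi:10.1137/21M1452664}, whose running time $2^{\Ocal(w\log w)}\cdot n^{\Ocal(1)}$ is negligible compared to the dynamic programming. Strictly speaking this returns a decomposition of width $\Ocal(w)$ rather than exactly $w$, so the honest exponent is $4(r+\Ocal(w))+3k+3$; this still lies in the \XP regime with parameters $r$, $k$, and $\dtw(D)$, and matches the stated bound once $w$ is read as the width of the decomposition actually used (equivalently, one may take the decomposition to be part of the input).

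It then remains only to translate the computed itinerary into a decision for $\mathcal{I}$. By the observation made immediately after the definition of an itinerary, $\mathcal{I}$ is a positive instance of \textsc{\pddpshort} if and only if $f_{V(D)}(R,\xarray',\parray') = 1$, where $\xarray'[i]=x_i$ and $\parray'[i]=p_i$ for every $i$; hence a single table lookup decides the problem. I do not anticipate a genuine obstacle in this step, since all the combinatorial difficulty has already been absorbed by \autoref{lemma:condition_1_holds} and \autoref{lemma:condition_2_holds}, where \autoref{proposition:paths-breakability} was used to keep both the number of relevant request sets and the number of $A/B$-splitting patterns polynomially bounded. The one place that warrants attention is the arithmetic of the running time: one must confirm that taking the maximum of the two per-step costs and multiplying by the $\Ocal(n)$ factor coming from the number of nodes of the decomposition in \autoref{theorem:algo-from-conditions} reproduces the exponent $4(r+w)+3k+3$, and that the use of an approximate rather than exact decomposition does not push the exponent outside the \XP regime.
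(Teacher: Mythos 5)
Your proposal is correct and follows essentially the same route as the paper: the stated bound is obtained by plugging \autoref{lemma:condition_1_holds} and \autoref{lemma:condition_2_holds} (as witnesses that \cref{condition:big-big,condition:big-small} hold with $\alpha = 4(r+w)+3k+2$) into \autoref{theorem:algo-from-conditions}, then reading off $f_{V(D)}(R,\xarray',\parray')$. The paper states this in one line without your (reasonable) caveat about the approximation factor of the decomposition, so your version is if anything slightly more careful.
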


\subsection{Rooted stars-paths subdigraphs}\label{subsection:rooted_star_paths_subdigraphs}
In this section, we add together the ingredients from \cref{subsection:rooted_star_paths_subdigraphs,subsection:routing_the_paths} to obtain an algorithm for \textsc{\pname}.
Let $\mathcal{I} = (D, H, X, f)$ be an instance of \textsc{\pshort} as in the definition of the problem.
Thus $H = (\mathcal{S}, C, \mathcal{P})$ is built from stars $\mathcal{S} = \{S_1, \ldots, S_k\}$ with centers $C = \{s_i \in V(S_i) \mid i \in [k]\}$, and $f$ is a bijective mapping from $C$ to a set $X \subseteq V(D)$ dictating where the centers of the stars must be in $D$.

First, we determine the feasible choices for the slacks.
For this, we filter $\mathcal{I}$ into an instance $\mathcal{I}' = (D, H', C', X, f')$  of \textsc{Rooted Stars Subdigraph Isomorphism} built as follows.
We maintain $D$ and $X$ unaltered.
Then, we change $H'$ and $C'$ in order to ensure that all stars forming $H'$ are homogeneous.
This is done as in the proof of \autoref{theorem:rooted-stars-isomorphism-is-FPT}.
Namely, if a given star $S$ with center $s$ forming $H$ is not homogeneous, we delete $S$ from $H$ and add to it two stars $S^+$ and $S^-$ with as many leaves as $\deg^+_H(s)$ and $\deg^-_H(s)$, respectively, and update the centers to $C'$ adequately and $f$ into $f'$ to map the centers of the two new stars to $f(s)$ (we remind the reader that the mapping $f'$ need not be bijective).
In the end, we are left with $k'$ homogeneous stars $\{S'_1, \ldots, S'_k\}$ in $H'$ with centers $C' = \{s'_i \in V(S'_i) \mid i \in [k'] \}$ and $k' \leq 2k$, and the mapping $f': C' \to X$.

For each $\xarray \in [n]^{k'}$, we apply \autoref{corollary:rooted-stars-subdigraph-and-slacks} with input $\mathcal{I'}$ and $\slack(i) = \sarray[i]$ for all $i \in k'$.
If we obtain a solution for $\mathcal{I'}$ respecting the slacks, we proceed to the routing step, where the goal is to find the paths in $D$ between the centers of the stars in order to build the desired $H'$.
The instance of \textsc{\pddpshort} is built as follows.
For each $i \in [k']$ we let $X_i = N^+_D(f'(s'_i))$ if $S'_i$ is an out-star and $X_i = N^-_D(f'(s'_i))$ if $S'_i$ is an in-star, and define $x_i = \xarray[i]$.
The input digraph $D$ is maintained and, to build the request set, we look at the collection of paths $\mathcal{P}$ which is a defining part of $H$.
Let $\mathcal{P} = \{P_1, \ldots, P_r\}$ and for $j \in [r]$ let $s_i$ and $t_i$ be the first and last vertices of $P_i$.
We define $R = \{(f'(s_i), f'(t_i) \mid i \in [r]\}$ and solve the instance $(D, R, \{X_1, \ldots, X_k\}, (x_1, \ldots, x_k))$ of \textsc{\pddpshort}.
If the answer is positive, we have found the paths between the centers of the stars avoiding sufficiently many vertices of the relevant neighborhoods of the vertices in $X$ such that the stars forming $H$ can be built without using any of the vertices from the paths.
We add those paths to the stars found in the solution for $\mathcal{I}'$ and terminate the algorithm with a positive answer.
If no instance of \textsc{\pddpshort} is solved with a positive answer, we conclude that $\mathcal{I}$ is a negative instance.

The running time is $n^{\Ocal(k + r + w)}$, where $w = \dtw(D)$, and is given by the number of choices of $\xarray$, \autoref{corollary:rooted-stars-subdigraph-and-slacks}, and \autoref{theorem:algo-for-the-paths}.
\begin{theorem}\label{thm:our-XP-algo}
The {\sc \pname} problem is solvable in time $n^{\Ocal(k + r + w)}$, where $w$ is the directed treewidth of the input digraph $D$, $k$ is the number of stars forming $H$, and $r$ is the number of paths in the collection $\mathcal{P}$. 
\end{theorem}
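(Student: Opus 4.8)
The plan is to assemble the two ingredients developed in \cref{subsection:finding_the_stars_and_the_slacks,subsection:routing_the_paths}, following the pipeline sketched right before the statement. As a preprocessing step I would first run the approximation algorithm of Campos et al.~\cite{doi:10.1137/21M1452664} to obtain an arboreal decomposition of $D$ of width $\Ocal(w)$; this costs $2^{\Ocal(w\log w)}\cdot n^{\Ocal(1)}$ and is exactly what \autoref{theorem:algo-for-the-paths} takes as input. I would then normalize the target: using the splitting trick from the proof of \autoref{theorem:rooted-stars-isomorphism-is-FPT}, replace every non-homogeneous star of $\mathcal{S}$ by an in-star and an out-star sharing the same image under $f$, producing an equivalent instance with $k'\le 2k$ homogeneous stars, centers $C'$, and a (not necessarily bijective) map $f'\colon C'\to X$.

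Next comes the main loop, over all slack vectors $\sarray\in[n]^{k'}$. For each $\sarray$, I would invoke \autoref{corollary:rooted-stars-subdigraph-and-slacks} to decide in \FPT time whether the stars can be embedded with centers pinned by $f'$ while reserving at least $\sarray[i]$ vertices of the relevant neighborhood $N_i$ of $f'(s'_i)$; whenever the answer is positive, I would build the \textsc{\pddpshort} instance $\bigl(D,R,\{X_1,\dots,X_{k'}\},\xarray,\parray\bigr)$ in which $R$ requests one path from $f'(s_i)$ to $f'(t_i)$ per path of $\mathcal{P}$, the array $\parray$ prescribes the exact path lengths, each $X_i$ equals $N_i$, and $\xarray[i]=\sarray[i]$ caps how many neighbors of $f'(s'_i)$ the paths may consume. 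Solving this with \autoref{theorem:algo-for-the-paths} and answering \emph{yes} as soon as some slack vector yields a positive \textsc{\pddpshort} instance completes the algorithm.

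For correctness I would argue both directions. For soundness, suppose some $\sarray$ passes both tests. The \textsc{\pddpshort} solution gives internally disjoint paths of the prescribed lengths realizing $\mathcal{P}$ and touching at most $\sarray[i]$ vertices of each $N_i$; deleting these path vertices leaves at least $|N_i|-\sarray[i]$ free vertices in every $N_i$. The slack inequation~\autoref{eq:slacks-inequation} guarantees a leaf-count assignment using at most $|N_i|-\sarray[i]$ vertices of each $N_i$, so I would place the star leaves on the free vertices to obtain a subdigraph isomorphic to $H$ with the centers at their prescribed images. For completeness, given any solution $H'\subseteq D$, I would read off the vector $\sarray$ recording how many neighbors of each $f'(s'_i)$ are used by the paths of $H'$ rather than by leaves; since in $H$ the paths avoid all star leaves, this $\sarray$ simultaneously satisfies the star system of \autoref{corollary:rooted-stars-subdigraph-and-slacks} and turns the paths of $H'$ into a witness for the corresponding \textsc{\pddpshort} instance.

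The running time is the product of the $n^{\Ocal(k')}=n^{\Ocal(k)}$ choices of $\sarray$ with the per-iteration cost, namely the \FPT call of \autoref{corollary:rooted-stars-subdigraph-and-slacks} and the $\Ocal\bigl(n^{4(r+w)+3k'+3}\bigr)$ call of \autoref{theorem:algo-for-the-paths}, which sum to $n^{\Ocal(k+r+w)}$ and absorb the one-off decomposition cost. The step I expect to be the true obstacle is the soundness argument, and specifically the passage from the \emph{counting} feasibility certified by the slack system to an \emph{actual} vertex-disjoint placement of the leaves that moreover avoids the fixed path vertices: because $X_i=N_i$ tracks path usage only per neighborhood while the leaf assignment is distributed over the finer partition $\{X_J\}$ of \autoref{theorem:rooted-stars-isomorphism-is-FPT}, I would need a transportation/flow (Hall-type) argument over $\{X_J\}$ to certify that reserving $\sarray[i]$ vertices in each $N_i$ suffices to re-route the leaf selection around any admissible set of path vertices.
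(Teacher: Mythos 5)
Your pipeline is the paper's proof almost verbatim: homogenize the stars as in \autoref{theorem:rooted-stars-isomorphism-is-FPT}, enumerate the $n^{\Ocal(k)}$ slack vectors, test each with \autoref{corollary:rooted-stars-subdigraph-and-slacks}, and for the feasible ones solve the \textsc{\pddpshort} instance with $X_i = N_i$ and $x_i = \sarray[i]$ via \autoref{theorem:algo-for-the-paths}; the completeness direction and the running-time accounting also match the paper's.

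The obstacle you single out at the end is genuine, and it is worth noting that the paper's own write-up does not resolve it either: it simply asserts that the routed paths can be ``added to the stars found in the solution for $\mathcal{I}'$''. Per-neighborhood budgets really are too coarse. Take two out-stars with one leaf each and centers pinned to $u_1,u_2$ with $N_1=\{a,c\}$ and $N_2=\{b,c\}$, and a single requested path whose only realization in $D$ passes through both $a$ and $b$. The slack vector $(1,1)$ is accepted by the star system (star $1$ takes $a$, star $2$ takes $b$, so only one leaf lands in each $N_i$), and the path is accepted by \textsc{\pddpshort} (it meets each $N_i$ in exactly one vertex), yet once the path is fixed both stars are forced onto the single remaining vertex $c$, so no subdigraph isomorphic to $H$ exists; every other slack vector is rejected, so the algorithm as described answers incorrectly. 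The failure is exactly the mismatch you identify: the leaf assignment lives on the partition $\{X_J\}$, the slacks only control $|N_i\cap P|$ for the path-vertex set $P$, and the Hall condition $|\bigcup_{i\in S}(N_i\setminus P)|\ge\sum_{i\in S}\ell_i$ over sets $S$ of stars is not implied. A clean repair is to hand \textsc{\pddpshort} one budgeted set per nonempty $J\subseteq[k']$, namely $X_J$ with budget $|X_J|-\sum_{i\in J}t_i(J)$ for a guessed solution $(t_i(J))$ of the star system: then at least $\sum_{i\in J}t_i(J)$ vertices of each $X_J$ survive the routing, which is all the counting certificate needs, and the stars can be placed greedily. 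This keeps the algorithm \XP in $k$, $r$, and $w$ but inflates the exponent to $\Ocal(2^k+r+w)$; to recover the stated bound $n^{\Ocal(k+r+w)}$ you would indeed need the transportation-style argument you allude to, and the example above shows it cannot be carried out with only the $k'$ sets $N_i$.
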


\section{Hardness results}\label{sec:hardness_results}
In this section we prove our hardness results for a number of particular cases of {\sc Subdigraph Isomorphism}.
Formally, this problem is defined as follows.

\begin{boxproblem}[framed]{Subdigraph Isomorphism}
\label{problem:subdigraph_isomorphism}
Input: & A host digraph $D$ and a target digraph $H$.\\
Output: & A subdigraph $H' \subseteq D$ such that $H' \cong H$ if it exists, or a report that it does not, otherwise.
\end{boxproblem}
In general, {\sc Subdigraph Isomorphism} is \NP-complete even if the host digraph $D$ is a DAG, from a simple reduction from \textsc{Clique}.
Indeed, given an undirected graph $G$, one can choose an arbitrary ordering of $V(G)$ and build a DAG $D$ by orienting every edge of $G$ as a forward arc with relation to the ordering.
Then a set $X$ is a clique in $G$ if and only if $X$ is a transitive tournament in $D$.
We show that {\sc Subdigraph Isomorphism} remain \NP-complete even if $H$ belongs to very restricted classes of digraphs of bounded breakability and/or maximum degree.
Our hardness results are enumerated in the following theorem.

\begin{theorem}
\label{theorem:hardness results}
Let $k \geq 1$ be an integer.
{\sc Subdigraph Isomorphism} is \NP-complete if
\begin{enumerate}
  \item the host digraph $D$ is a DAG and $H$ is an antidirected path;
  \item the host digraph $D$ has $\dtw(D) = 1$ and $H$ is the expansion of a transitive tournament (and thus of maximum degree at most six);
  \item the host digraph $D$ is a DAG and $H$ is formed by the disjoint union of $2$-out-stars or the disjoint union of $2$-in-stars;
  \item the host digraph $D$ is a DAG and $H$ is formed by the disjoint union of $k$ copies of $2$-in-stars ($2$-out-stars) plus a $k$-homogeneous star whose leaves are all the $k$ centers of the other stars;
  \item the host digraph $D$ is a DAG and $H$ is formed by the disjoint union of digraphs $H_1$ and $H_2$ where each $H_i$ is built from a  $k$-homogeneous star where each arc is $i$-subdivided;
  \item the host digraph $D$ is a DAG and $H$ is caterpillar ending in a $k$-homogeneous star and every other branching vertex has exactly two leaves as in-neighbors or out-neighbors.
\end{enumerate}
\end{theorem}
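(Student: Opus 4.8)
The plan is to dispatch membership in \NP\ immediately and then prove \NP-hardness for each item through a dedicated reduction. Membership is easy: a certificate can include an injective map $\phi\colon V(H)\to V(D)$, and checking that $\phi$ realizes $H$ as a subdigraph of $D$ (injectivity and arc-preservation) is polynomial, so there is no need to decide isomorphism for an arbitrary guessed subgraph. All six reductions will produce host digraphs that are DAGs, hence of directed treewidth zero, except item~(2), whose host has directed treewidth one. The recurring difficulty is the same throughout: the target $H$ must embed into the host \emph{only} in the intended way, so that a valid embedding witnesses exactly the combinatorial object we reduce from. I would build the proof on four source problems: \textsc{Clique} for item~(2), the consistent perfect matching problem of Plaisted and Zaks~\cite{PLAISTED198065} for items~(3) and~(4), the restricted \textsc{3-SAT} in which every literal occurs exactly twice of Berman et al.~\cite{Berman2003} for items~(5) and~(6), and the rooted antidirected-path hardness of Bang-Jensen et al.~\cite{BANGJENSEN201768} for item~(1).

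For item~(1) I would start from~\cite[Theorem 2.2]{BANGJENSEN201768} and modify it so that the host becomes acyclic and the rooting can be dropped. The key observation is that every internal vertex of an antidirected path is either a source or a sink, so such a path never contains a directed subpath of length two. Hence, by lengthening every antidirected path attached to the constructed host sufficiently, I can guarantee that no two vertices of $D$ that are simultaneously non-sources and non-sinks lie together on a common directed path; this both destroys all directed cycles (yielding a DAG) and rigidifies any embedding of $H$, so that the prescribed roots are no longer needed. For item~(2) I would reduce from \textsc{Clique} exactly as in the transitive-tournament case: orient the edges of the input graph as forward arcs under an arbitrary vertex ordering to obtain a DAG $D_0$ in which $k$-cliques correspond to transitive tournaments $T_k$; then set $D$ to be the expansion of $D_0$ and $H$ to be the expansion of $T_k$. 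Correctness relies on the reachability-preservation property of the expansion (a path between original vertices exists in the expansion iff it exists in the base digraph), and the bound $\dtw(D)\le 1$ follows from the fact that the consistent orientation between type-1 and type-2 internal vertices keeps the directed treewidth of expansions of DAGs small (cf. the discussion in \autoref{sec:overview}).

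For items~(3) and~(4) I would reduce from the consistent-matching problem on a bipartite graph $G$ with parts $V_1,V_2$ and size-two partition classes $\mathcal{P}_1,\mathcal{P}_2$. The base construction adds, for each edge $e=\{u,v\}$, a vertex $v_e$ with out-neighbors $u$ and $v$, and then asks for $|V_1|=|V_2|$ copies of the $2$-out-star; each chosen star selects an edge, but consistency is not yet enforced. Exploiting that each class has exactly two elements, I would, for every pair $(A,B)$ with $A\in\mathcal{P}_1$, $B\in\mathcal{P}_2$ having at least two edges between them, attach further star-gadgets and raise the number of stars sought, so that the target count is attainable if and only if the selected edges form a consistent matching; the in-star case is symmetric. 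Item~(4) augments this target with a $k$-homogeneous star having all the centers as leaves (the big star of \autoref{fig:hardness-digraphs-examples}\textbf{(d)}); I would add the corresponding big star to the host to tie the candidate centers together and check that the argument is unaffected by whether the root $s$ is kept, which is precisely the distinction between items~(3) and~(4) in the figure caption.

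For item~(5) I would reduce from the restricted \textsc{3-SAT} of~\cite{Berman2003}. Starting from a vertex $s$, I would attach $2n$ length-three paths leaving $s$, two per variable, where one uses the two vertices associated with the positive occurrences and the other the two vertices associated with the negative occurrences; a big star with center $s$ then acts as a variable selector, while a big star with a center $c$ coming from a clause gadget acts as a clause validator whose leaves can only land on the occurrence-vertices \emph{not} consumed by the selector. Assigning these two roles to the copies $H_1$ and $H_2$ (the $1$- and $2$-subdivisions) yields the target, and the hypothesis that each literal appears exactly twice is exactly what matches the two occurrence-vertices per literal to the arity of the gadgets, so that both stars embed simultaneously iff the formula is satisfiable. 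Finally, item~(6) follows by adapting this selector/validator construction to a connected spine: since a caterpillar is a directed path carrying homogeneous stars at its branching vertices, I would lay the gadgets along the spine, using the two-leaf branching vertices to encode binary choices and the terminal $k$-homogeneous star to validate, keeping the whole construction acyclic. The main obstacle in items~(3)--(6) is the same delicate point—designing the gadgets and the exact star/spine counts so that \emph{no} unintended embedding reaches the target while every host stays a DAG—whereas item~(2) is technically the most demanding, because it additionally requires controlling the directed treewidth of the expanded host.
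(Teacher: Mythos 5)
Your overall architecture matches the paper's: the same four source problems (\textsc{Clique} for the expansion case, the Plaisted--Zaks consistent-matching problem for the star cases, \textsc{3-SAT} with each literal occurring exactly twice for the subdivided stars, and the Bang-Jensen et al.\ rooted antidirected-path result for item (1)), and for items (1)--(5) your gadgets coincide with the paper's at the level of detail given. Two points, however, deserve attention. First, in item (2) your correctness argument (``reachability preservation between original vertices'') is not by itself enough: you must also rule out embeddings of the expansion of $T_k$ that map its original vertices onto \emph{internal} vertices of the host expansion. The paper's construction handles this by the loops placed on original vertices during the expansion --- only original vertices carry loops, so an isomorphic copy of $H$ is forced to align original with original --- and you omit this mechanism entirely. (The breakability bound $4^w$ for expansions, which the paper treats as the hard part of this case, is not needed for the \NP-completeness claim itself, so its absence is fine.)

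Second, item (6) is where you genuinely diverge, and your route looks problematic. The paper adapts the \emph{matching} reduction of item (3), not the SAT reduction of item (5): it strings the edge-selector vertices $c(e)$ along a subdivided transitive tournament to form candidate spines, and uses the terminal $k$-homogeneous star purely as a \emph{counting} device (it must absorb all but $|V_1|$ of the subdivision vertices, thereby forcing exactly $|V_1|$ selectors onto the spine). You instead want the terminal star to act as a clause \emph{validator} in the SAT construction, but that star is not subdivided --- its leaves are direct out-neighbours of its centre --- so it cannot reach two levels deep into the variable gadgets the way the subdivided validator star of item (5) does; likewise it is unclear how a branching vertex that must have \emph{exactly two} leaf out-neighbours encodes a binary truth-value choice along a connected acyclic spine. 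As written, this case does not go through; you would either need to redesign the gadget so the terminal star plays a counting rather than a validating role, or supply a concrete spine construction showing that unintended embeddings are excluded. (Incidentally, the paper's own host for item (6) is not a DAG --- it has directed treewidth one because all cycles pass through the new vertex $r$ --- so your blanket claim that all hosts except item (2) are acyclic would also need revisiting.)
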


\noindent In \autoref{table:summary-of-results} we show the implications of our hardness results to parameterized versions of {\sc Subdigraph Isomorphism}.
See \autoref{fig:hardness-digraphs-examples} for examples of digraphs in the classes mentioned in \autoref{theorem:hardness results}.

\begin{table}[h]
\begin{tabularx}{.99\textwidth}{l | >{\centering\arraybackslash}c >{\centering\arraybackslash}c >{\centering\arraybackslash}c}
\toprule
  $H$ & $\breakability_w(H)$ & $\Delta(H)$ & Hardness \\ 
  \midrule
  Transitive tournaments & $1$ & $|V(H)-1|$ & \NP-complete in DAGs\\
  Item \lipItem{1.} of \autoref{theorem:hardness results} & $\infty$ & $2$ & \NP-complete in DAGs\\
  Item \lipItem{2.} of \autoref{theorem:hardness results} & $4^w$ & $7$ & \NP-complete if $\dtw(D) \leq 3$\\
  Item \lipItem{3.} of \autoref{theorem:hardness results} & $\infty$ & $2$ & \NP-complete in DAGs\\
  Items \lipItem{4.}, \lipItem{5.}, and \lipItem{6.} of \autoref{theorem:hardness results} & $\infty$ & $k$ & \NP-complete in DAGs\\
\bottomrule 
\end{tabularx}
\caption{Implications of hardness results for {\sc Subdigraph Isomorphism} parameterized by the breakability and/or the maximum degree of each choice of the target digraph mentioned in \autoref{theorem:hardness results}. In each case, $k$ is as in the statement of the theorem and $w$ is an non-negative integer.}
\label{table:summary-of-results}
\end{table}

The $w$-breakability of transitive tournaments is one because the underlying graph of a tournament is complete.
For antidirected paths, the breakability is unbounded since only adjacent vertices are within a directed path of an antidirected path.
Thus it suffices to take a $0$-guarded set $X$ intersecting only every other vertex of an antidirected path $P$ to see that $V(P) \cap X$ can have as many weak components as $|X|$.
For the breakability of $H$ as defined in items \lipItem{3.} to \lipItem{6.} it suffices to see that, in each of those cases, there are no paths between distinct leaves of $H$.

For item \lipItem{2.}, the proof that the breakability of expansions of transitive tournaments is bounded is not trivial and shown in \autoref{lem:breakability-of-expanded-TTk}.
In the remaining of this section we discuss and prove all the items of \autoref{theorem:hardness results}.

\medskip
\noindent\textbf{Antidirected paths.}
Bang-Jensen et al.~\cite{BANGJENSEN201768} showed that deciding if a given digraph $D$ contains an antidirected path between a given pair of vertices $s,t \in V(D)$ is \NP-complete.
This immediately implies that {\sc Subdigraph Isomorphism} is \NP-complete when $H$ is an antidirected path since one can test each of the $\binom{n}{2}$ instances, one for each pair of vertices of $D$.
In other words, in this particular case the rooted version of the problem generalizes the unrooted one.
Their reduction is not for DAGs, however, and next we show how it can be changed to prove item \lipItem{1.} of \autoref{theorem:hardness results}.
As in their case, we use the following auxiliary result.
\begin{proposition}[Bang-Jensen et al.~{\cite[Theorem 2.1]{BANGJENSEN201768}}]
Let $G$ be an undirected graph and $u,v \in V(G)$.
Given pairs of distinct edges $\mathcal{S} = \{\{e_1, f_1\}, \ldots, \{e_p, f_p\}\}$ of $G$ with $p \geq 1$, it is \NP-complete to decide if $G$ has a path from $s$ to $t$ avoiding at least one edge from each pair in $\mathcal{S}$.
\end{proposition}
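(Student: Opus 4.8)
The plan is to prove \NP-hardness by a reduction from \textsc{3-SAT}; membership in \NP\ is immediate, since given a candidate $s$--$t$ path one checks in polynomial time that it is indeed a path and that it uses at most one edge from each pair $\{e_i,f_i\}$. The key observation is that the phrase ``avoid at least one edge from each pair'' is exactly a \emph{forbidden-pairs} constraint on edges: for each $i$, at most one of $e_i,f_i$ may appear on the path. Hence this is the edge variant of the classical \textsc{Path Avoiding Forbidden Pairs} problem, and I would mirror the standard hardness construction for that problem.

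Given a formula $\phi$ with variables $x_1,\dots,x_n$ and clauses $C_1,\dots,C_m$, I would first build an \emph{assignment gadget}: a series of $n$ two-edge ``diamonds'' $D_1,\dots,D_n$ glued at cut vertices, where traversing $D_i$ forces the path to use exactly one of two edges, $t_i$ (read as ``$x_i$ true'') or $f_i$ (read as ``$x_i$ false''). Placing these in series between $s$ and an intermediate vertex $z$ guarantees that any $s$--$z$ path selects a complete truth assignment. Next I would attach, for each clause $C_j$, a \emph{clause gadget} between two cut vertices offering three parallel routes, one edge $a_{j,\ell}$ per literal $\ell$ of $C_j$, so that crossing the gadget forces the choice of exactly one satisfying literal. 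Chaining the clause gadgets in series from $z$ to $t$ completes the host graph $G$, and by construction every $s$--$t$ path decomposes uniquely into one assignment choice followed by one literal choice per clause.

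The forbidden pairs then tie the literal choices to the assignment. If the literal used in $a_{j,\ell}$ is the positive literal $x_i$, I add the pair $\{f_i,a_{j,\ell}\}$; if it is the negative literal $\neg x_i$, I add the pair $\{t_i,a_{j,\ell}\}$. Thus ``using a clause route while having chosen the opposite assignment edge'' is precisely the forbidden configuration. One then verifies both directions: a satisfying assignment yields an $s$--$t$ path that routes each clause through a true literal and therefore uses at most one edge of every pair; conversely, a valid path induces an assignment together with a satisfying literal for each clause, so $\phi$ is satisfiable. Since the two edges inside every pair are of different types (one assignment edge, one clause edge) they are distinct, and $p \ge 1$ holds whenever $m \ge 1$, so the reduction meets the stated hypotheses.

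I expect the main obstacle to be verifying correctness in the \emph{undirected} setting, where one must rule out unintended $s$--$t$ paths that backtrack through a gadget or bypass the intended series structure. The clean fix is to join consecutive gadgets only through cut vertices acting as bridges, so that every $s$--$t$ path is forced to traverse $D_1,\dots,D_n$ and then the clause gadgets in order, each exactly once. This makes the unique decomposition of paths rigorous and lets both implications above go through without any case analysis on the direction in which a gadget is traversed.
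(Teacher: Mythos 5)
This proposition is not proved in the paper at all: it is imported verbatim from Bang-Jensen et al.~(Theorem~2.1 of the cited reference) and used as a black box in the proof of the antidirected-path hardness result, so there is no in-paper argument to compare yours against. Judged on its own, your reduction is the standard forbidden-pairs construction and is correct in outline: the series arrangement through cut vertices does force every $s$--$t$ path to decompose into one choice per variable diamond and one choice per clause gadget, and the pairs $\{f_i,a_{j,\ell}\}$ (for positive occurrences) and $\{t_i,a_{j,\ell}\}$ (for negative ones) correctly encode consistency, giving both directions of the equivalence. The one technical wrinkle you should clean up is that ``two-edge diamonds'' and ``three parallel routes, one edge per literal'' literally describe parallel edges, i.e.\ a multigraph; since the statement is about a (simple) graph $G$, you should lengthen each route to a path of length at least two and designate one edge on each route as the labelled edge $t_i$, $f_i$, or $a_{j,\ell}$ --- this changes nothing in the argument. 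With that fix, membership in \NP, the forward direction (a satisfying assignment yields a path using the true-literal route in each clause and hence avoiding one edge of every pair), and the backward direction (any valid path uses exactly one of $t_i,f_i$ per variable, and the pair constraints force each chosen clause edge to correspond to a true literal) all go through as you describe.
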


If $u,v$ are vertices of a digraph $D$, by \emph{identifying} of $u$ and $v$ we build from $D$ a new digraph $D'$ with vertex set $(V(D) \setminus \{u,v\}) \cup \{x\}$ where the new added vertex $x$ absorbs every arc from $D$ incident to $u$ or $v$, respecting the original orientations.
For convenience, we delete the possibly generated loop on vertex $x$.

\begin{lemma}\label{lemma:antidirected-path-dags}
Let $D$ be a DAG and $s,t \in V(D)$.
It is \NP-complete to decide if there is an antidirected path between $s$ and $t$ in $D$.
\end{lemma}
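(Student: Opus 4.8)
The plan is to reduce from the problem of Bang-Jensen et al.\ stated in the preceding proposition: given an undirected graph $G$, vertices $s,t \in V(G)$, and pairs of distinct edges $\mathcal{S} = \{\{e_1,f_1\}, \ldots, \{e_p,f_p\}\}$, decide whether $G$ has an $s$--$t$ path avoiding at least one edge from each pair. The difficulty is that the natural construction that turns this into an antidirected-path question (as in~\cite{BANGJENSEN201768}) produces a digraph with directed cycles, whereas we need a DAG. So the overall strategy is to take their gadget-based encoding but to enforce acyclicity by carefully orienting everything with respect to a fixed global topological order, and to pay for this with large antidirected ``padding'' paths so that the encoding is not spoiled.

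First I would set up the encoding so that an $s$--$t$ walk in $G$ corresponds to an antidirected path in the constructed digraph $D'$. The key mechanism in~\cite{BANGJENSEN201768} is that each edge of $G$ is represented by a short antidirected gadget whose local sources and sinks can be ``chained'' through a vertex, and the pair constraints $\{e_i, f_i\}$ are enforced by identifying suitable internal vertices of the two gadgets (using the \emph{identifying} operation just defined), so that a solution antidirected path cannot traverse both $e_i$ and $f_i$. I would reuse exactly this edge-gadget and pair-gadget structure. The only thing I change is the global orientation: I fix an arbitrary linear order on $V(G)$, and orient the gadget for each edge so that the internal arcs respect this order, guaranteeing that $D'$ is acyclic. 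Because the arcs of an antidirected path alternate in direction, being acyclic does not prevent an antidirected path from zig-zagging ``backwards'' in the order — that is the crucial point that makes a DAG encoding possible at all.

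The main obstacle, which the overview paragraph flags explicitly, is that in a DAG an antidirected path could ``cheat'' by using unintended subpaths: two internal gadget vertices that are meant to be distinct could accidentally lie on a short directed path of $D'$, letting the solution shortcut the intended edge-selection logic. To rule this out, I would \emph{lengthen} the antidirected padding attached inside each gadget: by increasing the length of each antidirected path segment inserted for the edges, I can force that any two vertices that are neither sinks nor sources in $D'$ are never within a single directed path of $D'$. Concretely, one inserts long antidirected paths so that every would-be ``alternation point'' becomes a local source or sink, so the only way to traverse the digraph with a direction-alternating path is to follow gadgets in the intended manner. This is precisely the modification sketched in the overview (``by increasing the size of the antidirected paths added to the constructed digraph $D$''), and verifying that the padding both preserves acyclicity and blocks all unintended directed subpaths is where the technical care lies.

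Finally I would argue the two directions of correctness. For completeness, given an $s$--$t$ path in $G$ avoiding one edge from each pair, I read off the corresponding sequence of edge-gadgets and concatenate their antidirected segments into a single antidirected $s$--$t$ path in $D'$, checking that the identifications at the pair-gadgets are respected (the avoided edge of each pair is simply the one whose gadget is not entered). For soundness, given an antidirected $s$--$t$ path $P'$ in $D'$, I use the padding property to show that $P'$ must pass through the gadgets in a structured way, so that reading the gadgets it visits yields a walk in $G$; I then argue this walk can be taken to be a path and that, because of the identifications at each pair-gadget, it avoids at least one edge from each pair $\{e_i,f_i\}$. Since the reduction is polynomial and $D'$ is a DAG by construction, \NP-completeness for DAGs follows, establishing item~\lipItem{1.} of \autoref{theorem:hardness results}.
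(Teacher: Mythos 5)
Your proposal follows essentially the same route as the paper: reduce from the edge-pair-avoiding path problem of Bang-Jensen et al., reuse their edge-gadget/identification scheme, and lengthen the antidirected gadgets so that no two non-source, non-sink vertices (i.e., no two identification points) lie on a common directed path, which is exactly how the paper rules out directed cycles. Two caveats: your ``global topological order'' device is neither needed nor really meaningful here (the orientation of an antidirected gadget is forced, and all original vertices end up as sources anyway), and the quantitative core of the paper's proof --- taking gadgets of length $6k+2$, where $k$ is the maximum number of pairs sharing an edge, choosing the identification points spaced at least three apart, and then arguing that any cycle would have to pass through two identification points of the same gadget --- is stated as the goal in your plan but not actually carried out.
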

\begin{proof}
Let $G$ be an undirected graph and $\mathcal{S} = \{\{e_1, f_1\}, \ldots, \{e_p, f_p\}\}$ be a set of pairs of distinct edges of $G$ with $p \geq 1$.
Let $k$ be the maximum number of pairs in $\mathcal{S}$ sharing the same edge and $P$ be the antidirected path of length $6k+2$ starting on source vertex.
Thus $P$ has $6k+3$ vertices and ends in a source vertex.
For every $e \in E(G)$ with extremities $u,v$ add to $D$ a private copy $P_e$ of $P$ with first vertex $u$ and last vertex $v$.
We now distinguish three subsets of $V(D)$.
The \emph{original} vertices are the vertices that also exist in $G$.
The \emph{internal} vertices are $V(D) \setminus V(G)$.
The \emph{usable} vertices are determined as follows.
First, we enumerate the internal vertices of $P_e$ as $a_1, \ldots, a_{6k+1}$ as they appear in the antidirected path from $u$ to $v$.
By construction, every $a_j$ is a sink if $j$ is odd and a source if $j$ is even.
For the usable vertices of $P_e$, we choose the set $\{a_{3j+1} \mid j \in \{0, \ldots, 2k\}\}$.
In other words, we choose $a_1$, which is a sink, skip the following two, then take the source $a_4$, and continue like this until we choose $a_{6k+1}$, which is a sink.
Thus $P_e$ has $2k+1$ usable vertices, $k+1$ sinks and $k$ sources, and the crucial property to ensure that the construction yields a DAG in the end is that all usable vertices of $P_e$ are at more than two indexes of distance from one another.

We are now ready to finish the construction. 
For each pair $\{e_i, f_i\} \in \mathcal{S}$ with $i \in [p]$, we identify one usable sink $P_{e_i}$ with one usable source of $P_{f_j}$.
Thus the vertex resulting from this identification has in-degree and out-degree two.
The choice of the length of $P$ implies that we can identify in pairs without repetition.

The proof now follows exactly as in the original proof of~\cite[Theorem 2.2]{BANGJENSEN201768}.
We claim that $D$ has an antidirected path from $s$ to $t$ if and only if $G$ has a path from $s$ to $t$ which uses at most one edge from each of the pairs in $\mathcal{S}$.
Indeed, let $Q$ be a path from $s$ to $t$ in $G$ using edges $h_1, h_2, \ldots, h_r$ in this order.
Then the concatenation of antidirected paths $P_{h_1}, P_{h_2}, \ldots, P_{h_r}$ is an antidirected path (notice that every original vertex is a source in $D$), since the identifications done in the construction of $D$ were done only for paths corresponding to pairs in $\mathcal{S}$.
For the other direction, let $Q'$ be an antidirected path from $s$ to $t$ in $D$.
By construction, all internal vertices have in-degree and out-degree at most two and the only vertices with $\deg^+_D(v) = \deg^-_D(v) = 2$ are the vertices resulting from the identifications done in the construction of $D$.
Thus, if $u$ is such a vertex, built from an identification associated with the pair $\{e_i, f_i\} \in \mathcal{S}$, then $u$ has both its in-neighbors in $P_{e_i}$ and both its out-neighbors in $P_{f_i}$.
This implies that for every edge $e$ appearing in a pair in $\mathcal{P}$, the antidirected path $Q'$ either traverses $P_{e}$ entirely or avoids all the internal vertices of $P_{e}$.
Now, observing the edges of $G$ associated with the antidirected paths forming $Q'$, we can construct a path in $G$ from $s$ to $t$.
It remains to prove that $D$ is a DAG.

By contradiction, assume that $C$ is a cycle of $D$.
Clearly $C$ cannot contain an original vertex since all such vertices are sources.
In fact, $C$ has to be formed entirely by vertices created by the identification of usable vertices in the construction of $D$, for only those are not sinks nor sources.
This implies that $C$ contains a directed path between two usable vertices in the same antidirected path $P_e$, contradicting the choice of usable vertices.
Thus $D$ is free of cycles and the result follows.
\end{proof}

The difference between our proof and the one of~\cite[Theorem 2.2]{BANGJENSEN201768} is in the choice of the vertices which can be used in the identifications.
By taking a large antidirected path $P$ (with length $6k+2$ instead of $2k+2$), we ensure that there is no path between usable vertices of $P$ and thus no cycle can appear in $D$.
In \autoref{fig:cycles-in-antidirected-construction} we show an example of a cycle that can appear in the original construction from~\cite{BANGJENSEN201768}.

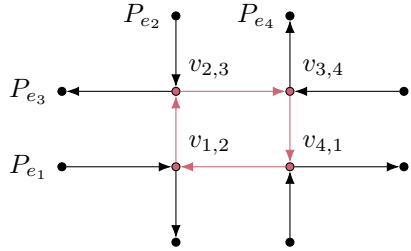
\begin{figure}[h]
\centering
\scalebox{1}{\begin{tikzpicture}[xscale = 1.5]
\node[blackvertex, scale=.6, label=45:{$v_{1,2}$}, fill=goodred] (v12) at (0,0) {};
\node[blackvertex, scale=.6, label=45:{$v_{2,3}$}, fill=goodred] (v23) at (0,1) {};
\node[blackvertex, scale=.6, label=45:{$v_{3,4}$}, fill=goodred] (v34) at (1,1) {};
\node[blackvertex, scale=.6, label=45:{$v_{4,1}$}, fill=goodred] (v41) at (1,0) {};

\node[blackvertex, scale=.6, label = 180:$P_{e_1}$] (pe1) at ($(v12) + (-1,0)$) {};
\node[blackvertex, scale=.6] (pe1x) at ($(v41) + (1,0)$) {};
\draw[arrow] (pe1) to (v12);
\draw[arrow, goodred] (v41) to (v12);
\draw[arrow] (v41) to (pe1x);

\node[blackvertex, scale=.6, label = 180:$P_{e_2}$] (pe2) at ($(v23) + (0,1)$) {};
\node[blackvertex, scale=.6] (pe2x) at ($(v12) + (0,-1)$) {};
\draw[arrow] (pe2) to (v23);
\draw[arrow, goodred] (v12) to (v23);
\draw[arrow] (v12) to (pe2x);

\node[blackvertex, scale=.6, label = 180:$P_{e_3}$] (pe3) at ($(v23) + (-1,0)$) {};
\node[blackvertex, scale=.6] (pe3x) at ($(v34) + (1,0)$) {};
\draw[arrow] (v23) to (pe3);
\draw[arrow, goodred] (v23) to (v34);
\draw[arrow] (pe3x) to (v34);

\node[blackvertex, scale=.6, label = 180:$P_{e_4}$] (pe4) at ($(v34) + (0,1)$) {};
\node[blackvertex, scale=.6] (pe4x) at ($(v41) + (0,-1)$) {};
\draw[arrow] (v34) to (pe4);
\draw[arrow, goodred] (v34) to (v41);
\draw[arrow] (pe4x) to (v41);
\end{tikzpicture}}%
\caption{Example of a cycle resulting from the original construction of \cite[Theorem 2.2]{BANGJENSEN201768}, where the antidirected path $P$ has length $2k+2$. Notice the cycle in the center (in red). In the example, the ordered pairs of edges are $\{\{e_1, e_2\}, \{e_2, e_3\}, \{e_3, e_4\}, \{e_4, e_1\}\}$. Each $v_{i,j}$ with $i,j \in [4]$ in the figure appears as the result of the identification associated with the pair $\{e_i, e_j\}$.}
\label{fig:cycles-in-antidirected-construction}
\end{figure}

\medskip
\noindent\textbf{Expansions of transitive tournaments.}
The hardness proof of this case is as easy as the reduction from \textsc{Clique} to the case when $H$ is a transitive tournament.
Namely, given an instance of \textsc{Clique} with input undirected graph $G$, we order the vertices of $G$ arbitrarily and build a digraph $D$ on $V(G)$ by adding an arc from $u$ to $v$ in $D$ whenever $\{u,v\} \in G$ and $u$ precedes $v$ in the order.
Let $D'$ be the expansion of $D$.
We remind the reader of the loops in the original vertices of $D'$. 
Now, $G$ contains a clique of size $k$ if and only if $D'$ contains the expansion of a transitive tournament of size $k$.
To prove this, it suffices to see that there is an edge $\{u,v\} \in E(G)$ with $u$ preceding $v$ if and only if there is a path from $u$ to $v$ in $D'$, where $u$ and $v$ are original vertices, and the result follows.

Now, we show that the directed treewidth of expansions of DAGs (so, in particular, of transitive tournaments) is at most three.
For this, we use the following result by Kintali~\cite{KINTALI201783}.
\begin{proposition}[Kintali~\cite{KINTALI201783}]
\label{proposition:directed-treewidth-and-circumference}
Let $D$ be a digraph and $k$ be the maximum length of a cycle in $D$.
Then $\dtw (D) \leq k + 1$.
\end{proposition}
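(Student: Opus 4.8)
I would prove the equivalent statement that a digraph $D$ whose longest cycle has $k$ vertices admits an arboreal decomposition of width at most $k+1$. The argument proceeds in three stages: a reduction to the strongly connected case, a structural lemma extracting the consequence of bounded circumference as a bounded ``depth-span'' of arcs, and finally an explicit construction of the decomposition whose guards are certified small by Menger's theorem together with the circumference bound.

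\textbf{Step 1: reduction to strong components.} First I would show that $\dtw(D)=\max_C \dtw(C)$, the maximum taken over the strong components $C$ of $D$. Order the strong components $C_1,\dots,C_m$ by a topological order of the condensation, so that every arc between distinct components goes from some $C_a$ to some $C_b$ with $a<b$. Given arboreal decompositions of each induced subdigraph $C_i$, glue them along a path of subtrees, hanging the decomposition of $C_{i+1}$ below a leaf of that of $C_i$ and assigning the empty set to every connecting edge. Using \autoref{def:guarded-sets}, each suffix $C_i\cup\cdots\cup C_m$ is $\emptyset$-guarded (a walk cannot re-enter it, as there are no backward inter-component arcs), and each original guard of $C_i$ still certifies the enlarged below-set, since once a walk reaches a later component it never returns; the bookkeeping follows the $Z$-guarded framework of~\cite{Johnson.Robertson.Seymour.Thomas.01}. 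The width is therefore $\max_i(\text{width of }C_i)$, and since every cycle lives inside a single strong component, each $C_i$ still has circumference at most $k$. Hence it suffices to treat strongly connected $D$.

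\textbf{Step 2: the window lemma.} For strongly connected $D$, fix a root $r$ and let $L_0=\{r\},L_1,L_2,\dots$ be the BFS layers by shortest-path distance from $r$. The key observation is that \emph{every arc joins two layers within $k$ of each other}. The BFS property forces $\mathrm{layer}(w)\le\mathrm{layer}(u)+1$ for every arc $(u,w)$; and if $(u,w)$ jumps backwards, strong connectivity gives a simple path from $w$ to $u$ of length at least $\mathrm{layer}(u)-\mathrm{layer}(w)$, which closes with $(u,w)$ into a cycle of length at least $\mathrm{layer}(u)-\mathrm{layer}(w)+1\le k$. Consequently $\mathrm{layer}(u)-\mathrm{layer}(w)\le k-1$ for every arc. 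In particular, for a suffix $S$ consisting of all layers $\ge i$, the \emph{only} arcs entering $S$ from outside originate in layer $L_{i-1}$, so $S$ is $L_{i-1}$-guarded; and any walk that leaves $S$ can only reach layers in $[\,i-(k-1),\,i-1\,]$ before it must cross back through $L_{i-1}$.

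\textbf{Step 3: construction, and the main obstacle.} The layer $L_{i-1}$ is a valid guard but is potentially \emph{wide}, so I would replace it by a minimum vertex set $X$ that meets every ``return route'' from $S$ back into $S$. Here the circumference bound pays off through Menger's theorem: if more than $k$ internally disjoint such return routes existed, the layered structure of Step~2 would let one weave them into a single cycle threading repeatedly between the nearby layers, of length exceeding $k$ --- a contradiction. Hence $|X|\le k$. I would then build the arboreal tree recursively: peel off the deepest suffix with such a guard $X$, branch over the strong components of what remains after deleting $X$, and reuse the \emph{same} set $X$ on the edges to those child subtrees (interleaving the vertices of $X$ as singleton bags). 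Each node then sees its singleton bag together with incident guards drawn from a single set $X$ of size $\le k$, giving $|W_r\cup\bigcup_{e\sim r}X_e|\le k+2$, i.e.\ width $\le k+1$. The genuinely hard part is precisely this last bookkeeping: turning the window lemma, which bounds only the \emph{depth-span} of arcs, into guards of bounded \emph{size} in spite of arbitrarily wide layers, and arranging the recursion so that guards are \emph{reused rather than accumulated} down the branches (the naive cycle-removal recursion stacks one guard per level and yields only a bound like $2k$). Formalising the width-versus-circumference trade-off inside each window --- that a strongly connected digraph of circumference $k$ cannot support more than $k$ disjoint return routes across a layer boundary --- is where the tight constant $k+1$ is earned.
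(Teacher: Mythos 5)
The decisive flaw is the packing--covering claim on which all of Step~3 rests: it is simply false that a strongly connected digraph of circumference $k$ admits at most $k$ internally disjoint return routes across a layer boundary, and consequently no small guard $X$ need exist for your layer suffixes. Concretely, take $D$ with vertex set $\{r\} \cup \{c_1, \ldots, c_m\} \cup \{g_1, \ldots, g_m\}$ and a digon (pair of opposite arcs) between $r$ and each $c_j$ and between each $c_j$ and $g_j$. This digraph is strongly connected and every cycle is a digon, so $k = 2$; the BFS layers from $r$ are $L_0 = \{r\}$, $L_1 = \{c_1, \ldots, c_m\}$, $L_2 = \{g_1, \ldots, g_m\}$, and the suffix $S = L_2$ admits the $m$ pairwise disjoint return routes $g_j \to c_j \to g_j$. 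Moreover, by \autoref{def:guarded-sets} any guard of $S$ must contain every $c_j$, so the minimum guard for this suffix has size $m$, not at most $k$ --- the recursion of Step~3 cannot even start. The underlying error is that disjoint return routes can only be ``woven into a single long cycle'' if consecutive routes can be concatenated by paths \emph{inside} $S$, and a layer suffix need not be strongly connected: its pieces hang off the boundary independently, exactly as in the example. Relatedly, Menger's theorem does not apply in the form you invoke it: return routes from $S$ back to $S$ are not paths between two fixed disjoint terminal sets, and min-hitting-set versus max-disjoint-routes duality fails for such closed-walk systems --- but the counterexample shows that even the packing bound itself is wrong, so this is not merely a formalization issue.

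Your Steps~1 and~2 are correct (the reduction to strong components is standard, and the window lemma that every arc spans at most $k-1$ layers backwards is a valid and pleasant consequence of bounded circumference), but the example above shows that a bounded depth-span of arcs provably does not translate into bounded-size guards for BFS-layer suffixes, so the approach needs a different choice of below-sets, not just more careful bookkeeping. For the comparison you were asked about: the paper does not prove this statement at all --- it is quoted as a proposition from Kintali~\cite{KINTALI201783} --- and Kintali's argument is not a layering argument; in essence it recurses on strong components obtained by deleting one vertex at a time and uses the circumference bound to control how the guards of nested components persist (a walk returning to a deep component through a long-discarded guard vertex would force a cycle threading too many of the deleted vertices). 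That is precisely the ``reuse rather than accumulate'' mechanism you correctly identified as the crux, but it has to be implemented component-wise, where strong connectivity of the current piece is maintained, rather than layer-wise, where it is lost.
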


We also adopt the following notations.
If $D'$ is an expansion of a digraph $D$ and $v$ is an internal vertex of $D'$, we denote by $\sigma(v)$ the original vertex of $D'$ that, when processed in the construction of $D'$, resulted in the addition of $v$ to $D'$.
Thus, if $v$ is a type-1 internal vertex then there is a path from $\sigma(v)$ to $v$, and if $v$ is a type-2 internal vertex then there is a path from $v$ to $\sigma(v)$.
If $v$ is an original vertex, we define $\sigma(v) = v$.

Additionally, for an original vertex $u \in V(D')$, we denote by $V^+_u$ the vertices of the out-arborescence with root $u$ added to $D'$ when the out-neighborhood of $u$ in $D'$ is processed in the first part of the construction.
Thus $V^+_u$ has only type-1 internal vertices.
Similarly, we denote by $V^-_u$ the vertices of the in-arborescence with root $u$ added to $D'$ when the in-neighborhood of $u$ in $D'$ is processed in the second part of the construction.
Thus $V^-_u$ has only type-2 internal vertices.

\begin{lemma}\label{lem:directed-treewidth-expansion-of-dags}
Let $D$ be a DAG and $D'$ be the expansion of $D$.
Then the directed treewidth of $D'$ is at most three.
\end{lemma}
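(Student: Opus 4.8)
The plan is to reduce the statement to a bound on the length of the longest directed cycle of $D'$ and then invoke \autoref{proposition:directed-treewidth-and-circumference}. Concretely, I would show that every directed cycle of $D'$ has length at most two; since that proposition gives $\dtw(D') \le k+1$ whenever $k$ is the maximum length of a directed cycle, taking $k \le 2$ yields $\dtw(D') \le 3$ at once. Cycles of length at most two certainly exist: every original vertex carries a loop (length one), and every pair of internal vertices joined by a bidirectional pair of arcs (added to make the local arborescences ``almost strong'') forms a cycle of length two. The entire content of the lemma is therefore the claim that $D'$ has no directed cycle of length three or more.

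To prove this claim I would first rule out cycles meeting two distinct original vertices. If a directed cycle $C$ passed through original vertices $u \neq v$, then the two arc-disjoint portions of $C$ would be a directed $u$-$v$ path and a directed $v$-$u$ path in $D'$; by the crucial property of the expansion (a directed path between original vertices exists in $D'$ if and only if it exists in $D$), this would yield both a $u$-$v$ and a $v$-$u$ path in $D$, contradicting that $D$ is acyclic. Hence any cycle of length at least three meets at most one original vertex.

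Next I would exploit the local structure of the arborescences together with the orientation rule that no arc goes from a type-2 to a type-1 vertex. Within a single out-arborescence $V^+_u$, the type-1 vertices induce a tree all of whose edges are bidirectional (the tree of the arborescence, reconnected by the single extra edge between the two children of the root when $\deg^+_{D'}(u)=2$); since the underlying graph is a tree, the only directed cycles it carries have length two, and the same holds verbatim for the type-2 vertices of an in-arborescence $V^-_v$. Thus a long cycle cannot live inside one arborescence, and to escape it must use the only available transitions: from a type-1 leaf of $V^+_u$ one may pass directly to an original vertex $v' \in N^+_D(u)$ or into the type-2 vertices of $V^-_{v'}$, and from the type-2 vertices of $V^-_{v'}$ one may move only within $V^-_{v'}$ or towards the root $v'$. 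Now suppose $C$ is a directed cycle of length at least three. If $C$ contains an original vertex $u$, then leaving $u$ it enters $V^+_u$; it cannot return to $u$ from inside $V^+_u$ (the root's arcs point outward), so it must exit $V^+_u$, and every exit (directly, or after passing through type-2 vertices of some $V^-_{v'}$) reaches an original vertex $v' \in N^+_D(u)$; as $C$ contains only the original vertex $u$, this forces $v' = u$, i.e. $u \in N^+_D(u)$, a loop of $D$, which is impossible. If instead $C$ contains no original vertex, the same transition rules trap it inside the internal vertices of a single arborescence, since a type-1 $\to$ type-2 step can never be reversed and leaving a $V^-_v$ requires reaching the original root $v$; thus $C$ is a directed cycle in a tree with doubled edges, and again has length at most two.

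Assembling these observations shows $k \le 2$, and \autoref{proposition:directed-treewidth-and-circumference} gives $\dtw(D') \le 3$. I expect the main obstacle to be the argument of the third paragraph: making it fully rigorous that the monotone type-1 $\to$ type-2 transitions, combined with the absence of loops in $D$ and the at-most-one-original-vertex bound, genuinely preclude closing a cycle of length at least three. In particular, one must carefully track which original vertex each junction leaf targets, and verify that the ``tree with doubled edges'' induced inside each arborescence (including the extra edge at the root) admits no directed cycle longer than two.
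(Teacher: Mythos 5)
Your proposal is correct and follows essentially the same route as the paper: both reduce the lemma to showing that $D'$ has no directed cycle of length greater than two and then invoke Kintali's bound $\dtw(D') \le k+1$ on the directed circumference $k$. The only difference is in how the long cycles are excluded -- the paper derives from a hypothetical cycle a closed walk $u \to \sigma(w') \to v \to u$ in $D$ using the absence of type-2-to-type-1 arcs, whereas you case-split on the number of original vertices the cycle visits and use the doubled-tree structure of each arborescence -- but these rest on the same structural facts and are equally valid.
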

\begin{proof}
We show that, in $D'$, there are no directed cycles of length greater than two.
Let $u$ be an internal vertex.
By contradiction, assume that $C$ is a cycle of length at least three in $D'$. 
Then $C$ contains both type-1 and type-2 internal vertices.
This holds due to the fact that $C$ cannot be contained entirely in $V^+_u$ nor $V^-_u$ for any original vertex $u$.

Let $u$ be an original vertex such that $V^+_u$ contains a vertex $w \in V(C)$.
In this case, by the discussion above, when following $C$ starting from $w$, at some point the path leaves $V^+_u$ using a path to a type-2 vertex $w'$.
Then in $D$ there is a path from $u$ to the original vertex $\sigma(w')$.
Now, the path in $C$ from $w'$ back to $w$ has to reach $u$ in order to get back into $V^+(u)$.
Since there are no arcs between type-2 and type-1 internal vertices, this implies that $C$ contains a path from $w'$ to an original vertex $v$ (it is possible that $v = \sigma(w')$) and then back to $w$.
Thus $D$ has a path from $u$ to $\sigma(w')$, then to $v$ and back to $u$, contradicting the fact that $D$ is a DAG.

We conclude that there are no cycles of length greater than two in $D'$ and the result follows by applying \autoref{proposition:directed-treewidth-and-circumference}.
\end{proof}

It remains to bound the breakability of expansions of transitive tournaments, which is the most interesting part of our reduction. 

\begin{lemma}\label{lem:breakability-of-expanded-TTk}
Let $T$ be a transitive tournament, $T'$ be the expansion of $T$, and $w$ be a non-negative integer.
Then $\breakability_w(T') \leq 4^w$.
\end{lemma}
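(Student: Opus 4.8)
The plan is to prove the bound by induction on $w$, the engine being a decomposition of $T'$ relative to a single vertex. The starting point is the \emph{unilateral} property of $T'$ recorded in the construction of the expansion: for every pair $x,y\in V(T')$ there is a directed path from $x$ to $y$ or from $y$ to $x$. Equivalently, the strongly connected components of $T'$ admit a linear order $U_1,\dots,U_p$ such that there is a directed path from $U_a$ to $U_b$ whenever $a\le b$, and along any directed path the indices of the visited components are non-decreasing; I use this order throughout. For the base case $w=0$, a $0$-guarded set $X$ is $\emptyset$-guarded (\autoref{def:guarded-sets}), so no directed walk that starts and ends in $X$ may use a vertex outside $X$. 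Given $x,y\in X$, unilaterality yields a directed path between them in $T'$, and $\emptyset$-guardedness forces this path to stay inside $X$; hence $x$ and $y$ lie in the same weak component of $T'[X]$. Thus $T'[X]$ is weakly connected and $\breakability_0(T')\le 1=4^0$.

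The inductive engine, and what I regard as the main obstacle, is the following partition lemma. For every $v\in V(T')$ I claim that $V(T')\setminus\{v\}$ can be split into at most four sets $X^v_1,\dots,X^v_4$ that are \emph{(i) cohesive avoiding $v$} (any two vertices of the same $X^v_i$ are joined by a directed path in $T'-v$) and \emph{(ii) forward-ordered} (the four parts can be linearly ordered so that every arc of $T'-v$ between distinct parts goes from an earlier to a later part). To prove this I would place $v$ in its component $U_k$ and start from the three global pieces $\bigcup_{a<k}U_a$, $\bigcup_{a>k}U_a$, and $U_k\setminus\{v\}$: the first two are cohesive avoiding $v$ (a directed path joining two vertices in $\bigcup_{a<k}U_a$ only visits components of index $<k$, hence avoids $v$) and respect the component order. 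The genuinely hard part is to show that deleting $v$ from its \emph{own} component $U_k$ breaks $U_k$ into at most two cohesive pieces that slot consistently into the order; this is exactly where one must exploit the concrete shape of the components of an expanded transitive tournament — the local in-/out-arborescences augmented to ``almost strong'' digraphs, together with the absence of arcs from type-$2$ to type-$1$ internal vertices — which is what keeps the total count at four instead of growing without bound.

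Granting the partition lemma, the inductive step runs as follows. Let $X$ be $w$-guarded; if it is $Z$-guarded with $|Z|<w$ the induction hypothesis applies directly, so assume $|Z|=w\ge 1$. Pick $v\in Z$ and apply the partition lemma to $v$. Since $v\in Z$ we have $v\notin X$, so the sets $X\cap X^v_i$ partition $X$. Property (ii) is the crucial point: any directed walk inside $X$ that avoids $v$ visits the parts in non-decreasing order, so it cannot leave a part $X^v_i$ and return to it. Combined with $Z$-guardedness this shows each $X\cap X^v_i$ is $(w-1)$-guarded once $v$ is forbidden: a walk in $T'-Z$ with both ends in $X\cap X^v_i$ stays inside $X$ (as $X$ is $Z$-guarded), hence cannot exit to another part and come back, so $Z\setminus\{v\}$ guards $X\cap X^v_i$ in $T'-v$. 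Property (i), again via (ii), also makes $T'[X^v_i]$ itself unilateral (a cohesive path cannot leave its part and return, so it stays inside), so the induction hypothesis applies inside this unilateral induced subdigraph and bounds the weak components of $T'[X\cap X^v_i]$ by $4^{w-1}$. Using that the number of weak components of a graph is at most the sum over any vertex partition of the numbers of weak components of the induced pieces, we obtain
\[
\breakability_w(T') \;\le\; \sum_{i=1}^{4} 4^{\,w-1} \;=\; 4^{w}.
\]

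The step that needs the most care when writing this up is the passage to $T'-v$ in the recursion. Because each subsequent level must keep avoiding all previously removed guard vertices, the induction should be phrased for the class of \emph{unilateral induced subdigraphs} $T'[W]$ rather than for $T'$ alone, and one must verify that the partition lemma can be re-applied \emph{inside} each part $T'[X^v_i]$ (using its own component order, in which $v$ no longer occurs) so that ``avoiding $v$'' is preserved automatically at every level. Checking that the parts inherit the structural properties needed for the four-part bound — i.e. that the class containing $T'$ is closed under taking the sets $X^v_i$ — is the bookkeeping that ties the structural analysis of the second paragraph to the clean induction above.
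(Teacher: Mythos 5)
Your skeleton matches the paper's proof — establish that $T'$ is unilateral, build for each vertex $v$ a partition of $V(T')\setminus\{v\}$ into at most four parts such that any two vertices in the same part are joined by a path avoiding $v$, then induct on the guard size charging $4^{w-1}$ to each part — but the heart of the argument, the four-part partition lemma, is stated and not proved, and that is where essentially all of the work lies. You explicitly defer ``the genuinely hard part,'' and the plan you sketch for it does not go through as stated. Take $v$ to be a type-1 internal vertex with two children in the out-arborescence of $\sigma(v)$. Its strong component $U_k$ (the internal vertices of $V^+_{\sigma(v)}$, pairwise joined by doubled arcs along the tree) splits upon deletion of $v$ into \emph{three} pieces: the part above $v$ and the two subtrees below it. No two of these are joined by a path in $T'-v$: inside $U_k$ every connection between them passes through $v$; and once a path leaves $U_k$ it lands on original vertices dominated by $\sigma(v)$ in the transitive tournament, from which $\sigma(v)$ — the unique external entrance to $U_k$ — is unreachable. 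So $U_k\setminus\{v\}$ does not break into ``at most two cohesive pieces,'' and adding the two global pieces $\bigcup_{a<k}U_a$ and $\bigcup_{a>k}U_a$ would give five parts. The paper's partition avoids this precisely by \emph{not} being a union of strong components: its parts are the reachability sets (avoiding $v$) of the two children $u_2,u_3$ of $v$, of the sibling/parent vertices $z$ and $u_1$, and the set of vertices that can reach $\sigma(v)$, each of which cuts across many strong components. Recovering a four-part bound therefore requires exactly the case analysis you omit.

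Two secondary gaps compound this. First, your induction leans on the additional ``forward-ordered'' property (ii) — every arc between distinct parts goes from an earlier to a later part — which is strictly stronger than what the paper proves and is equally unverified for the pieces inside $U_k$, where arcs run in both directions; the paper instead gets by with the weaker cohesion property alone, arguing that a path between two weak components of $T'[X]$ lying in the same part avoids $v$ and must therefore meet $Z\setminus\{v\}$. Second, you correctly observe that your recursion needs the partition lemma for all unilateral induced subdigraphs $T'[W]$ and that the class must be closed under taking the parts $X^v_i$, but this closure is asserted rather than proved; the paper sidesteps it by always partitioning $V(T')\setminus\{v\}$ itself and measuring guardedness in $T'$. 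As written, the proposal is a faithful outline of the paper's strategy with its essential structural content missing, and the one concrete route it proposes for that content appears to be incorrect.
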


\begin{proof}
We begin with the following claim.
\begin{claim}\label{claim:expansions-of-tournaments-almost-unilateral}
For every distinct $u,v \in V(T')$, there is a path in $T'$ from $u$ to $v$ or from $v$ to $u$ (or both).
\end{claim}
\begin{proof}[Proof of the claim.]
We consider three cases.
If both $u$ and $v$ are original vertices, the claim follows since $T$ is a transitive tournament.

Assume now that both $u$ and $v$ are internal vertices.
If $u$ is a type-1 internal vertex and $\sigma(u) = \sigma(v)$ the claim follows since the subdigraph of $T'$ induced by $V^+_{\sigma(u)} \setminus \{\sigma(u)\}$ is strongly connected. 
The argument is similar if $u$ is a type-2 internal vertex.
Assume now that $\sigma(u) \neq \sigma(v)$.

Assume that in $T$ there is an arc from $\sigma(u)$ to $\sigma(v)$.
If $u$ is a type-1  internal vertex, then the construction of $D'$ ensures that there is a path from $u$ to a vertex in $V^-_{\sigma(v)}$ and finally to $v$.
Notice that this path uses $\sigma(v)$ only if $v$ is a type-2 internal vertex.
If $u$ is a type-2 internal vertex, then there is a path from $u$ to $\sigma(u)$, then to $V^-_{\sigma(v)}$.
Similarly to the previous case, this implies the existence of a path from $u$ to $v$ using $\sigma(v)$ only if $v$ is a type-2 internal vertex.
If the arc in $T$ is from $\sigma(v)$ to $\sigma(u)$ then we find a path in $T'$ from $v$ to $u$ with an analogous proof.
Finally, assume that $u$ is an internal vertex and $v$ is an original vertex.
Then there is an arc in $T$ with endpoints $u$ and $\sigma(v)$.
The analysis is similar to the case when both $u$ and $v$ are internal vertices, and the claim follows.
\end{proof}

\autoref{claim:expansions-of-tournaments-almost-unilateral} implies that, if $X \subseteq V(T')$, then there is a path in $T'$ between any two weak components of the subdigraph induced by $X$.
Thus if $X$ is $Z$-guarded by some $Z \subseteq V(T')$, then every such path between weak components of $T'[X]$ must be intersected by $Z$.
The goal now is to show that, from every $v \in V(T')$, we can partition $V(T') \setminus \{v\}$ in at most four sets $\{X^v_i \mid 1 \leq i \leq 4\}$ such that for  any $1 \leq i \leq 4$ and any $x,y \in X^v_i$, there is a path from $x$ to $y$ not containing~$v$, or a path from $y$ to $x$ not containing~$v$.

Let $v \in V(T')$.
We describe how to partition $V(T') \setminus \{v\}$ when $v$ is a type-1 internal vertex and when $v$ is an original vertex.
The case when $v$ is a type-2 internal vertex is analogous to the type-1 case.

Let $v$ be a type-1 internal vertex.
First, let $u_1$ be the in-neighbor of $v$ in the shortest path $P$ from $\sigma(v)$ to $v$ in $T'$.
Notice that it is possible that $u_1 = \sigma(v)$.
If $\deg^+_{T'}(\sigma(v)) = 2$ we denote by $z$ the out-neighbor of $\sigma(v)$ in $T'$ that is not in $P$.
By construction, $v$ has at most two neighbors other than $z$ (if $u_1 = \sigma(v)$ and $z$ is defined) and $u_1$.
Let $u_2$ be one of those neighbors, which is guaranteed to exist by construction, and $u_3$ be the second one if it exists.

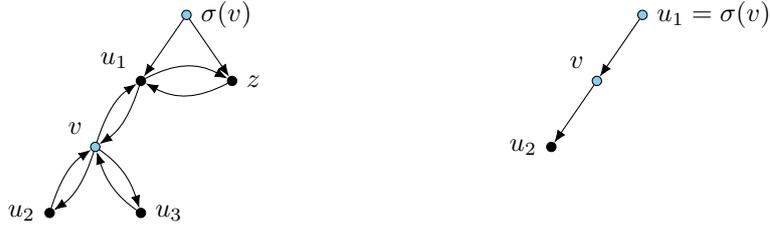
\begin{figure}[h]
\centering
\scalebox{1}{\begin{tikzpicture}[xscale=1.2, yscale=1.75]
\begin{scope}
\node[bluevertex, scale=.7, label=0:$\sigma(v)$] (orig) at (0,0) {};
\node[blackvertex, scale=.7, label=135:$u_1$] (u1) at ($(orig) + (-.5, -.5)$) {};
\node[blackvertex, scale=.7, label=0:$z$] (z) at ($(orig) + (.5, -.5)$) {};

\node[bluevertex, scale=.7, label=135:$v$] (v) at ($(u1) + (-.5, -.5)$) {};

\node[blackvertex, scale=.7, label=180:$u_2$] (u2) at ($(v) + (-.5, -.5)$) {};
\node[blackvertex, scale=.7, label=0:$u_3$] (u3) at ($(v) + (.5, -.5)$) {};

\foreach \u/\v in {u1/z, u1/v, v/u2, v/u3} {
    \draw[arrow] (\u) to [bend left = 20] (\v);
    \draw[arrow] (\v) to [bend left = 20] (\u);
}
\draw[arrow] (orig) to (u1);
\draw[arrow] (orig) to (z);
\end{scope}

\begin{scope}[xshift=5cm]
\node[bluevertex, scale=.7, label=0:{$u_1 = \sigma(v)$}] (orig) at (0,0) {};
\node[bluevertex, scale=.7, label=135:$v$] (v) at ($(orig) + (-.5, -.5)$) {};
\node[blackvertex, scale=.7, label=180:$u_2$] (u2) at ($(v) + (-.5, -.5)$) {};

\foreach \u/\v in {v/u2} {
    \draw[arrow] (\u) to  (\v);
}

\draw[arrow] (orig) to (v);


\end{scope}

\end{tikzpicture}}%
\caption{On the left, the selection of vertices $u_1, u_2, u_3$ and, $z$ when all of them exist. In the proof, it is possible that $u_1 = \sigma(v)$ and that $u_3$ and/or $z$ are non-existent. The case when both are non-existent is depicted on the right. In this case, $u_2$ is also an original vertex. $v$ and $\sigma(v)$ are in blue in both figures.}
\label{fig:partitioning-expansion-of-tournament-breakability-proof}
\end{figure}

We are ready to partition $V(T')$.

\begin{itemize}
  \item We denote by $X^v_2$ the set of all vertices $x \in V(T')$ such that there is a path in $T'$ from $u_2$ to $x$ avoiding $v$.
  \item If $u_3$ is defined, we denote by $X^v_3$ the set of all vertices $x \in V(T') \setminus X^v_2$ such that there is a path in $T'$ from $u_3$ to $x$ avoiding $v$. Otherwise, we set $X^v_3 = \emptyset$.
  \item We denote by $X^v_4$ the set of all vertices $x \in V(T')$ such that there is a path from $x$ to $\sigma(v)$.
  \item If $z$ is defined, we denote by $X^v_1$ the set of all vertices $x \in V(T') \setminus (X^v_2 \cup X^v_3)$ such that there is a path from $z$ to $x$ in $T'$ avoiding $v$.
  Otherwise, we set $X^v_1 = \emptyset$.
\end{itemize}

The vertex $u_1$ is an exception case.
If $u_1 = \sigma(v)$ then it was included in $X^v_4$.
Otherwise, if $z$ is defined it was included in $X^v_1$.
If none of those cases occurs, then we include $u_1$ in $X^v_1$.
In this case, $|X^v_1| = 1$.
Notice that, since $T$ is a DAG, no in-neighbors of $\sigma(v)$ in $T$ are in $X^v_2 \cup X^v_3 \cup X^v_4$.
Moreover, the construction of $X^v_2, X^v_3$, and $X^v_4$ ensures that the sets are pairwise disjoint.
Since $T$ is a tournament, every vertex of $T'$ appears in one of those sets and thus $\{X^v_i \mid i \in [4]\}$ defines a partition of $V(T') \setminus \{v\}$ (with some of the sets being possibly empty).

If $v$ is an original vertex, we simply split $V(T') \setminus \{v\}$ into two sets $X^v_1, X^v_2$.
The first contains every vertex $x \in V(T')$ other than $v$ such that there is a path from $x$ to $v$ in $T'$, and $X^v_2 = (V(T') \setminus X^v_1) \setminus \{v\}$.
In this case, we set $X^v_3 = X^v_4 = \emptyset$.

In any case, by construction, for every $y \neq v$, it holds that $V^+_{\sigma(y)} \cup V^-_{\sigma(y)}$ is entirely contained in one set $X^v_i$ with $i \in [4]$.
We now show that for each $i \in [4]$, between vertices $x,y$ contained in a set $S^v_i$, one can always find a path from $x$ to $y$ or from $y$ to $x$ (or both) avoiding $v$.
\begin{claim}\label{claim:paths-avoiding-v-in-partition-of-expansions}
Let $C_1, C_2$ be non-empty and disjoint subsets of $V(T')$ with $C_1, C_2 \subseteq X^v_i$, for some $i \in [4]$.
Then there is a path from $C_1$ to $C_2$ or from $C_2$ to $C_1$ not using $v$.
\end{claim}
\begin{proof}[Proof of the claim.]
If  $C_1$ intersects some $V^+_s \cup V^-_s$ and $C_2$ intersects $V^+_t \cup V^-_t$ for some original vertices $s,t \neq \sigma(v)$, then there is a path from  $C_1$ to $C_2$ or from $C_2$ to $C_1$ avoiding $v$, by the construction of $T'$ and since $T$ is a tournament, which guarantees the existence of an arc with endpoints $s$ and $t$ in $T$ if $s \neq t$.
The same holds if $C_1$ and $C_2$ intersect $V^+_{\sigma(v)}$ since $V^+_{\sigma(v)} \cap X^v_i$ induces a strongly connected digraph, and the same holds for $V^-_{\sigma(v)}$.
The only remaining cases are when one of $C_1, C_2$ intersects $V^+_{\sigma(v)}$ or $V^-_{\sigma(v)}$ (notice that $V^+_{\sigma(v)}$ and $V^-_{\sigma(v)}$ cannot intersect the same set $X^v_i$ since $T$ is a DAG), and the other intersects $V^+_{s} \cup V^-_{s}$, for some $s \neq v$.
The analysis of those cases is similar, and the claim follows.
\end{proof}

Now let $X$ be a $w$-guarded set of $T'$.
By induction on $k$, we prove that the number of weak components of $T'[X]$ is at most $4^k$ when $X$ is $k$-guarded.

For the base case, assume that $k = 1$ and that $X$ is guarded by a set $Z = \{v\}$.
Let $\{X^v_i \mid i \in [4]\}$ define a partition of $V(T') \setminus \{v\}$, as previously defined.
If $D[X \cap V(T')]$ has  at least five weak components, then at least two of them, say $C$ and $C'$, are contained in the same set $X^v_j$ for some $j \in [4]$. 
Then,  by~\autoref{claim:paths-avoiding-v-in-partition-of-expansions}, there is a path in $T'$ between a vertex in $V(C)$ and a vertex in $V(C')$ avoiding $v$, contradicting our assumption that $X$ is $Z$-guarded.
Thus $D[X \cap V(T')]$ has at most four weak components and the base case holds.

For the induction step, assume that the result holds for $k \geq 1$, that $X$ is guarded by a vertex-minimal set $Z$ with $|Z| = k+1$, and let $v \in Z$.
As in the previous case, $\{X^v_i \mid i \in [4]\}$ defines a partition of $V(T') \setminus \{v\}$.
For $i \in [4]$ let $c_i$ be the number of weak components of $D[X \cap V(T') \cap X^v_i]$.
Since $X \cap X^v_i$ is $k$-guarded (as $v \not \in X^v_i)$, the induction hypothesis implies that $c_i \leq 4^{k}$.
Thus the number of weak components of $T'[X]$ is at most $\sum_{i \in [4]}c_i \leq 4^{k+1}$, and the lemma follows.
\end{proof}

\medskip
\noindent\textbf{Disjoint union of $2$-out-stars or $2$-in-stars.}
For this case, we present a reduction from an \NP-complete matching problem in bipartite graphs.
\begin{proposition}[Plaisted and Zaks~\cite{PLAISTED198065}]\label{proposition:np-complete-matching-bipartite}
Let $G$ be a bipartite graph with vertex partition $\{V_1, V_2\}$ and maximum degree three.
Let $\mathcal{P}_1, \mathcal{P}_2$ be partitions of $V_1$ and $V_2$, respectively, into parts of size at most two.
It is \NP-complete to decide if $G$ contains a perfect matching $M$ such that no two distinct edges $\{u,v\}, \{u',v'\} \in M$ have the property that $u,u'$ are in the same part of $\mathcal{P}_1$ and $v,v'$ are in the same part of $\mathcal{P}_2$.
\end{proposition}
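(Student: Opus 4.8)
The plan is to first settle membership in \NP and then establish \NP-hardness by a gadget reduction. Membership is immediate: a candidate perfect matching $M$ is a certificate of linear size, and consistency can be verified in polynomial time. Here it is convenient to record the following reformulation, immediate from the definition: a perfect matching $M$ is consistent if and only if, for every part $A \in \mathcal{P}_1$ and every part $B \in \mathcal{P}_2$, at most one edge of $M$ joins a vertex of $A$ to a vertex of $B$, since two edges of $M$ violating the stated condition are exactly two edges running between a common $\mathcal{P}_1$-part and a common $\mathcal{P}_2$-part. Checking this count over all $O(|V_1|\cdot|V_2|)$ pairs of parts is polynomial, so the problem is in \NP.

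For \NP-hardness I would reduce from a bounded-occurrence variant of $3$-\textsc{SAT} (each variable occurring in a bounded number of clauses), which is \NP-complete; the bound on the number of occurrences is precisely what lets us keep the maximum degree of $G$ at three. Fix such a formula $\varphi$ with variables $x_1, \ldots, x_n$ and clauses $C_1, \ldots, C_m$, and build $G$ together with $\mathcal{P}_1, \mathcal{P}_2$ as follows. For each variable $x_i$ I would introduce a \emph{choice gadget}: a size-two part $A_i \in \mathcal{P}_1$ whose two vertices are joined to two designated right-hand parts representing the two truth values. By the reformulation above, the two vertices of $A_i$ cannot both be matched into the same right part, so exactly one is matched into the ``true'' part and one into the ``false'' part; this forced alternation realizes a Boolean choice for $x_i$. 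Each literal occurrence is then represented by an edge leaving the corresponding value part, and the truth value is propagated to all occurrences of $x_i$ along a \emph{consistency chain} of size-two parts, wired so that every vertex keeps degree at most three.

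For each clause $C_j = (\ell_1 \vee \ell_2 \vee \ell_3)$ I would introduce a \emph{clause gadget}: a size-two part $B_j \in \mathcal{P}_2$ together with auxiliary size-two parts, designed so that $B_j$ can be saturated by a consistent matching precisely when at least one of the three literals of $C_j$ is set to true by the choice gadgets. The three literal edges feeding the gadget compete for the same part-slots, and the ``at most one edge per part-pair'' constraint forces the matching to route the clause through a satisfied literal, while a clause all of whose literals are false leaves $B_j$ impossible to saturate without a violation. The correctness argument then runs in both directions: from a satisfying assignment one matches each choice gadget according to the truth value and saturates every clause gadget through a true literal, yielding a consistent perfect matching; conversely, a consistent perfect matching forces a single truth value at each variable (by the alternation in the choice gadget and its chain) and forces each clause gadget to be saturated through a true literal, hence a satisfying assignment. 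Since each gadget has constant size and the number of occurrences is bounded, $G$ has maximum degree three, all parts have size at most two, and the construction is polynomial.

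The main obstacle I anticipate is the simultaneous enforcement of all three constraints inside the clause gadget: a clause has fan-in three, which pushes directly against the maximum-degree-three bound, so the three literal edges must be funneled into a size-two part through auxiliary size-two parts without creating a vertex of degree four and without introducing \emph{spurious} consistent matchings that saturate an unsatisfied clause. Getting this gadget right, together with a consistency chain that transmits a variable's value to all of its occurrences while respecting the degree and part-size bounds, is where the real work lies; once the gadgets are in place, the verification that satisfying assignments and consistent perfect matchings correspond is routine.
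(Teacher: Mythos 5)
First, a framing remark: the paper does not prove this proposition at all --- it is imported as a black box from Plaisted and Zaks~\cite{PLAISTED198065} --- so there is no in-paper proof to compare against. Judged on its own terms, your proposal has a genuine gap. The \NP-membership argument and your reformulation of consistency (a perfect matching is consistent if and only if at most one of its edges runs between any fixed pair of parts $A \in \mathcal{P}_1$, $B \in \mathcal{P}_2$) are correct, and reducing from a bounded-occurrence variant of $3$-\textsc{SAT} is a reasonable plan that is in the spirit of the original source. But the reduction is never actually carried out: the choice gadget, the consistency chain, and above all the clause gadget are specified only by the behaviour you want them to have (``designed so that $B_j$ can be saturated by a consistent matching precisely when at least one literal is true''), not by an explicit construction whose correctness is verified. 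You yourself identify the crux --- funnelling fan-in three into a size-two part while keeping maximum degree three and excluding spurious consistent matchings --- and then defer it. A step of the form ``design a gadget with property $P$'' is exactly the step that can fail, so without exhibiting the gadgets and checking both directions of the equivalence on them, what you have is a proof outline rather than a proof.

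To make the difficulty concrete: in your choice gadget, forcing the alternation via consistency alone requires each of the two vertices of $A_i$ to be adjacent to both value parts, which already consumes two of the three edges available at each such vertex; the consistency chain must then transmit the chosen value to up to three literal occurrences using only the one residual edge per vertex, and the clause gadget must accept exactly the satisfied configurations among $2^3-1$ possibilities while every auxiliary part has size at most two. Each of these constraints is individually manageable, but their simultaneous satisfaction is precisely the content of the theorem, and it is the part your write-up omits. Either construct and verify the gadgets explicitly, or do as the paper does and cite~\cite{PLAISTED198065} for the result.
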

Adopting terminology from~\cite{PLAISTED198065}, we say that such a matching $M$ is \emph{consistent} with $\mathcal{P}_1$ and $\mathcal{P}_2$.
With this definition, we go straight into the proof.
The key property of \autoref{proposition:np-complete-matching-bipartite} is the bound on the size of the sets forming $\mathcal{P}_1$ and $\mathcal{P}_2$, which allows us to define a finite set of configurations on which we can add the stars that we wish to find in the instance of {\sc Subdigraph Isomorphism}.

\begin{proof}[Proof of item \lipItem{3.} of \autoref{theorem:hardness results}.]
We first prove the case when $H$ is the disjoint union of $2$-out-stars.
Let $G$ be a bipartite graph with vertex partition $\{V_1, V_2\}$ and maximum degree three.
Let $\mathcal{P}_1, \mathcal{P}_2$ be partitions of $V_1$ and $V_2$, respectively, into parts of size at most two.
To build the DAG $D$, we first orient ever edge of $G$ from $V_1$ to $V_2$.
Then, for each $e \in E(D)$ with $e = \{u,v\}$, add to $D$ a vertex $c(e)$ with out-neighbors $u$ and $v$.
We can assume that $|V_1| = |V_2|$ since otherwise there is no perfect matching in $G$.
The idea is that by taking stars with centers $c_e$ we are selecting non-adjacent edges of $G$.
However, at this point nothing forbids a hypothetical algorithm from choosing two edges breaking the consistency property that we want to maintain.
The strategy then is to add more stars.
Namely, we start with a $k' = 0$ and in each of the four cases of the construction, distinguished by how many arcs there are between parts in $\mathcal{P}_1$ and $\mathcal{P}_2$, we add new stars in a way that ensures the identification of a matching that is consistent with the partitions.
In each case, we update the value of $k'$ accordingly and at the end ask for $|V_1| + k'$ stars.
The extra $k'$ \emph{does not} guarantee that a consistent and perfect matching is found but we show that, even if that is not the case, we can still efficiently extract such a matching from the outputted stars.
In order to keep track of the two types of stars, the ones corresponding to edges of $G$ and the ones ensuring that good choices are possible, we introduce a set $F$ of vertices which can increase in size depending on the cases below.
We start with $F = \emptyset$.

Now for each $A,B$ with $A \in \mathcal{P}_1$ and $B \in \mathcal{P}_2$, we denote by $E(A, B)$ the set of arcs from $A$ to $B$ in $D$ and consider four cases in the construction.
See~\autoref{fig:construction-union-of-2-stars} for an illustration of the four cases.

\begin{enumerate}
  \item $|E(A, B)| \leq 1$. In this case, we maintain $k'$ as it is, and $F$ as it is.
  \item $|E(A, B)| = 2$. Let $e_1, e_2$ be the arcs from $A$ to $B$ in $D$. We add to $F$ a new vertex $v$ with three out-neighbors: $c(e_1), c(e_2)$, and a newly added vertex $v'$. We increase $k'$ by one.
  \item $|E(A, B)| = 3$. Let $e_1, e_2$, and $e_3$ be the arcs from $A$ to $B$ in $D$. We add to $F$ a new vertex $v$ with three out-neighbors $c(e_1), c(e_2)$, and $c(e_3)$. We increase $k'$ by one.
  \item $|E(A, B)| = 4$. Let $\{e_i \mid i \in [4]\}$ be the arcs from $A$ to $B$ in $D$. We add to $F$ two new vertices $v_1, v_2$ both with five out-neighbors: the set $\{c(e_i) \mid i \in [4]\}$ plus a newly added vertex $v'$. We increase $k'$ by two.
\end{enumerate}

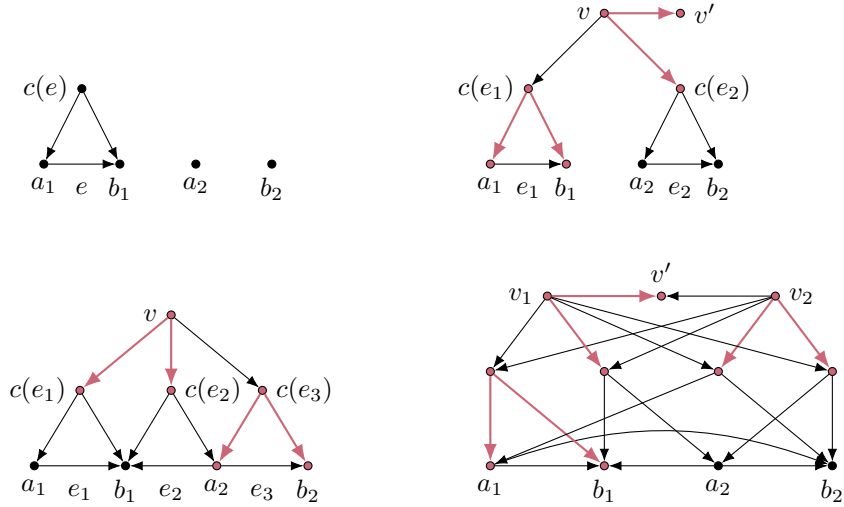
\begin{figure}[h]
\centering
\scalebox{1}{\begin{tikzpicture}
\begin{scope}[xshift=.125cm]
\node[blackvertex, scale=.6, label=-90:$a_1$] (a1) at (0,0) {};
\node[blackvertex, scale=.6, label=-90:$b_1$] (b1) at ($(a1) + (1,0)$) {};
\node[blackvertex, scale=.6, label=-90:$a_2$] (a2) at ($(b1) + (1,0)$) {};
\node[blackvertex, scale=.6, label=-90:$b_2$] (b2) at ($(a2) + (1,0)$) {};
\draw[arrow] (a1) to node[midway, label=-90:$e$] {} (b1);
\node[blackvertex, scale=.6, label=180:{$c(e)$}] (c) at ($(a1) + (0.5, 1)$) {};
\draw[arrow] (c) to (a1);
\draw[arrow] (c) to (b1);
\end{scope}%
\begin{scope}[xshift=6cm]
\node[redvertex, scale=.6, label=-90:$a_1$] (a1) at (0,0) {};
\node[redvertex, scale=.6, label=-90:$b_1$] (b1) at ($(a1) + (1,0)$) {};
\node[blackvertex, scale=.6, label=-90:$a_2$] (a2) at ($(b1) + (1,0)$) {};
\node[blackvertex, scale=.6, label=-90:$b_2$] (b2) at ($(a2) + (1,0)$) {};
\draw[arrow] (a1) to node[midway, label=-90:$e_1$] {} (b1);
\draw[arrow] (a2) to node[midway, label=-90:$e_2$] {} (b2);
\node[redvertex, scale=.6, label=180:{$c(e_1)$}] (c1) at ($(a1) + (0.5, 1)$) {};
\node[redvertex, scale=.6, label=0:{$c(e_2)$}] (c2) at ($(a2) + (0.5, 1)$) {};
\node[redvertex, scale=.6, label=180:$v$] (v) at ($(c1) + (1, 1)$) {};
\node[redvertex, scale=.6, label=0:$v'$] (vprime) at ($(v) + (1, 0)$) {};
\foreach \i/\j in {2/2} {
	\draw[arrow] (c\i) to (a\j);
	\draw[arrow] (c\i) to (b\j);
}
\foreach \i/\j in {1/1} {
	\draw[arrow, thick, goodred] (c\i) to (a\j);
	\draw[arrow, thick, goodred] (c\i) to (b\j);
}
\foreach \i in {1}{
	\draw[arrow] (v) to (c\i);
}
\draw[arrow, thick, goodred] (v) to (c2);
\draw[arrow, thick, goodred] (v) to (vprime);
\end{scope}%
\begin{scope}[yshift=-4cm, xscale=1.2]
\node[blackvertex, scale=.6, label=-90:$a_1$] (a1) at (0,0) {};
\node[blackvertex, scale=.6, label=-90:$b_1$] (b1) at ($(a1) + (1,0)$) {};
\node[redvertex, scale=.6, label=-90:$a_2$] (a2) at ($(b1) + (1,0)$) {};
\node[redvertex, scale=.6, label=-90:$b_2$] (b2) at ($(a2) + (1,0)$) {};
\draw[arrow] (a1) to node[midway, label=-90:$e_1$] {} (b1);
\draw[arrow] (a2) to node[midway, label=-90:$e_2$] {} (b1);
\draw[arrow] (a2) to node[midway, label=-90:$e_3$] {} (b2);
\node[redvertex, scale=.6, label=180:{$c(e_1)$}] (c1) at ($(a1) + (0.5, 1)$) {};
\node[redvertex, scale=.6, label=0:{$c(e_3)$}] (c2) at ($(a2) + (0.5, 1)$) {};
\node[redvertex, scale=.6, label=0:{$c(e_2)$}] (c3) at ($(c1) + (1, 0)$) {};
\node[redvertex, scale=.6, label=180:$v$] (v) at ($(c1) + (1, 1)$) {};
\foreach \i/\j in {1/1} {
	\draw[arrow] (c\i) to (a\j);
	\draw[arrow] (c\i) to (b\j);
}
\foreach \i/\j in {2/2} {
	\draw[arrow, thick, goodred] (c\i) to (a\j);
	\draw[arrow, thick, goodred] (c\i) to (b\j);
}
\draw[arrow] (c3) to (b1);
\draw[arrow] (c3) to (a2);
\foreach \i in {1,3}{
	\draw[arrow, thick, goodred] (v) to (c\i);
}
\draw[arrow] (v) to (c2);
\end{scope}%
\begin{scope}[yshift=-4cm, xshift=6cm]
\node[redvertex, scale=.6, label=-90:$a_1$] (a1) at (0,0) {};
\node[redvertex, scale=.6, label=-90:$b_1$] (b1) at ($(a1) + (1.5,0)$) {};
\node[blackvertex, scale=.6, label=-90:$a_2$] (a2) at ($(b1) + (1.5,0)$) {};
\node[blackvertex, scale=.6, label=-90:$b_2$] (b2) at ($(a2) + (1.5,0)$) {};
\draw[arrow] (a1) to (b1);
\draw[arrow] (a2) to (b1);
\draw[arrow] (a2) to (b2);
\draw[arrow] (a1) to [bend left=20] (b2);

\node[redvertex, scale=.6] (c1) at ($(a1) + (0, 1.25)$) {};
\node[redvertex, scale=.6] (c2) at ($(b1) + (0, 1.25)$) {};
\node[redvertex, scale=.6] (c3) at ($(a2) + (0, 1.25)$) {};
\node[redvertex, scale=.6] (c4) at ($(b2) + (0, 1.25)$) {};

\node[redvertex, scale=.6, label=180:$v_1$] (v1) at ($(c1) + (0.75, 1)$) {};
\node[redvertex, scale=.6, label=0:$v_2$] (v2) at ($(v1) + (3, 0)$) {};
\draw[arrow, thick, goodred] (c1) to (a1);
\draw[arrow, thick, goodred] (c1) to (b1);

\draw[arrow] (c2) to (a2);
\draw[arrow] (c2) to (b1);

\draw[arrow] (c3) to (a1);
\draw[arrow] (c3) to (b2);

\draw[arrow] (c4) to (a2);
\draw[arrow] (c4) to (b2);

\foreach \i in {1,2}{
	\draw[arrow] (v2) to (c\i);
}
\foreach \i in {3,4}{
	\draw[arrow, thick, goodred] (v2) to (c\i);
}
\foreach \i in {1,3,4}{
    \draw[arrow] (v1) to (c\i);
}
\draw[arrow, thick, goodred] (v1) to (c2);
\node[redvertex, scale=.6, label=$v'$] (vprime) at ($(v1)+ (1.5,0)$) {};
\draw[arrow, thick, goodred] (v1) to (vprime);
\draw[arrow] (v2) to (vprime);
\end{scope}

\end{tikzpicture}}%
\caption{The four cases of the construction. In each case, $A = \{a_1, a_2\}$ and $B = \{b_1, b_2\}$. In the last case, when $|E(A, B)| = 4$, some of the labels are omitted. The vertices $v, v_1, v_2$ are added to $F$ in each of the relevant cases. Examples of choices for the $2$-out-stars using all vertices of $F$ are shown in red.}
\label{fig:construction-union-of-2-stars}
\end{figure}

Now, we claim the following.
Let $\ell \geq 1$ be an integer. If $D$ has a subdigraph $D'$ formed by the disjoint union of $\ell$ copies of $2$-out-stars, then $D$ has subdigraph $H'$ formed by at least $\ell$ copies of $2$-out-stars such that every vertex in $F$ is the center of one of the out-stars forming $H'$.
To prove the claim, start with $H' = D'$.
To prove the claim, we proceed through each of the four cases described above showing, in each case, how to add $2$-out-stars with centers in $F$, possibly replacing other $2$-out-stars in $H'$ when necessary.
If $H'$ has a $2$-out-star with center $w \in V_1$ using an arc $e$, then we replace this out-star in $H'$ by the $2$-out-star with center $c_e$ and leaves $\{w, z\}$, where $z$ is the head of $e$.
We now assume that none of the $2$-out-stars forming $H'$ has its center in $V_1$.
For each $A, B$ with $A \in \mathcal{P}_1$ and $B \in \mathcal{P}_2$, we proceed as follows, adopting the same terminology for the arcs and vertices in $F$ that we used to describe the cases above.
\begin{enumerate}[(a)]
\item $|E(A, B)| \leq 1$, there is nothing to be done as no vertices of $F$ are associated with this case.
\item $|E(A, B)| = 2$. In the worst case, both $c(e_1)$ and $c(e_2)$ are centers of stars in $H'$, and $v \not \in V(H')$. Then we replace one of those stars, say, the $2$-out-star with center $c(e_2)$, by the $2$-out-star with center $v$ and leaves $v, c(e_2)$.
In particular, if we started with one $c(e_i)$ with $i \in [2]$ not in $V(H')$, then by including in $H'$ a $2$-out-star with center $v$ and leaves $v'$ and $(e_i)$, we increase by one the number of $2$-out-stars forming $H'$ when compared to $D'$.
\item $|E(A, B)| = 3$. In this case, if $v \not \in V(H')$ then at least one $c(e_i)$ with $i \in [3]$ is not used as the center of a $2$-out-star in $H'$.
Then we add to $H'$ the $2$-out-star with center $v$ and leaves $c(e_i), c(e_j)$ with $j \in [2]$ and $j\neq i$, replacing the $2$-out-star with center $c(e_j)$ in $H'$ if it exists.
In particular, if we started with at most one of $c(e_i)$ in $V(H')$, then this procedure increased the number of $2$-out-stars of $H'$ by one.
\item $|E(A, B)| = 4$. If $v_1, v_2 \not \in V(H')$, then at most two $c(e_i), c(e_j)$ with $i,j \in [4]$ are centers of $2$-out-stars in $H'$.
If both these vertices are used as centers in $H'$, by deleting one of them from $H'$ we ensure that three vertices $x,y,z \in \{c(e_i) \mid i \in [4]\}$ are not used in $H'$.
Now we add to $H'$ the $2$-out-stars with centers $v_1, v_2$ using vertices $v',x , y, z$ as the leaves appropriately.
Notice that the number of stars in $H'$ can only increase with relation to $D'$. The case when exactly one of $v_1, v_2$ is already in $V(H')$ follows similarly.
Similarly to the previous case, if we started with at most one $c(e_i)$ in $V(H')$, then this procedure increased the number of $2$-out-stars in $H'$ by two.
\end{enumerate}
In each of the cases, the number of $2$-out-stars in $H'$ can only increase when compared to $D'$ and the claim follows.
These ``good'' configurations are depicted in red in each of the cases of \autoref{fig:construction-union-of-2-stars}.

Let $k = |V_1| + k'$.
We are ready to prove that $G$ has a perfect and consistent matching if and only if $D$ has a subdigraph $H'$ formed by the disjoint union of $k$ copies of $2$-out-stars.
First, assume that $G$ has a perfect and consistent matching $M$ and start with an empty digraph $D'$.
For each $e \in M$, add to $D'$ the $2$-out-star with center $c(e)$ having the endpoints of $e$ as leaves.
At this point, $D'$ has exactly $|V_1|$ pairwise disjoint copies of $2$-out-stars, and no vertex of $F$ is in $D'$.
Applying the aforementioned claim with respect to $|V_1|$ and $D'$, we obtain a subdigraph $H'$ of $D$ formed by $k$ disjoint copies of $2$-out-stars.
The increase from $|V_1|$ to $k$ copies of $2$-out-stars is due to the fact that, in each of the cases \lipItem{(b)}, \lipItem{(c)}, and \lipItem{(d)} above, by the construction of $D'$, at most one vertex of the form $c(e_i)$ with $e_i \in E(A, B)$ is used as the center of a star in $D'$.
Thus each occurrence of the cases \lipItem{(b)}, \lipItem{(c)}, and \lipItem{(d)} increases the number of $2$-out-stars in $H'$ by one, one, and two, respectively, and the necessity follows.

For the sufficiency, assume now that $D$ has a subdigraph $H'$ formed by the disjoint union of $k$ copies of $2$-out-stars.
Again applying our claim, we ensure that every vertex of $F$ is used as the center of some $2$-out-star in $H'$.
Thus the construction of $D$ ensures that, in each of the cases \lipItem{1.} to \lipItem{4.} described above, at most one star with center $c(e_i)$ with $e_i \in E(A,B)$ is included in $H'$.
Each such star is associated with the choice of an edge of $G$.
We build a perfect and consistent matching $M$ by taking all these edges (notice that exactly $|V_1|$ are taken), and the result follows.
\end{proof}

\medskip
\noindent\textbf{Disjoint union of $2$-out-stars ($2$-in-stars) plus a $k$-homogeneous star.}
For this case, it suffices to repeat the proof of the previous case and, in the end, add one homogeneous star with center $u$ and having as leaves all vertices of the form $c(e)$ added to $D$ in the beginning of the construction, plus all vertices of $F$, and set $k = |V_1| + k'$, where $V_1$ and $k'$ are as defined in the aforementioned proof.
Since $G$ has maximum degree three, this ensures that the only possible choice for the center of start with degree $k$ in $H'$ is $u$.

\medskip
\noindent\textbf{Subdivisions of homogeneous stars.}
For this case, we need a special version of $3$-\textsc{SAT}.
The $3$-\textsc{SAT}-$(2,2)$ problem is as the classical $3$-\textsc{SAT} but with the additional restriction that each literal appears \emph{exactly} twice in the clauses.
In other words, for each variable $x$ the literal $x$ appears twice and the literal $\overline{x}$ appears twice.
This problem was shown to be \NP-complete (and even hard to approximate) by Berman et al.~\cite{Berman2003}.

\begin{proof}[Proof of item \lipItem{5.} of \autoref{theorem:hardness results}.]
We first do the proof for the case when both $H_1$ and $H_2$ are built from subdivisions of out-stars, and then argue that the other cases follow as well.
Let $\mathcal{I}$ be an instance of $3$-\textsc{SAT}-$(2,2)$ with variables $x_1, \ldots, x_n$ and clauses $C_1, \ldots, C_m$.
From $\mathcal{I}$ we build a DAG $D$ as follows.
Let $S$ be the digraph formed by $1$-subdividing all edges of a $2$-out-star.
For each $x_i$ with $i \in [n]$, add to $D$ a copy of the digraph $S$ with center $v_i$.
Name the vertices of the paths leaving $v_i$ as $x^i_1, x^i_2$ and $\overline{x}^i_1, \overline{x}^i_2$, respectively.
Then, add to $D$ the \emph{variable selector} vertex $s$ and an arc from $s$ to the center $v_i$ of every copy of $S$ added to $D$.

Now, add to $D$ a \emph{clause verifier} vertex $c$ and, for each clause $C_i$ with $i \in [m]$, add to $D$ a vertex $y_i$ and an arc from $c$ to $y_i$.
Then, for each $i \in [n]$,
\begin{enumerate}
    \item an arc from $y_j$ to $x^i_1$ and an arc from $y_\ell$ to $x^i_2$, where $C_j$ and $C_\ell$ are the two clauses containing the literal $x_i$; and
    \item an arc from $y_j$ to $\overline{x}^i_1$ and an arc from $y_\ell$ to $\overline{x}^i_2$, where $C_j$ and $C_\ell$ are the two clauses containing the literal $\overline{x_i}$.
\end{enumerate}

See \autoref{fig:construction-for-two-long-stars} for an example of this construction.

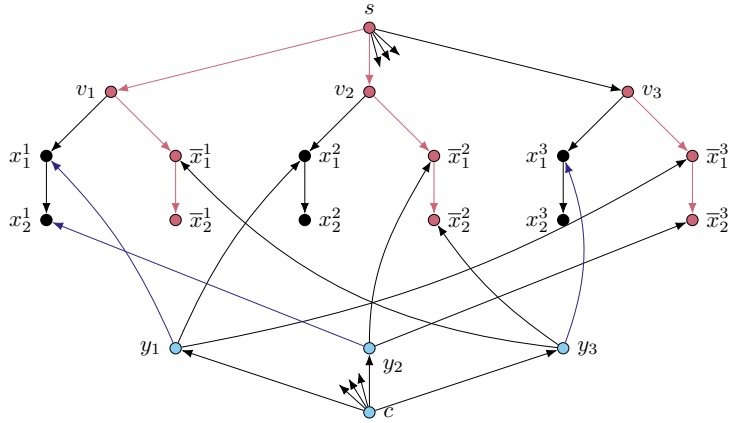
\begin{figure}[h]
\centering
\scalebox{.85}{\begin{tikzpicture}

\node[redvertex, label=90:{$s$}] (s) at (4,2) { };

\node (f1) at ($(s) + (1/5, -.75)$) { };
\node (f2) at ($(s) + (2/5, -.65)$) { };
\node (f3) at ($(s) + (3/5, -.55)$) { };

\foreach \i in {1,...,3} {
    \draw[arrow] (s) -- (f\i);
}

\begin{scope}[xscale=1, yshift=1cm]
\node[redvertex, label=180:$v_1$] (v1) at (0,0) {};
\node[blackvertex, label=180:{$x^1_1$}] (x11) at ($(v1) + (-1,-1)$) {};
\node[blackvertex, label=180:{$x^1_2$}] (x12) at ($(x11) + (0,-1)$) {};
\node[redvertex, label=0:{$\overline{x}^1_1$}] (nx11) at ($(x11) + (2,0)$) {};
\node[redvertex, label=0:{$\overline{x}^1_2$}] (nx12) at ($(x12) + (2,0)$) {};
\draw[arrow] (v1) to (x11);
\draw[arrow, goodred] (v1) to (nx11);
\draw[arrow] (x11) to (x12);
\draw[arrow, goodred] (nx11) to (nx12);
\draw[arrow, goodred] (s) to (v1);
\end{scope}

\begin{scope}[xshift=4cm,xscale=1,yshift=1cm]
\node[redvertex, label=180:$v_2$] (v2) at (0,0) {};
\node[blackvertex, label=0:{$x^2_1$}] (x21) at ($(v2) + (-1,-1)$) {};
\node[blackvertex, label=0:{$x^2_2$}] (x22) at ($(x21) + (0,-1)$) {};
\node[redvertex, label=0:{$\overline{x}^2_1$}] (nx21) at ($(x21) + (2,0)$) {};
\node[redvertex, label=0:{$\overline{x}^2_2$}] (nx22) at ($(x22) + (2,0)$) {};
\draw[arrow] (v2) to (x21);
\draw[arrow, goodred] (v2) to (nx21);
\draw[arrow] (x21) to (x22);
\draw[arrow, goodred] (nx21) to (nx22);
\draw[arrow, goodred] (s) to (v2);
\end{scope}

\begin{scope}[xshift=8cm,xscale=1,yshift=1cm]
\node[redvertex, label=0:$v_3$] (v3) at (0,0) {};
\node[blackvertex, label=180:{$x^3_1$}] (x31) at ($(v3) + (-1,-1)$) {};
\node[blackvertex, label=180:{$x^3_2$}] (x32) at ($(x31) + (0,-1)$) {};
\node[redvertex, label=0:{$\overline{x}^3_1$}] (nx31) at ($(x31) + (2,0)$) {};
\node[redvertex, label=0:{$\overline{x}^3_2$}] (nx32) at ($(x32) + (2,0)$) {};
\draw[arrow] (v3) to (x31);
\draw[arrow, goodred] (v3) to (nx31);
\draw[arrow] (x31) to (x32);
\draw[arrow, goodred] (nx31) to (nx32);
\draw[arrow] (s) to (v3);
\end{scope}



\node[bluevertex, label=0:$c$] (c) at (4, -4) {};
\node[bluevertex, label=180:$y_1$] (y1) at ($(c) + (-3,1)$) {};
\node[bluevertex, label=-30:$y_2$] (y2) at ($(c) + (0,1)$) {};
\node[bluevertex, label=0:$y_3$] (y3) at ($(c) + (3,1)$) {};


\draw[arrow] (c) to (y1);
\draw[arrow] (c) to (y2);
\draw[arrow] (c) to (y3);
\draw[arrow, goodblue] (y1) to [bend right = 10] (x11);
\draw[arrow] (y1) to [bend left = 10] (x21);
\draw[arrow] (y1) to [bend right = 10] (nx31);
\draw[arrow, goodblue] (y2) to (x12);
\draw[arrow] (y2) to [out=90, in = -120] (nx21);
\draw[arrow] (y2) to (nx32);
\draw[arrow] (y3) to [bend left = 20] (nx11);
\draw[arrow] (y3) to [bend left = 10] (nx22);
\draw[arrow, goodblue] (y3) to [bend right = 20] (x31);

\node (f1) at ($(c) + (-1/5, .75)$) { };
\node (f2) at ($(c) + (-2/5, .65)$) { };
\node (f3) at ($(c) + (-3/5, .55)$) { };

\foreach \i in {1,...,3} {
    \draw[arrow] (c) -- (f\i);
}
\end{tikzpicture}}%
\caption{Example of the construction from the proof of item \lipItem{5.} of \autoref{theorem:hardness results}. In the example, the vertex $y_1$ is associated with the clause $(x_1, x_2, \overline{x_3})$, vertex $y_2$ with the clause $(x_1, \overline{x_2}, \overline{x_3})$, and $y_3$ with the clause $(\overline{x_1}, \overline{x_2}, x_3)$. Possible choices for the out-stars with root $s$ and $c$ are in red and blue, respectively.}
\label{fig:construction-for-two-long-stars}
\end{figure}

Let $S_1$ be the digraph formed by $2$-subdividing every arc of an $n$-out-star, $S_2$ be the digraph formed by $1$-subdividing every arc of an $m$-out-star, and $H$ be formed by the disjoint union of $S_1$ and $S_2$.
We claim that $\mathcal{I}$ is a positive instance if and only if $D$ contains a subdigraph $H' \cong H$.

First, assume that $\mathcal{I}$ is positive.
For the $n$-out-star, it suffices to take, for each $i \in [n]$, the path from $s$ to $v_i$ and then to $x^i_2$ if the solution to $\mathcal{I}$ chooses $x_i =$ {\sf false}, or the path from $s$ to $v_i$ and then to $\overline{x}^i_2$ otherwise.
For the $m$-out-star, we proceed as follows.
Start by taking $c$ for the center and then the path from $c$ to each vertex $y_i$ with $i \in [m]$.
For each clause $C_i$, choose a literal $\ell_i$ of the clause evaluating to {\sf true}.
This implies that $x^j_1, x^j_2$ are not in the $n$-out-star if $\ell_i$ is an occurrence of $x_j$, and that $\overline{x}^j_1, \overline{x}^j_2$ are not in the $n$-out-star if $\ell_i$ is an occurrence of $\overline{x_j}$.
Thus, for each $i \in [m]$, $y_i$ has an out-neighbor not used by the $n$-out-star.
For each $y_i$ we take one such out-neighbor $w$ and add it, together with the arc from $y_i$ to $w$, to the $m$-out-star and the necessity follows.

Assume now that there is $H' \cong H$ with $H' \subseteq D$.
There is only one vertex of out-degree $n$ and one vertex of out-degree $m$ in $D$: $s$ and $c$, respectively.
Thus the centers of the stars $S'_1$ and $S'_2$ forming $H'$ are $s$ and $c$, respectively.
By construction, for each $i \in [n]$ the out-star $S'_1$ contains a path from $s$ to $v_i$ then to exactly one of the two paths leaving $v_i$.
Thus we set $x_i =$ {\sf false} if the path ending in $x^i_2$ is in $S'_1$ and $x_i =$ {\sf true} otherwise.
Now for each $v_i$ at most one of the paths leaving $v_i$ contains a vertex that is a leaf of $S'_2$.
Since $m$ paths are leaving $c$ in $S'_2$, each containing the vertex $y_i$ associated with the clause $C_i$, this implies that every  clause $C_i$ contains a literal associated with a vertex of a path leaving $s$ not taken by $S'_1$.
In other words, each clause contains a literal that evaluates to {\sf true} and the result follows.

Clearly $D$ is a DAG since vertices $s$ and $c$ are sources and all the maximal paths reaching vertices which are not sources start in either $s$ and $c$.

To see that the result also holds for an in-star and an out-star, it suffices to reverse the directions of the paths leaving $s$ or the directions of the paths leaving $c$.
If the directions of all such paths are reversed, the case of two in-stars is covered.
\end{proof}

\medskip
\noindent\textbf{Proof for caterpillars.}
We can obtain a proof for item \lipItem{6.} of \autoref{theorem:hardness results} by adapting the proof for item \lipItem{3.}
The reduction is from the same matching problem in bipartite graphs (see~\autoref{proposition:np-complete-matching-bipartite} for the precise statement of this problem).
For clarity, we repeat the first steps of the construction.
Given a bipartite graph $G$ with partition $\{V_1, V_2\}$ of $V(G)$, we first orient every edge from $V_1$ to $V_2$ to build a DAG $D$.
Then for each $e \in E(G)$ with endpoints $u$ and $v$, we add to $D$ a vertex $c(e)$ with out-neighbors $u$ and $v$.
The proof of item \lipItem{3.} of \autoref{theorem:hardness results} then adds another set of $2$-out-stars to $D$ which are used to ensure that, if the stars are found, then a matching is found in $G$ with the desired properties.
Namely, we show that $G$ contains a perfect matching that is consistent with the given partitions $\mathcal{P}_1$ and $\mathcal{P}_2$ of $V_1$ and $V_2$, respectively, if and only if $|V_1| + k'$ pairwise disjoint $2$-out-stars are found in the constructed digraph $D$, for some choice of $k'$ depending on the configurations of the edges of $G$ between parts of $\mathcal{P}_1$ and $\mathcal{P}_2$.

To adapt the proof for the caterpillar case, we focus on the $2$-out-stars with centers $c(e)$, for every $e \in E(G)$.
Let $m = |E(G)|$.
Notice that we can assume that $m > |V_1|$ otherwise the matching problem is trivial.
We also remind the reader that we can assume that $G$ has maximum degree three.
The construction discussed in the previous paragraph added to $D$ exactly $m$ vertices of the form $c(e)$ with $e\in E(G)$.
Since in $G$ we are searching for a perfect matching, exactly $|V_1|$ of those stars need to be selected.
The goal is to define the target digraph $H$ as a caterpillar ending on an out-star with large degree, and use the large star of the caterpillar to select which vertices of the form $c(e)$ are used as centers of $2$-out-stars in the caterpillar.
The remainder of the construction, which guarantees the consistency of the matching, follows exactly as in the proof of \lipItem{3.} of \autoref{theorem:hardness results}.

Formally, let $C = \{c(e) \in V(D) \mid e \in E(G)\}$ and consider an arbitrary order $c(e_1), \ldots, c(e_m)$ of $C$.
For each $i \in [m]$, we add an arc from $c(e_i)$ to $c(e_j)$ for all $j > i$.
In other words, now $C$ induces a transitive tournament in $D$.
Now, we subdivide once every arc of $D[C]$ and let $U$ be the set of all vertices added to $D$ by the $1$-subdivisions.
Then, we add another vertex $r$ to $D$ and an arc from each vertex in $C$ to $r$, and an arc from $r$ to each vertex in $U$.
We define $H$ to be the caterpillar on a directed path with $\ell = 2|V_1|+1$ vertices $\{x_1, \ldots, x_\ell, r'\}$, enumerated as they appear in the spine, where the last vertex $r'$ is the center of an out-star with $\binom{m}{2}-|V_1|$ leaves, and every $x_i$ with odd $i \in [\ell]$ a branching vertex having exactly two leaves as out-neighbors. 
From this choice of $H$, the remainder of the proof follows as in the proof of item \lipItem{3.} of \autoref{theorem:hardness results}.
The key observation is that there is only one choice in $D$ for the center $r'$ of the largest star of the caterpillar: the vertex $r$.
This implies that $\binom{m}{2}-|V_1|$ vertices of $U$ are taken as leaves of $r$ and thus exactly $|V_1|$ can be used to build the spine path of the caterpillar.
Informally, the vertices of $U$ used by the spine of the caterpillar are used to identify the $|V_1|$ vertices of the form $c(e)$ that are used to select edges of $G$.
Notice that with this construction, although $D$ is no longer a DAG, its directed treewidth is one since all cycles use the vertex $r$.

\section{Conclusions and further research}
\label{sec:conclusions}

Our article initiates a systematic study of the following fascinating question. Let  ${\cal A}$ be a collection of {\sl allowed} digraphs, let $H$ be a digraph that can be built as the union of at most $k$ digraphs in ${\cal A}$, and let $D$ be a digraph with directed treewidth at most $w$. For which collections ${\cal A}$ the problem of deciding whether $D$ contains a subdigraph isomorphic to $H$ can be solved in \XP time parameterized by $k$ and $w$? Let us call a collection ${\cal A}$ \textit{easy} (resp. \textit{hard}) if the answer to the previous question is positive (resp. negative). Note that a negative answer will be typically conditioned to some complexity hypothesis, such as ${\sf P} \neq \NP$ or another ones. 

Our \XP algorithm (cf. \autoref{thm:our-XP-algo}) shows that if ${\cal A}$ contains the directed paths and the stars (oriented away from or towards the center), then it is easy. On the other hand, our hardness results (cf. \autoref{theorem:hardness results}) provide a number of examples of hard collections ${\cal A}$ containing digraphs that are ``close'' to paths and stars (cf. \autoref{fig:hardness-digraphs-examples}), namely by showing that the problem is \NP-complete for fixed small values of $k$ and $w$. 

The main message of our article is that a good collection ${\cal A}$ is unlikely to contain many digraphs that differ substantially from (combinations of) paths and stars. Nevertheless, we fall short of providing a full characterization of easy and hard collections. This seems to be a very challenging question, in particular because of the reasons that we proceed to discuss. 

For a positive integer $p$, we say that a collection of digraphs ${\cal A}$ is $p$-\textit{easy} (resp. $p$-\textit{hard}) if the answer to the above question is positive (resp. negative) for every $k \leq p$ (resp. for some $k \leq p$ ). Hence, a collection ${\cal A}$ is easy if and only if it is $p$-easy for every $p \geq 1$. Let $H_{\ell}$ be the digraph made of $\ell$ pairwise disjoint arcs. Then deciding whether a host digraph $D$ contains $H_{\ell}$ as a subdigraph can be done in polynomial time for any $\ell$, by just running a classical maximum matching algorithm~\cite{Graph.Theory} in the underlying undirected graph of $D$. Thus, the collection ${\cal A}= \{H_{\ell} \mid \ell \in \mathds{N}\}$ is $1$-easy, but its elements {\sl cannot} be formed as the union of few paths or stars, and therefore it escapes the setting of \autoref{thm:our-XP-algo}. Note, however, that this collection ${\cal A}$ is 2-hard, because an antidirected path (cf. \autoref{fig:hardness-digraphs-examples}\textbf{(c)}) can be obtained as the union of two elements in ${\cal A}$, and we have proved that the corresponding problem is \NP-complete in DAGs (item \lipItem{1.} of \autoref{theorem:hardness results}). 

Exploiting the potential of a maximum matching algorithm, we can construct other $1$-easy collections ${\cal A}'$ that escape the setting of \autoref{thm:our-XP-algo}. Indeed, let $H_{\ell}'$ be the digraph obtained from an out-star with $\ell$ leaves by subdividing every arc once (and keeping the same orientation in both new arcs), and let ${\cal A}'= \{H_{\ell}' \mid \ell \in \mathds{N}\}$. Note that the elements in ${\cal A}'$ cannot be obtained as the union of few paths or stars, and therefore it also escapes the setting of \autoref{thm:our-XP-algo}. We claim that ${\cal A}'$ is $1$-easy. Indeed, we proceed to present a polynomial-time algorithm  to decide whether a digraph $D$ contains a subdigraph isomorphic to $H_{\ell}'$, for any $\ell \geq 0$ (without even needing to use a bound on the directed treewidth of $D$). For a vertex $v \in V(D)$, let $N_v$ be the out-neighborhood of $v$, let $A_v$ be the set of arcs of $D$ having a vertex in $N_v$ as the tail (note that the head may also be in $N_v$), and let $D_v$ be the subdigraph of $D$ defined by the union of the arcs in $A_v$. It is easy to see that $D$ contains a subdigraph isomorphic to $H_{\ell}'$ if and only if there is a vertex $v \in V(D)$ such that the underlying graph of $D_v$ contains a matching of size at least $\ell$, which can be clearly decided in polynomial time. A symmetric argument applies to the collection containing subdivisions of in-stars. 

How far can we go with this matching-based approach? A next natural step could be, for instance, try to generalize the above algorithm to find a digraph $H$ defined by the disjoint union of {\sl two} subdivided stars $H_{\ell_1}'$ and $H_{\ell_2}'$, as defined in the previous paragraph, for two  positive integers $\ell_1$ and $\ell_2$. Note that this digraph $H$ is really close to the one depicted in \autoref{fig:hardness-digraphs-examples}\textbf{(e)} (hence, a hard collection), also consisting of two subdivided out-stars, but one subdivided once and the other one subdivided {\sl twice}, while in $H$ both stars are subdivided {\sl once}. It turns out that finding $H$ has a strong connection with a notoriously open problem called {\sc Exact Matching}. In this problem, we are given an edge colored undirected graph $G$, with colors red and blue, and an integer $\ell$, the goal is to decide whether or not the graph contains a perfect matching with exactly $\ell$ red edges. {\sc Exact Matching} is known to be solvable in randomized polynomial time~\cite{MulmuleyVV87} (hence, in the class {\sf RP} and unlikely to be \NP-hard), but it is still unknown whether it is in \P (that is, solvable in {\sl deterministic} polynomial time). In particular, {\sc Exact Matching} is known to be in {\sf P} only for very restricted graph classes, such as cliques, complete bipartite graphs, graphs of bounded independence number, or planar graphs; see~\cite{Maalouly24,MaaloulyHW24,Maalouly23} for the recent work of El Maalouly on this problem.

Back to our problem, consider a digraph $D$ with only two vertices $v_1$ and $v_2$ of ``large'' out-degree, and suppose that $|V(D)|$ is even. Let $\ell_1$ and $\ell_2$ be positive integers such that $\ell_1 + \ell_2 = (|V(D)| -2) / 2$, and suppose that both $\ell_1$ and $\ell_2$ are ``large'', in the sense that there is no vertex in $V(D) \setminus \{v_1,v_2\}$ with out-degree at least $\min\{\ell_1,\ell_2\}$. We construct from $D$ an undirected graph $G$, with edges colored red and blue, as follows. We define $G$ to be the underlying graph of $D \setminus \{v_1,v_2\}$, and we color an edge of $G$ red (resp. blue) if at least one of its endpoints is an out-neighbor of $v_1$ (resp. $v_2$) in $D$; we can add parallel edges if we want to avoid the same edge to be colored red and blue. It can be easily verified that $D$ contains the disjoint union of $H_{\ell_1}'$ and $H_{\ell_2}'$ as a subdigraph if and only if $G$ admits a perfect matching using exactly $\ell_1$ red edges. We leave this elusive case as an open problem.

In order to unveil other easy collections, it would be desirable to integrate the matching ``technology'' into the techniques that we used to obtain our \XP algorithm (cf. \autoref{sec:algorithm}), namely the approach based on solving an integer system of linear inequations to deal with the stars, and the approach based on dynamic programming on an arboreal decomposition to deal with the paths. Unfortunately, this integration seems quite challenging, since matching algorithms have a {\sl global} nature, but both approaches in our \XP algorithm have a {\sl local} nature (in the case of stars it is clear, and in the case of paths, the local nature is given by the few crossings that the paths can have in the dynamic programming algorithm, as formalized in \autoref{proposition:paths-breakability}). Thus, we think that for finding other easy collections ${\cal A}$, substantially new algorithmic ideas are needed. 

Finally, it would be very interesting to explore the existence of a generalization of the logical meta-theorem of de Oliveira Oliveira~\cite{Oliveira16} to subgraphs $H$ that can be formed by the union of $k$ directed paths {\sl or stars}. 

\bibliography{main}

\begin{thebibliography}{10}

\bibitem{doi:10.1137/0608024}
Stefan Arnborg, Derek~G. Corneil, and Andrzej. Proskurowski.
\newblock Complexity of finding embeddings in a $k$-tree.
\newblock {\em SIAM Journal on Algebraic Discrete Methods}, 8(2):277--284,
  1987.
\newblock \href {https://doi.org/10.1137/0608024} {\path{doi:10.1137/0608024}}.

\bibitem{Classes.Directed.Graphs}
J.~Bang-Jensen and G.~Gregory.
\newblock {\em Classes of Directed Graphs}.
\newblock Springer Monographs in Mathematics, 2018.
\newblock \href {https://doi.org/10.1007/978-3-319-71840-8}
  {\path{doi:10.1007/978-3-319-71840-8}}.

\bibitem{BANGJENSEN201768}
Jørgen Bang-Jensen, Stéphane Bessy, Bill Jackson, and Matthias Kriesell.
\newblock Antistrong digraphs.
\newblock {\em Journal of Combinatorial Theory, Series B}, 122:68--90, 2017.
\newblock \href {https://doi.org/10.1016/j.jctb.2016.05.004}
  {\path{doi:10.1016/j.jctb.2016.05.004}}.

\bibitem{Berman2003}
Piotr Berman, Marek Karpinski, and Alex~D. Scott.
\newblock Approximation hardness and satisfiability of bounded occurrence
  instances of {SAT}.
\newblock {\em Electronic Colloquium on Computational Complexity (ECCC)},
  {TR03-022}, 2003.
\newblock URL:
  \url{https://eccc.weizmann.ac.il/eccc-reports/2003/TR03-022/index.html}.

\bibitem{Bodlaender96}
Hans~L. Bodlaender.
\newblock A linear-time algorithm for finding tree-decompositions of small
  treewidth.
\newblock {\em {SIAM} Journal on Computing}, 25(6):1305--1317, 1996.
\newblock \href {https://doi.org/10.1137/S0097539793251219}
  {\path{doi:10.1137/S0097539793251219}}.

\bibitem{Graph.Theory}
A.~Bondy and M.~Ram Murty.
\newblock {\em Graph Theory}.
\newblock Springer-Verlag London, 2008.

\bibitem{doi:10.1137/21M1452664}
Victor Campos, Raul Lopes, Ana~Karolinna Maia, and Ignasi Sau.
\newblock Adapting the directed grid theorem into an fpt algorithm.
\newblock {\em SIAM Journal on Discrete Mathematics}, 36(3):1887--1917, 2022.
\newblock \href {https://doi.org/10.1137/21M1452664}
  {\path{doi:10.1137/21M1452664}}.

\bibitem{COURCELLE199012}
Bruno Courcelle.
\newblock {The monadic second-order logic of graphs. I. Recognizable sets of
  finite graphs}.
\newblock {\em Information and Computation}, 85(1):12--75, 1990.
\newblock \href {https://doi.org/10.1016/0890-5401(90)90043-H}
  {\path{doi:10.1016/0890-5401(90)90043-H}}.

\bibitem{CyganFKLMPPS15}
Marek Cygan, Fedor~V. Fomin, Lukasz Kowalik, Daniel Lokshtanov, D{\'{a}}niel
  Marx, Marcin Pilipczuk, Michal Pilipczuk, and Saket Saurabh.
\newblock {\em Parameterized Algorithms}.
\newblock Springer, 2015.

\bibitem{Oliveira16}
Mateus de~Oliveira~Oliveira.
\newblock An algorithmic metatheorem for directed treewidth.
\newblock {\em Discrete Applied Mathematics}, 204:49--76, 2016.
\newblock \href {https://doi.org/10.1016/j.dam.2015.10.020}
  {\path{doi:10.1016/j.dam.2015.10.020}}.

\bibitem{DF13}
R.~G. Downey and M.~R. Fellows.
\newblock {\em Fundamentals of Parameterized Complexity}.
\newblock Texts in Computer Science. Springer, 2013.

\bibitem{GanianHK0ORS16}
Robert Ganian, Petr Hlinen{\'{y}}, Joachim Kneis, Daniel Meister, Jan
  Obdrz{\'{a}}lek, Peter Rossmanith, and Somnath Sikdar.
\newblock Are there any good digraph width measures?
\newblock {\em Journal of Combinatorial Theory, Series {B}}, 116:250--286,
  2016.
\newblock \href {https://doi.org/10.1016/J.JCTB.2015.09.001}
  {\path{doi:10.1016/J.JCTB.2015.09.001}}.

\bibitem{GiannopoulouKKK20}
Archontia~C. Giannopoulou, Ken{-}ichi Kawarabayashi, Stephan Kreutzer, and
  O{-}joung Kwon.
\newblock {The Directed Flat Wall Theorem}.
\newblock In {\em Proc. of the 13st {ACM-SIAM} Symposium on Discrete Algorithms
  (SODA)}, pages 239--258, 2020.
\newblock \href {https://doi.org/10.1137/1.9781611975994.15}
  {\path{doi:10.1137/1.9781611975994.15}}.

\bibitem{HatzelKMM24}
Meike Hatzel, Stephan Kreutzer, Marcelo~Garlet Milani, and Irene Muzi.
\newblock Cycles of well-linked sets and an elementary bound for the directed
  grid theorem.
\newblock In {\em Proc. of the 65th {IEEE} Annual Symposium on Foundations of
  Computer Science (FOCS)}, pages 1--20, 2024.
\newblock \href {https://doi.org/10.1109/FOCS61266.2024.00011}
  {\path{doi:10.1109/FOCS61266.2024.00011}}.

\bibitem{Johnson.Robertson.Seymour.Thomas.01}
T.~Johnson, N.~Robertson, P.~D. Seymour, and R.~Thomas.
\newblock Directed tree-width.
\newblock {\em Journal of Combinatorial Theory, Series B}, 82(01):138--154,
  2001.
\newblock \href {https://doi.org/10.1006/jctb.2000.2031}
  {\path{doi:10.1006/jctb.2000.2031}}.

\bibitem{Lenstra83}
Hendrik W.~Lenstra Jr.
\newblock Integer programming with a fixed number of variables.
\newblock {\em Mathematics of Operations Research}, 8(4):538--548, 1983.
\newblock \href {https://doi.org/10.1287/MOOR.8.4.538}
  {\path{doi:10.1287/MOOR.8.4.538}}.

\bibitem{KawarabayashiK15}
Ken{-}ichi Kawarabayashi and Stephan Kreutzer.
\newblock {Towards the Graph Minor Theorems for Directed Graphs}.
\newblock In {\em Proc. of the 42nd International Colloquium on Automata,
  Languages, and Programming (ICALP)}, volume 9135 of {\em LNCS}, pages 3--10,
  2015.
\newblock \href {https://doi.org/10.1007/978-3-662-47666-6\_1}
  {\path{doi:10.1007/978-3-662-47666-6\_1}}.

\bibitem{KINTALI201783}
Shiva Kintali.
\newblock Directed width parameters and circumference of digraphs.
\newblock {\em Theoretical Computer Science}, 659:83--87, 2017.
\newblock \href {https://doi.org/10.1016/j.tcs.2016.10.010}
  {\path{doi:10.1016/j.tcs.2016.10.010}}.

\bibitem{KorhonenL23}
Tuukka Korhonen and Daniel Lokshtanov.
\newblock An improved parameterized algorithm for treewidth.
\newblock In {\em Proc. of the 55th Annual {ACM} Symposium on Theory of
  Computing (STOC)}, pages 528--541, 2023.
\newblock \href {https://doi.org/10.1145/3564246.3585245}
  {\path{doi:10.1145/3564246.3585245}}.

\bibitem{LampisKM11}
Michael Lampis, Georgia Kaouri, and Valia Mitsou.
\newblock On the algorithmic effectiveness of digraph decompositions and
  complexity measures.
\newblock {\em Discrete Optimization}, 8(1):129--138, 2011.
\newblock \href {https://doi.org/10.1016/J.DISOPT.2010.03.010}
  {\path{doi:10.1016/J.DISOPT.2010.03.010}}.

\bibitem{LopesS22}
Raul Lopes and Ignasi Sau.
\newblock A relaxation of the directed disjoint paths problem: {A} global
  congestion metric helps.
\newblock {\em Theoretical Computer Science}, 898:75--91, 2022.
\newblock \href {https://doi.org/10.1016/J.TCS.2021.10.023}
  {\path{doi:10.1016/J.TCS.2021.10.023}}.

\bibitem{Maalouly23}
Nicolas~El Maalouly.
\newblock {Exact Matching: Algorithms and Related Problems}.
\newblock In {\em Proc. of the 40th International Symposium on Theoretical
  Aspects of Computer Science (STACS)}, volume 254 of {\em LIPIcs}, pages
  29:1--29:17, 2023.
\newblock \href {https://doi.org/10.4230/LIPICS.STACS.2023.29}
  {\path{doi:10.4230/LIPICS.STACS.2023.29}}.

\bibitem{Maalouly24}
Nicolas~El Maalouly.
\newblock {\em {Towards a Deterministic Polynomial Time Algorithm for the Exact
  Matching Problem}}.
\newblock PhD thesis, {ETH} Zurich, Z{\"{u}}rich, Switzerland, 2024.
\newblock \href {https://doi.org/10.3929/ETHZ-B-000675444}
  {\path{doi:10.3929/ETHZ-B-000675444}}.

\bibitem{MaaloulyHW24}
Nicolas~El Maalouly, Sebastian Haslebacher, and Lasse Wulf.
\newblock On the exact matching problem in dense graphs.
\newblock In {\em Proc. of the 41st International Symposium on Theoretical
  Aspects of Computer Science (STACS)}, volume 289 of {\em LIPIcs}, pages
  33:1--33:17, 2024.
\newblock \href {https://doi.org/10.4230/LIPICS.STACS.2024.33}
  {\path{doi:10.4230/LIPICS.STACS.2024.33}}.

\bibitem{MarxP14}
D{\'{a}}niel Marx and Michal Pilipczuk.
\newblock {Everything you always wanted to know about the parameterized
  complexity of Subgraph Isomorphism (but were afraid to ask)}.
\newblock In {\em Proc. of the 31st International Symposium on Theoretical
  Aspects of Computer Science (STACS)}, volume~25 of {\em LIPIcs}, pages
  542--553, 2014.
\newblock \href {https://doi.org/10.4230/LIPICS.STACS.2014.542}
  {\path{doi:10.4230/LIPICS.STACS.2014.542}}.

\bibitem{MulmuleyVV87}
Ketan Mulmuley, Umesh~V. Vazirani, and Vijay~V. Vazirani.
\newblock Matching is as easy as matrix inversion.
\newblock {\em Combinatorica}, 7(1):105--113, 1987.
\newblock \href {https://doi.org/10.1007/BF02579206}
  {\path{doi:10.1007/BF02579206}}.

\bibitem{PLAISTED198065}
David~A. Plaisted and Shmuel Zaks.
\newblock {An NP-complete matching problem}.
\newblock {\em Discrete Applied Mathematics}, 2(1):65--72, 1980.
\newblock \href {https://doi.org/10.1016/0166-218X(80)90055-4}
  {\path{doi:10.1016/0166-218X(80)90055-4}}.

\bibitem{Reed99}
Bruce~A. Reed.
\newblock Introducing directed tree width.
\newblock {\em Electronic Notes in Discrete Mathematics}, 3:222--229, 1999.
\newblock \href {https://doi.org/10.1016/S1571-0653(05)80061-7}
  {\path{doi:10.1016/S1571-0653(05)80061-7}}.

\bibitem{RobertsonS90a}
Neil Robertson and Paul~D. Seymour.
\newblock Graph minors. {IV.} tree-width and well-quasi-ordering.
\newblock {\em Journal of Combinatorial Theory, Series {B}}, 48(2):227--254,
  1990.
\newblock \href {https://doi.org/10.1016/0095-8956(90)90120-O}
  {\path{doi:10.1016/0095-8956(90)90120-O}}.

\bibitem{Slivkins.03}
Aleksandrs Slivkins.
\newblock Parameterized tractability of edge-disjoint paths on directed acyclic
  graphs.
\newblock {\em SIAM Journal on Discrete Mathematics}, 24(1):146--157, 2010.
\newblock \href {https://doi.org/10.1137/070697781}
  {\path{doi:10.1137/070697781}}.

\end{thebibliography}
\end{document}